\documentclass[12pt]{elsarticle}
\usepackage[margin=2.5cm]{geometry}% by courtesy of Mico
\usepackage{lipsum}

%%this us to remove the footnote "Preprint submitted to Elsevier"
\makeatletter
\def\ps@pprintTitle{%
   \let\@oddhead\@empty
   \let\@evenhead\@empty
   \def\@oddfoot{\reset@font\hfil\thepage\hfil}
   \let\@evenfoot\@oddfoot
}
\makeatother
%%this us to remove the footnote "Preprint submitted to Elsevier"

\usepackage{amsfonts,color,morefloats,pslatex}
\usepackage{amssymb,amsthm, amsmath,latexsym}
\usepackage{array}

\newtheorem{remark}{Remark}
\newtheorem{theorem}{Theorem}
\newtheorem{lemma}[theorem]{Lemma}

\newtheorem{definition}[theorem]{Definition}

\newtheorem{conj}[theorem]{Conjecture}

\newcommand{\rank}{{\mathrm{rank}}}

\newcommand{\tr}{{\mathrm{Tr}}}

\newcommand{\gf}{{\mathbb{F}}}

\newcommand{\support}{{\mathrm{suppt}}}

\newcommand{\wt}{{\mathtt{wt}}}

\newcommand{\C}{{\mathcal{C}}}

\newcommand{\cC}{{\mathcal{C}}}

\newcommand{\bc}{{\mathbf{c}}}
\newcommand{\bg}{{\mathbf{g}}}

\usepackage{lineno}
\usepackage{blindtext}
\begin{document}
%\linenumbers
%\tableofcontents

\begin{frontmatter}

%% Title, authors and addresses

%% use the tnoteref command within \title for footnotes;
%% use the tnotetext command for the associated footnote;
%% use the fnref command within \author or \address for footnotes;
%% use the fntext command for the associated footnote;
%% use the corref command within \author for corresponding author footnotes;
%% use the cortext command for the associated footnote;
%% use the ead command for the email address,
%% and the form \ead[url] for the home page:
%%
%% \title{Title\tnoteref{label1}}
%% \tnotetext[label1]{}
%% \author{Name\corref{cor1}\fnref{label2}}
%% \ead{email address}
%% \ead[url]{home page}
%% \fntext[label2]{}
%% \cortext[cor1]{}
%% \address{Address\fnref{label3}}
%% \fntext[label3]{}

\title{New infinite families of near MDS codes holding $t$-designs}

\tnotetext[fn1]{
This research was supported in part by the National Natural
Science Foundation of China under Grant 12271059, in part by
the Young Talent Fund of University Association for Science and
Technology in Shaanxi, China, under Grant 20200505, and in part
by the Fundamental Research Funds for the Central Universities,
CHD, under Grant 300102122202.
}

\author{Ziling Heng}
\ead{zilingheng@chd.edu.cn}
\author{Xinran Wang}
\ead{wangxr203@163.com}

\cortext[cor]{Corresponding author}
\address{School of Science, Chang'an University, Xi'an 710064, China}

%$^{\ast}$ \thanks{*Corresponding author}

%% use optional labels to link authors explicitly to addresses:
%% \author[label1,label2]{<author name>}
%% \address[label1]{<address>}
%% \address[label2]{<address>}
%\author{Cunsheng Ding}
%\ead{cding@ust.hk}

%\cortext[lcj]{Corresponding author}
%\address{Department of Computer Science and Engineering,
%The Hong Kong University of Science and Technology,
%Clear Water Bay, Kowloon, Hong Kong, China}

%\tableofcontents

\begin{abstract}
In ``Infinite families of near MDS codes holding $t$-designs, IEEE Trans. Inform.
Theory, 2020, 66(9), pp. 5419-5428'', Ding and Tang made a breakthrough in constructing the first two infinite families of NMDS codes holding
$2$-designs or $3$-designs. Up to now, there are only a few known infinite families of NMDS codes holding $t$-designs
in the literature. The objective of this paper is to construct new infinite families of NMDS codes holding $t$-designs.
We determine the weight enumerators of the NMDS codes and prove that the NMDS codes hold $2$-designs or $3$-designs.
Compared with known $t$-designs from NMDS codes, ours have different parameters.
Besides, several infinite families of optimal locally recoverable codes are also derived via the NMDS codes.
\end{abstract}

\begin{keyword}
Linear code \sep weight enumerator \sep $t$-design
%% PACS codes here, in the form: \PACS code \sep code

%% MSC codes here, in the form: \MSC code \sep code
%% or \MSC[2008] code \sep code (2000 is the default)
\MSC  94B05 \sep 94A05
\end{keyword}

\end{frontmatter}
\section{Introduction}
Let $q$ be a power of a prime $p$. Denote by $\gf_q$ the finite field with $q$ elements and $\mathbb{F}_{q}^*=\mathbb{F}_{q}\setminus \{0\}$.
\subsection{Linear codes}
For a positive integer $n$, a non-empty subset $\cC$ of $\gf_q^n$ is called an $[n,\kappa,d]$ \emph{linear code} over $\gf_q$ provided that it is a  $\kappa$-dimensional linear subspace of $\gf_q^n$, where $d$ is its minimum distance. Define the \emph{dual} of an $[n,\kappa]$ linear code $\cC$ over $\gf_q$ by
$$
\cC^{\perp}=\left\{ \textbf{u} \in \mathbb{F}_{q}^{n}: \langle  \textbf{u}, \mathbf{c} \rangle=0\mbox{ for all }\mathbf{c} \in \cC \right\},
$$
where $\langle \cdot,\cdot \rangle$ denotes the standard inner product. By definition, $\cC^\perp$ is an $[n,n-\kappa]$ linear code over $\gf_q$.
For an $[n,\kappa]$ linear code $\cC$ over $\gf_q$, let $A_{i}$ represent the number of codewords with weight $i$ in $\cC$, where $0\leq i \leq n$.
Define the \emph{weight enumerator} of $\cC$ as the following polynomial:
$$A(z)=1+A_{1}z+A_{2}z^2+ \cdots +A_{n}z^n.$$
The\emph{ weight distribution} of $\cC$ is defined by the sequence $(A_0,A_1,\ldots,A_n)$.
In recent years, the weight distribution of linear codes has been widely investigated in the literature \cite{T1, T2, T3, T4, T5, DWTW, DWTW2, W2, HLDC, W6, NMDS1, NMDS2, W3,W4}.
The weight distribution of a linear code contains crucial information including the error detection and correction capabilities of the code
and allows to compute the error probability of its error detection and correction \cite{K}.

In coding theory, modifying existing codes may yield interesting new codes.
A longer code can be constructed by adding a coordinate.
Let $\cC$ be an  $[n,\kappa,d]$ linear code over $\gf_q$.
 The \emph{extended code} $\overline{\mathcal{C}}$ of $\mathcal{C}$  is defined by
$$\overline{\cC}=\left\{(c_1,c_2,\ldots,c_{n},c_{n+1})\in \gf_q^{n+1}:(c_1,c_2,\ldots,c_{n})\in \cC\mbox{ with }\sum_{i=1}^{n+1}c_i=0\right\}.$$
This construction is said to be adding  an overall parity check \cite{HP}. Note that $\overline{\cC}$ has only even-like vectors.
Then $\overline{\cC}$ is also a linear code over $\gf_q$ with parameters $[n+1,\kappa,\overline{d}]$, where $\overline{d}=d$ or $d+1$.
For instance, the extended code of the binary $[7,4,3]$ Hamming code has parameters $[8,4,4]$.
The extension technique was used in \cite{D, WQY} to obtain desirable codes.

An $[n,\kappa,d]$ linear code is said to be good if it has both large rate $\kappa/n$ and large minimum distance $d$.
However, there is a tradeoff among the parameters $n,\kappa$ and $d$.
If an $[n,\kappa,d]$ linear code over $\gf_q$ exists, then the following Singleton bound holds:
$$d\leq n-\kappa+1.$$
Linear codes achieving the Singleton bound with parameters $[n,\kappa,n-\kappa+1]$
are called maximum distance separable (MDS for short) codes.
Linear codes nearly achieving the Singleton bound are also interesting and
have attracted the attention of many researchers.
An $[n,\kappa,n-\kappa]$ linear code is said to be almost maximum distance separable (AMDS for short).
It is known that the dual of an AMDS code may not be AMDS.
AMDS codes whose duals are also AMDS are said to be near maximum distance separable (NMDS for short).
NMDS codes are of interest because they have many nice applications in
finite geometry, combinatorial designs, locally recoverable codes and many other fields
\cite{T2, NMDS1, NMDS2, TANP, T7}.
In general, constructing infinite families of NMDS codes with desirable weight distribution
is challenging.
In recent years, a few families of NMDS codes
were constructed in \cite{T2, NMDS1, NMDS2, T7, WQY, 32D} and their weight distributions were determined.

\subsection{Combinatorial designs from linear codes}

Let $k,t,n$ be positive integers such that $1 \leq t \leq k \leq n$.
 Let $\mathcal{P}$ be a set with $|\mathcal{P}|=n\geq 1$.
 Denote by $\mathcal{B}$ a collection of $k$-subsets of $\mathcal{P}$.
 For each $t$-subset of $\mathcal{P}$, if there exist exactly $\lambda$ elements of $\mathcal{B}$ such that they contain this $t$-subset,
 then the pair $\mathbb{D}:=(\mathcal{P},\mathcal{B})$ is referred to as a $t$-$(n,k,\lambda)$ \emph{design} ($t$-\emph{design} for short).
 The elements in $\mathcal{P}$ and $\mathcal{B}$ are called points and blocks, respectively.
 A $t$-design is said to be \emph{simple} if it contains no repeated blocks.
 A $t$-design with $k=t$ or $k=n$ is said to be trivial.
 In this paper, we are interested in only simple and nontrivial  $t$-designs with $n>k>t$.  A $t$-$(n,k,\lambda)$ design satisfying $t\geq2$ and $\lambda=1$
 is called a \emph{Steiner system} denoted by $S(t,k,n)$.  For a $t$-$(n, k, \lambda)$ design, the following equality holds \cite{HP}:
$$\binom{n}{t}\lambda=\binom{k}{t}b,$$
where $b$ is the number of blocks in $\mathcal{B}$.
Let $\mathcal{B}^c$ represent the set of the complements of the blocks in $\mathcal{B}$. Then  $(\mathcal{P},\mathcal{B}^c)$ is a $t$-$(n,n-k,\lambda_{0,t})$ design if $(\mathcal{P},\mathcal{B})$ is a $t$-$(n,k,\lambda)$ design, where
\begin{eqnarray}\label{eqn-complementarydesign}
\lambda_{0,t}=\lambda\frac{\binom{n-t}{k}}{\binom{n-t}{k-t}}.
\end{eqnarray}
The pair $(\mathcal{P},\mathcal{B}^c)$ is referred to as the complementary design of $(\mathcal{P}, \mathcal{B})$.

 In past decades, the interplay between linear codes and $t$-designs has been a very interesting research topic.
 For one thing, the incidence matrix of a $t$-design yields a linear code. See \cite{Dingbook15} for progress in this direction.
 For another thing, linear codes may hold $t$-designs.
 The  well-known coding-theoretic construction of $t$-designs is described as follows.
 Let $\cC$ be an $[n,\kappa]$ linear code over $\gf_q$ and the  coordinates of a codeword in it be indexed by $(1,2,\ldots,n)$. Denote by $\mathcal{P}(\cC)=\{ 1,2,\ldots,n \}$.
 For a codeword $\textbf{c}=\{ c_1,c_2,\ldots,c_n \}\in \cC$, define its \emph{support}  by
 $$\support(\textbf{c})=\{1 \leq i \leq n : c_i \neq 0 \}.$$
 Denote by
 $$\mathcal{B}_{w}(\cC)=\frac{1}{q-1}\{\{\support(\bc):\wt(\bc)=w\mbox{ and }\bc\in \cC\}\},$$
where $\{\{\}\}$ denotes the multiset notation, $\wt(\bc)$ is the Hamming weight of $\bc$ and $\frac{1}{q-1}S$ denotes the multiset obtained by dividing the multiplicity of each element in the multiset $S$ by $q-1$ \cite{TDX}. If the pair $(\mathcal{P}(\mathcal{C}),\mathcal{B}_{w}(\cC))$ is a $t$-$(n,w,\lambda)$ design with $b$ blocks for $0\leq w \leq n$, we say that the code $\cC$ supports $t$-designs, where
\begin{eqnarray}\label{eqn-t}
b=\frac{1}{q-1}A_w,\ \lambda=\frac{\binom{w}{t}}{(q-1)\binom{n}{t}}A_w.
\end{eqnarray}

The following Assmus-Mattson Theorem provides a sufficient condition for a linear code to hold $t$-designs.

\begin{theorem}\label{the-AM}\cite[Assmus-Mattson Theorem]{D}
Let $\cC$ be an $[n,\kappa,d]$ linear code over $\gf_q$ whose  weight distribution is denoted by $(1,A_1,A_2,\ldots,A_n)$. Let $d^\bot$ be the minimum weight of $\cC^{\bot}$ whose weight distribution is denoted by $(1,A_1^\bot,A_2^\bot,\ldots,A_n^\bot)$. Let $t$ be an integer satisfying $1\leq t <\min\{d,d^\bot\}$. Assume that there are at most $d^\bot-t$ nonzero weights of $\cC$ in the range $\{1,2,\ldots,n-t\}$. Then the followings hold:
\begin{enumerate}
\item $(\mathcal{P}(\mathcal{C}),\mathcal{B}_{i}(\C))$ is a simple $t$-design if $A_i\neq 0$ with $d\leq i \leq w$, where $w$ is the largest integer satisfying $w\leq n$ and
    $$w-\left\lfloor\frac{w+q-2}{q-1}\right\rfloor<d.$$
\item $(\mathcal{P}(\mathcal{C}^\bot),\mathcal{B}_{i}(\C^\bot))$ is a simple $t$-design if $A_i^\bot\neq 0$ with $d^\bot\leq i \leq w^\bot$, where $w^\bot$ is the largest integer satsifying $w^\bot\leq n$ and
    $$w^\bot-\left\lfloor\frac{w^\bot+q-2}{q-1}\right\rfloor<d^\bot.$$
\end{enumerate}
\end{theorem}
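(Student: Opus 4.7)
The plan is to prove statement (1); statement (2) follows by applying the same argument to $\cC^\perp$, whose hypotheses are symmetric in the roles of $\cC$ and $\cC^\perp$. Fix a weight $i$ with $A_i \neq 0$ and $d \leq i \leq w$, and fix a $t$-subset $T \subseteq \mathcal{P}(\cC)$. The objective is to show that the number of blocks in $\mathcal{B}_i(\cC)$ containing $T$ depends only on $t$ and $i$, which, combined with the identity \eqref{eqn-t}, establishes the $t$-design property with the claimed $\lambda$.

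The first reduction is to pass from ``support contains $T$'' to ``vanishes on $S$''. For each $S \subseteq T$, set
\begin{equation*}
\beta_i(S) \;=\; \bigl|\{\,\bc \in \cC : \wt(\bc) = i,\ c_j = 0 \text{ for all } j \in S\,\}\bigr|,
\end{equation*}
i.e., the $i$th weight frequency of the shortened code $\cC(S)$ at $S$. A standard inclusion--exclusion over subsets of $T$ then expresses the number of weight-$i$ codewords whose support contains $T$ as a $\pm 1$ combination of the $\beta_i(S)$ with $S \subseteq T$. Hence it suffices to show that $\beta_i(S)$ depends only on $|S|$ and $i$.

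Next I would invoke the MacWilliams identity on $\cC(S)$, whose dual is the punctured code $(\cC^\perp)^S$. The hypothesis $t < d^\perp$ guarantees that puncturing $\cC^\perp$ at any $S \subseteq T$ is injective (no nonzero dual codeword is supported inside $S$, since $|S| \leq t < d^\perp$), so $|(\cC^\perp)^S| = |\cC^\perp|$ and the weight enumerator of $(\cC^\perp)^S$ is controlled by that of $\cC^\perp$. Combining this with the Pless power moment identities, the unknown frequencies $\{\beta_j(S) : 1 \leq j \leq n-|S|\}$ satisfy a linear system whose right-hand sides are polynomial expressions in $|S|$, $n$, $q$ and the dual coefficients $A_j^\perp$.

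The main technical obstacle is verifying that this linear system is uniquely solvable and that its solution depends on $S$ only through its cardinality. This is where the numerical hypothesis bites: since $\cC$ has at most $d^\perp - t$ nonzero weights in $\{1,\ldots,n-t\}$ (so the support of the enumerator of $\cC(S)$ is small) and $\cC^\perp$ has no nonzero weight below $d^\perp$ (so the first $d^\perp - 1$ Pless moments involve only $A_0^\perp = 1$), one obtains exactly enough independent Krawtchouk/Vandermonde equations to pin down each $\beta_j(S)$ as a function of $|S|$ alone. Inclusion--exclusion then recovers the desired $T$-independence. Finally, the bound $i \leq w$ ensures that any two weight-$i$ codewords sharing the same support must be $\gf_q^*$-scalar multiples (otherwise a suitable linear combination would produce a codeword of weight strictly less than $d$), which justifies the division by $q - 1$ in the definition of $\mathcal{B}_i(\cC)$ and guarantees simplicity of the resulting design.
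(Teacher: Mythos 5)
The paper does not actually prove this statement: it is quoted from the reference \cite{D} as a known theorem, so there is no internal proof to compare against. Judged on its own, your sketch of part (1) is the standard Assmus--Mattson argument and is essentially sound: inclusion--exclusion over $S\subseteq T$ reduces the problem to showing that the weight distribution of the shortened code $\cC_S$ depends only on $|S|$; the identification $(\cC_S)^\perp=(\cC^\perp)^S$ together with $|S|\le t<d^\perp$ gives the dimension count; and the power moments of orders $0,1,\ldots,d^\perp-|S|-1$ involve only $A_0$ of the dual and form a Vandermonde-type system in the unknown frequencies. One bookkeeping point to tighten: the weights of $\cC_S$ lie in $\{1,\ldots,n-|S|\}$, not $\{1,\ldots,n-t\}$, so for $|S|<t$ there may be up to $(d^\perp-t)+(t-|S|)=d^\perp-|S|$ nonzero weights; this still matches the $d^\perp-|S|$ free moment equations coming from $d\bigl((\cC^\perp)^S\bigr)\ge d^\perp-|S|$, but your phrase ``the first $d^\perp-1$ Pless moments involve only $A_0^\perp=1$'' conflates $\cC^\perp$ with its punctured version. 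Your simplicity argument via $i\le w$ (pigeonhole over $\bc-a\bc'$, $a\in\gf_q^*$, producing a codeword of weight at most $i-\lceil i/(q-1)\rceil<d$) is correct.

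The genuine gap is your one-line dispatch of part (2). The hypothesis is \emph{not} symmetric in $\cC$ and $\cC^\perp$: it bounds the number of nonzero weights of $\cC$ in $\{1,\ldots,n-t\}$ and assumes nothing about the nonzero weights of $\cC^\perp$, so you cannot simply ``apply the same argument to $\cC^\perp$.'' The standard repair stays inside the machinery you already set up: from the fact that the weight distribution of $\cC_S$ depends only on $|S|$, deduce that the weight distribution of the punctured code $\cC^S$ also depends only on $|S|$ (for $|S|=1$ use $A_i(\cC^{\{j\}})=A_i(\cC_{\{j\}})+A_{i+1}(\cC)-A_{i+1}(\cC_{\{j\}})$, valid since $d\ge 2$ makes puncturing injective, and induct); then apply MacWilliams to $(\cC^S)^\perp=(\cC^\perp)_S$ to conclude that the shortened dual's weight distribution depends only on $|S|$, and run the same inclusion--exclusion to get the design property for $\cC^\perp$. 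Without this extra step, part (2) is unproved.
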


The Assmus-Mattson Theorem is a powerful tool for constructing $t$-designs from linear codes and has been widely used in \cite{D, T3, T4, T9}.
Another method to prove that a linear code holds $t$-designs is via the automorphism group of the code.
Several infinite families of $t$-designs were constructed via this method in the literature \cite{D, T9, T5, DWTW, DWTW2, TDX2, T8, T10}.
The third method is directly characterizing the supports of the codewords of fixed weight.
See \cite{T2, T7, T8} for known $t$-designs obtained by this method.
Recently, Tang, Ding and Xiong generalized the Assmus-Mattson Theorem and derived $t$-designs from codes which don't satisfy
the conditions in the Assmus-Mattson Theorem and don't admit $t$-homogeneous group as a subgroup of their automorphisms \cite{TDX}.

\subsection{Motivations and objectives of this work}
Constructing $t$-designs from special NMDS codes has been an interesting research topic for a long time.
The first NMDS code dates back to 1949. Golay discovered the $[11,6,5]$ ternary NMDS code which is called the  ternary Golay code.
This NMDS code holds $4$-designs.
In the past 70 years after this discovery, only sporadic NMDS codes holding $t$-designs were found.
The question as to whether there exists an infinite family of NMDS codes holding an
infinite family of $t$-designs for $t\geq 2$ remained  open during this long period.
In 2020, Ding and Tang made a breakthrough in constructing the first two infinite families of NMDS codes holding
$2$-designs or $3$-designs \cite{T2}. Up to now, there are only a few known infinite families of NMDS codes holding $t$-designs for
$t=2,3,4$ in the literature \cite{T2, T7, 32D, YZ}. It is challenging to construct new infinite families of NMDS codes holding $t$-designs with $t>1$.

The objective of this paper is to construct several new infinite families of NMDS codes holding $t$-designs.
To this end, some special matrices over finite fields are used as the generator matrices of the NMDS codes.
We then determine the weight enumerators of the NMDS codes and prove that the NMDS codes hold $2$-deigns or $3$-designs.
Most of the NMDS codes in this paper don't satisfy the conditions in the Assmus-Mattson Theorem but still hold $t$-designs.
Compared with known $t$-designs from NMDS codes, ours have different parameters.
Besides, several infinite families of optimal locally recoverable codes are also derived via the NMDS codes.

\section{Preliminaries}

In this section, we present some preliminaries on the properties of NMDS codes and oval polynomials, and the number of zeros of some equations over finite fields.

\subsection{Properties of NMDS codes}
Let $\cC$ be an $[n,\kappa]$ linear code and $\cC^\perp$ be its dual.
Denote by $(1,A_1,\ldots,A_n)$ and $(1,A_1^\perp,\ldots,A_n^\perp)$ the weight distributions of $\cC$ and $\cC^\perp$, respectively.
If $\cC$ is an NMDS code, then the weight distributions of $\cC$ and $\cC^\perp$ are given in the following lemma.

\begin{lemma}\label{lem-nmdsweight}\cite{D}
Let $\mathcal{C}$ be an $[n, \kappa, n-\kappa]$ NMDS code over $\gf_q$. If $s \in \{1,2, \ldots, n-\kappa\}$, then
\begin{eqnarray*}%\label{eqn-DL282}
A_{\kappa+s}^\perp = \binom{n}{\kappa+s} \sum_{j=0}^{s-1} (-1)^j \binom{\kappa+s}{j}(q^{s-j}-1) +
             (-1)^s \binom{n-\kappa}{s}A_{\kappa}^\perp.
\end{eqnarray*}
If $s \in \{1,2, \ldots, \kappa\}$, then
\begin{eqnarray*}%\label{eqn-DL281}
A_{n-\kappa+s} = \binom{n}{\kappa-s} \sum_{j=0}^{s-1} (-1)^j \binom{n-\kappa+s}{j}(q^{s-j}-1)
+(-1)^s \binom{\kappa}{s}A_{n-\kappa}.
\end{eqnarray*}
\end{lemma}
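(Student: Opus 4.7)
The plan is to combine the vanishing pattern characterizing an NMDS code with the MacWilliams identities, reducing the determination of the weight distribution to a triangular linear system with a single free parameter $A_\kappa^\perp$. Since $\mathcal{C}$ is NMDS, we have $A_0 = 1$, $A_1 = \cdots = A_{n-\kappa-1} = 0$, and $A_0^\perp = 1$, $A_1^\perp = \cdots = A_{\kappa-1}^\perp = 0$, so the potentially nonzero dual weights are $A_\kappa^\perp, A_{\kappa+1}^\perp, \ldots, A_n^\perp$, with $A_\kappa^\perp$ playing the role of the ``defect'' measuring how far $\mathcal{C}$ is from being MDS.

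First I would apply the inverse MacWilliams identity in Krawtchouk form,
$$q^{n-\kappa}A_i \;=\; \sum_{j=0}^n A_j^\perp\, K_i(j; n, q), \qquad K_i(j;n,q) \;=\; \sum_{\ell=0}^{i}(-1)^\ell(q-1)^{i-\ell}\binom{j}{\ell}\binom{n-j}{i-\ell},$$
at each index $i \in \{1, \ldots, n-\kappa-1\}$ where $A_i = 0$, and use $A_j^\perp = 0$ for $1 \le j \le \kappa-1$ to trim the right-hand side. This produces a triangular system in the unknowns $A_{\kappa+1}^\perp, \ldots, A_n^\perp$ (treating $A_\kappa^\perp$ as a parameter): the equation at index $i = n-\kappa-s$ expresses $A_{\kappa+s}^\perp$ as a linear combination of $A_\kappa^\perp$ and the previously-solved $A_{\kappa+s'}^\perp$ with $s' < s$, and hence the system can be unrolled inductively on $s$.

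Next I would verify the claimed closed form by induction on $s$. The key observation is that setting $A_\kappa^\perp = 0$ in the claimed formula recovers precisely the classical weight distribution of an $[n, n-\kappa, \kappa+1]$ MDS code, namely $\binom{n}{\kappa+s}\sum_{j=0}^{s-1}(-1)^j\binom{\kappa+s}{j}(q^{s-j}-1)$. Thus the ``particular'' part of the solution is furnished free of charge by the MDS weight formula, and only the ``homogeneous'' part --- the coefficient of $A_\kappa^\perp$ --- requires direct induction. A single application of the Chu--Vandermonde identity shows that this coefficient at step $s$ is $(-1)^s\binom{n-\kappa}{s}$, as claimed. The second formula for $A_{n-\kappa+s}$ then follows by symmetry, since the dual of an NMDS code is itself NMDS: applying the same argument to $\mathcal{C}^\perp$ with $\kappa$ and $n-\kappa$ swapped yields the stated expression.

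The main obstacle I anticipate is combinatorial bookkeeping, specifically arranging the alternating-sign binomial sums so that the MDS ``particular part'' cleanly separates from the $A_\kappa^\perp$ correction. Quoting the MDS weight distribution as a black box isolates the novel work to verifying the coefficient $(-1)^s\binom{n-\kappa}{s}$ via one inductive step, which keeps the calculation manageable and avoids re-proving the MDS formula from scratch inside this argument.
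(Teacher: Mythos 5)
The paper does not prove this lemma at all --- it is quoted verbatim from the cited reference \cite{D} (it is the classical Dodunekov--Landgev/Faldum--Willems description of the weight distribution of a Singleton-defect-one code), so there is no in-paper proof to compare against. Your proposal is correct and is essentially the standard argument from that literature: the known vanishing $A_1=\cdots=A_{n-\kappa-1}=0$ and $A_1^\perp=\cdots=A_{\kappa-1}^\perp=0$, fed into the MacWilliams identities, pins down $A_{\kappa+1}^\perp,\ldots,A_n^\perp$ as affine functions of the single parameter $A_\kappa^\perp$; the constant term must then be the MDS dual weight distribution (which satisfies a strictly larger system of the same equations), and the homogeneous part is checked to be $(-1)^s\binom{n-\kappa}{s}$ via a binomial/Krawtchouk identity, with the second formula following by the self-duality of the NMDS property. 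One small imprecision: the raw Krawtchouk-form equations $\sum_j A_j^\perp K_i(j)=0$ are not literally triangular in the unknowns as you assert (each involves all of them); to get the triangular unrolling you should pass to the equivalent Pless power-moment/binomial-moment form $\sum_i \binom{n-i}{\nu}A_i = q^{\kappa-\nu}\sum_j\binom{n-j}{n-\nu}A_j^\perp$, which is genuinely triangular with nonzero diagonal and also supplies the needed uniqueness of the solution. This is a fixable bookkeeping point, not a gap in the method.
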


Though Lemma \ref{lem-nmdsweight} and the relation $1+\sum_{s=0}^{\kappa}A_{n-\kappa+s}=q^{\kappa}$ hold, the weight distribution of an $[n,\kappa]$
NMDS code still can not be totally determined. In \cite{NMDS1}, some infinite families of NMDS codes with the same parameters but different weight distributions were constructed.

The following lemma establishes an interesting relationship between the minimum weight codewords in $\cC$ and those in $\cC^\perp$.
\begin{lemma}\label{lem-minweight}\cite{Lemma3}
Let $\mathcal{C}$ be an NMDS code. For any $\bc=(c_1, \ldots, c_{n})\in \C$, its support is defined by $\support(\bc)=\{1 \leq i \leq n: c_i \neq 0\}$.
Then for any minimum weight codeword $\bc$ in $\mathcal{C}$, there exists, up to a multiple, a unique minimum weight codeword $\bc^\perp$ in $\mathcal{C}^\perp$ satisfying
$\support(\bc) \cap \support(\bc^\perp)=\emptyset$. Besides, the number of  minimum weight codewords in $\mathcal{C}$ and the number of those in $\cC^\perp$ are the same.
\end{lemma}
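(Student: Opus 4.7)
The plan is to exploit the duality between puncturing and shortening of codes, together with the extremal distance properties of an NMDS code. Let $\bc$ be a minimum-weight codeword of $\mathcal{C}$, so $\wt(\bc)=n-\kappa$, and set $T=\{1,\ldots,n\}\setminus \support(\bc)$, which has size $\kappa$. I would produce the required dual codeword as a nonzero element of the shortening of $\mathcal{C}^\perp$ to the coordinate set $T$, that is, the set of codewords of $\mathcal{C}^\perp$ vanishing outside $T$, viewed as a subspace of $\gf_q^T$.

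The first step is a dimension count. Set $K=\{\bc'\in\mathcal{C}:\support(\bc')\subseteq \support(\bc)\}$, the subcode of $\mathcal{C}$ consisting of codewords supported on $\support(\bc)=\{1,\ldots,n\}\setminus T$. Using the standard identity that the dual of the puncturing of $\mathcal{C}$ to $T$ coincides with the shortening of $\mathcal{C}^\perp$ to $T$, together with $\dim(\text{puncturing of }\mathcal{C}\text{ to }T)=\kappa-\dim K$, one obtains
$$\dim(\text{shortening of }\mathcal{C}^\perp\text{ to }T)=|T|-(\kappa-\dim K)=\dim K.$$
Hence the existence and uniqueness of $\bc^\perp$ both reduce to the single claim $\dim K=1$.

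The crux, and the step I expect to be the main obstacle, is establishing $\dim K\leq 1$. Every nonzero $\bc'\in K$ has weight at least $d=n-\kappa$ and support inside $\support(\bc)$ of size $n-\kappa$, so its support must equal $\support(\bc)$ exactly. If $\bc$ and $\bc'$ are two such codewords, I would pick any $i\in\support(\bc)$ and consider $\bc'-(c_i'/c_i)\bc$: it has a zero at position $i$ while remaining supported on $\support(\bc)$, so its weight is at most $n-\kappa-1<d$, forcing it to vanish and giving $\bc'\in \gf_q \bc$. This shows $\dim K=1$, so there is a unique, up to scalar, nonzero $\bc^\perp\in\mathcal{C}^\perp$ supported on $T$. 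Since $\kappa=d^\perp\leq \wt(\bc^\perp)\leq |T|=\kappa$, its weight is exactly $\kappa$, which forces $\support(\bc^\perp)=T$ and therefore $\support(\bc)\cap\support(\bc^\perp)=\emptyset$.

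For the final count, the entire argument is symmetric in $\mathcal{C}$ and $\mathcal{C}^\perp$: reversing the roles produces, for each minimum-weight codeword of $\mathcal{C}^\perp$, a unique (up to scalar) minimum-weight codeword of $\mathcal{C}$ with disjoint support. The induced maps on scalar-equivalence classes $[\bc]\mapsto[\bc^\perp]$ and $[\bc^\perp]\mapsto[\bc]$ are therefore mutually inverse. Since each such class contains exactly $q-1$ codewords, multiplying the class counts by $q-1$ yields $A_{n-\kappa}=A_\kappa^\perp$, completing the proof.
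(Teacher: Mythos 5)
The paper does not prove this lemma; it is imported verbatim from Faldum--Willems \cite{Lemma3}, so there is no in-paper proof to compare against. Your argument is correct and complete on its own terms. The puncture/shorten duality $\bigl(\cC^{S}\bigr)^\perp=(\cC^\perp)_{S}$ (with $S=\support(\bc)$) correctly reduces both existence and uniqueness to $\dim K=1$; your scaling argument for $\dim K\le 1$ is valid, since a nonzero codeword supported on $\support(\bc)\setminus\{i\}$ would have weight at most $n-\kappa-1<d$; and the sandwich $\kappa=d^\perp\le \wt(\bc^\perp)\le |T|=\kappa$ correctly forces $\support(\bc^\perp)=T$. The closing count also goes through: the two maps on scalar classes are mutually inverse precisely because the only codewords of $\cC$ supported inside $\support(\bc)$ are the multiples of $\bc$ (and symmetrically for $\cC^\perp$), and each class has $q-1$ elements, giving $A_{n-\kappa}=A_\kappa^\perp$. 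For reference, the original proof in \cite{Lemma3} is the same linear algebra phrased matricially: the $\kappa$ columns of a generator matrix of $\cC$ indexed by $T$ have rank exactly $\kappa-1$ (at most $\kappa-1$ because $\bc$ gives a vector annihilating them, at least $\kappa-1$ because $d^\perp=\kappa$ makes any $\kappa-1$ columns independent), and a full-support dependency among them is the desired $\bc^\perp$. Your route buys a cleaner uniqueness statement via the dimension identity, at the cost of invoking the puncturing/shortening theorem; the two are equivalent.
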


By Lemma \ref{lem-minweight}, if the minimum weight codewords of an NMDS code hold a $t$-design, then the minimum weight codes of its dual hold a complementary $t$-design.

\subsection{Oval polynomials and their properties}
The definition of oval polynomial is presented as follows.

\begin{definition}\label{defn}  \cite{OVAL}
Let $q=2^m$ with $m \geq 2$. If $f \in \gf_q[x]$ is a polynomial such that $f$ is a permutation polynomial of $\gf_q$ with $\deg(f)<q$ and $f(0)=0$, $f(1)=1$, and
 $g_a(x):=(f(x+a)+f(a))x^{q-2}$ is also a permutation polynomial of $\gf_q$ for each $a \in \gf_q$, then $f$ is called an oval polynomial.
\end{definition}

The following gives some known oval polynomials.

\begin{lemma}\label{thm-knownopolys}\cite{M}
Let $m \geq 2$ be an integer. Then the followings are oval polynomials of $\gf_q$, where $q=2^m$.
\begin{enumerate}
\item The translation polynomial $f(x)=x^{2^h}$, where $\gcd(h, m)=1$.
\item The Segre polynomial $f(x)=x^6$, where $m$ is odd.
\end{enumerate}
\end{lemma}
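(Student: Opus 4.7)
The plan is to verify, for each candidate $f$, the four defining conditions of Definition~\ref{defn}: that $f$ is a permutation polynomial of $\gf_q$ of degree $<q$, that $f(0)=0$ and $f(1)=1$, and that $g_a(x):=(f(x+a)+f(a))x^{q-2}$ is a permutation polynomial of $\gf_q$ for every $a\in\gf_q$. In both cases $f(0)=0$, $f(1)=1$ and $\deg f<q$ are immediate (in (1) one may assume $1\leq h<m$ since $x^{2^m}=x$ on $\gf_q$; in (2), $m$ odd with $m\geq 2$ forces $m\geq 3$ and $q\geq 8>6$). For (1), $f$ is the $h$-fold iterate of the Frobenius automorphism $x\mapsto x^2$, hence a bijection, and its additivity in characteristic $2$ gives $(x+a)^{2^h}+a^{2^h}=x^{2^h}$; thus $g_a(x)=x^{2^h}\cdot x^{q-2}$ equals $x^{2^h-1}$ on $\gf_q$ (both sides vanish at $0$; on $\gf_q^*$ use $x^{q-1}=1$). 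This permutes $\gf_q$ because $\gcd(2^h-1,2^m-1)=2^{\gcd(h,m)}-1=1$; notably $g_a$ does not depend on $a$, making the translation case uniform.

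For (2), $\gcd(6,2^m-1)=1$ since $2^m-1$ is odd and $2^m\equiv 2\pmod 3$ for $m$ odd, so $f(x)=x^6$ permutes $\gf_q$. Expanding in characteristic $2$ using $(x+a)^2=x^2+a^2$ twice gives $(x+a)^6+a^6=x^6+a^2x^4+a^4x^2=x^2(x^2+ax+a^2)^2$, whence $g_a(x)=x(x^2+ax+a^2)^2=x^5+a^2x^3+a^4x$. For $a=0$, $g_0(x)=x^5$ permutes $\gf_q$ since the multiplicative order of $2$ modulo $5$ is $4$, forcing $5\nmid 2^m-1$ for $m$ odd. For $a\neq 0$, the substitution $x=ay$ reduces the problem to showing that the single polynomial $h(y):=y(y^2+y+1)^2=y^5+y^3+y$ permutes $\gf_q$; since $y^2+y+1$ has no root in $\gf_{2^m}$ for $m$ odd (its roots lie in $\gf_4$, which embeds in $\gf_{2^m}$ only when $m$ is even), $h$ vanishes only at $y=0$, and the crux is injectivity of $h$.

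The main obstacle, injectivity of $h$, I would handle by the following calculation. Suppose $h(y_1)=h(y_2)$ with $y_1\neq y_2$; set $s=y_1+y_2\neq 0$ and $p=y_1y_2$. The characteristic-$2$ Newton identities $y_1^3+y_2^3=s^3+ps$ and $y_1^5+y_2^5=s^5+ps^3+p^2s$ collapse $h(y_1)+h(y_2)=0$ to $s\bigl(s^4+(p+1)s^2+(p^2+p+1)\bigr)=0$, which (since $s\neq 0$) requires the quadratic $u^2+(p+1)u+(p^2+p+1)=0$ in $u=s^2$ to have a root in $\gf_q$. When $p=1$ it degenerates to $(s+1)^4=0$, forcing $(y_1,y_2)$ to be the two distinct roots of $t^2+t+1$, which do not lie in $\gf_{2^m}$ for $m$ odd, a contradiction. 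When $p\neq 1$, the standard trace criterion for solvability of $u^2+Bu+C=0$ over $\gf_{2^m}$ demands $\Tr_{\gf_q/\gf_2}((p^2+p+1)/(p+1)^2)=0$; the substitution $w:=p+1$ rewrites this ratio as $1+1/w+1/w^2$, and since $\Tr(x^2)=\Tr(x)$ on $\gf_q$ its trace collapses to $\Tr_{\gf_q/\gf_2}(1)=m\bmod 2=1$, again a contradiction. Either way $y_1=y_2$, so $h$ is injective and hence a permutation, completing the verification.
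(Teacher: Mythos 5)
Your proof is correct, but it is doing genuinely more than the paper does: the paper states this lemma with no proof at all, simply citing Mesnager \cite{M}, so there is no internal argument to compare against. Your verification works directly from Definition \ref{defn}, and every step checks out. In case (1) the additivity of Frobenius collapses $g_a$ to the function $x\mapsto x^{2^h-1}$ independently of $a$, and $\gcd(2^h-1,2^m-1)=2^{\gcd(h,m)}-1=1$ finishes it. In case (2) the identity $(x+a)^6+a^6=x^2(x^2+ax+a^2)^2$ and the reduction to $h(y)=y^5+y^3+y$ are right, and the crux --- injectivity of $h$ --- is handled cleanly: the power sums $p_3=s^3+ps$, $p_5=s^5+ps^3+p^2s$ give $h(y_1)+h(y_2)=s\bigl(s^4+(p+1)s^2+(p^2+p+1)\bigr)$, the degenerate case $p=1$ is excluded because $t^2+t+1$ is irreducible over $\gf_{2^m}$ for odd $m$, and for $p\neq 1$ the trace computation $\Tr_{q/2}\bigl(1+w^{-1}+w^{-2}\bigr)=\Tr_{q/2}(1)=1$ (using $\Tr(x)=\Tr(x^2)$) rules out a root of the quadratic by exactly the criterion the paper records as Lemma \ref{lem-abc}. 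The only remark worth making is that the paper's own toolkit suggests an alternative route for the Segre case: by Lemma \ref{lem-oval} it suffices to show that the secant slopes $(x^6+y^6)/(x+y)=(x+y)(x^2+xy+y^2)^2$ take distinct values at distinct $y,z$ for fixed $x$, which after normalization leads to essentially the same trace computation; your route via the permutation property of $g_a$ is equivalent and no longer.
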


By Lemma \ref{thm-knownopolys}, it is obvious that $f(x)=x^4$ is an oval polynomial of $\gf_q$, where $q=2^m$ for odd $m$.
By Definition \ref{defn}, the following also holds for oval polynomials.

\begin{lemma}\label{lem-oval}
Let $q=2^m$ with $m\geq 2$. Then $f$ is an oval polynomial of $\gf_q$ if and only if the followings simultaneously hold:
\begin{enumerate}
\item $f$ is a permutation polynomial of $\gf_q$ with $\deg(f)<q$ and $f(0)=0$, $f(1)=1$; and
\item
$$
\frac{f(x)+f(y)}{x+y} \neq \frac{f(x)+f(z)}{x+z}
$$
for all pairwise distinct elements $x, y, z$ in $\gf_q$.
\end{enumerate}
\end{lemma}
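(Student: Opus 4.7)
The plan is to use the fact that condition (1) of the lemma is verbatim the first requirement of Definition \ref{defn}, so the whole equivalence reduces to checking that, given (1), condition (2) holds if and only if $g_a(x) := (f(x+a)+f(a))x^{q-2}$ is a permutation polynomial of $\gf_q$ for every $a \in \gf_q$. Both directions will come out of a single substitution.

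First I would unpack $g_a$ as a function on $\gf_q$. Because $x^{q-2}=x^{-1}$ for $x\neq 0$ and $0^{q-2}=0$, one has $g_a(0)=0$ and $g_a(x)=(f(x+a)+f(a))/x$ for $x\in\gf_q^*$. Since (1) gives that $f$ is a permutation with $f(0)=0$, the numerator $f(x+a)+f(a)$ vanishes only at $x=0$; so $g_a$ takes the value $0$ only at $0$, and therefore $g_a$ is a permutation of $\gf_q$ precisely when it is injective on $\gf_q^*$.

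Next, for distinct $x,y\in\gf_q^*$, the equation $g_a(x)=g_a(y)$ rewrites (by cross-multiplication) as $y(f(x+a)+f(a))=x(f(y+a)+f(a))$. Setting $u=x+a$ and $v=y+a$, and noting that $u+a=x$ and $v+a=y$ in characteristic $2$, this is equivalent to
$$
\frac{f(u)+f(a)}{u+a}=\frac{f(v)+f(a)}{v+a},
$$
with $u,v$ ranging over distinct elements of $\gf_q\setminus\{a\}$. Hence $g_a$ fails to be a permutation for some $a\in\gf_q$ if and only if there exist pairwise distinct $a,u,v\in\gf_q$ for which these two quotients coincide; relabelling $(a,u,v)\to(x,y,z)$ turns this into the negation of condition (2), and the lemma follows in both directions.

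The main (and only) thing to be careful about is the characteristic-$2$ bookkeeping — in particular the identity $u+a=x$ that makes the substitution clean, and the check that the condition \emph{pairwise distinct} $(a,u,v)$ matches the hypothesis \emph{pairwise distinct} $(x,y,z)$ in (2) under the relabelling. Nothing deeper is required.
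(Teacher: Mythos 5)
Your proof is correct and is exactly the routine unpacking of Definition \ref{defn} that the paper leaves implicit: the paper states this lemma without proof, presenting it as an immediate consequence of the definition, and your argument supplies precisely the missing details. The two points that genuinely need checking --- that $g_a$ sends only $0$ to $0$ (which uses the injectivity of $f$), and that the substitution $u=x+a$, $v=y+a$ in characteristic $2$ turns injectivity of $g_a$ on $\gf_q^*$ into condition (2) with the triple $(a,u,v)$ pairwise distinct --- are both handled correctly.
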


\subsection{The number of zeros of some equations over finite fields}
 Let $q=p^m$ with $p$ a prime. The following lemma is very useful for determining the greatest common divisor of some special integers.
\begin{lemma} \label{lem-num1}
Let $h$ and $m$ be two integers with $\gcd(h,m)= \ell $. Then
\begin{eqnarray*}
\gcd(p^h+1,q-1)=\left\{
\begin{array}{ll}
1, & \mbox{for odd $\frac{m}{\ell}$ and $p=2$},\\
2, & \mbox{for odd $\frac{m}{\ell}$ and odd $p$},\\
p^\ell+1, & \mbox{for even $\frac{m}{\ell}$}.
\end{array} \right.
\end{eqnarray*}
\end{lemma}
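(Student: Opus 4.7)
The plan is to compute $d := \gcd(p^h+1,\,p^m-1)$ using the classical identity $\gcd(p^a-1,\,p^b-1) = p^{\gcd(a,b)}-1$, together with a careful parity tracking that detects when a factor of $p^\ell+1$ can be pulled out of $p^h+1$. First I would set $h = \ell j$ and $m = \ell k$ with $\gcd(j,k)=1$, and observe that $p^h+1$ divides $p^{2h}-1$, so $d \mid \gcd(p^{2h}-1,\,p^m-1) = p^{\gcd(2h,m)}-1$. A short parity check shows that $\gcd(2h,m) = \ell \cdot \gcd(2j,k)$ equals $\ell$ when $k$ is odd and $2\ell$ when $k$ is even — in the latter case $j$ is forced to be odd since $\gcd(j,k)=1$. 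The proof then splits on the parity of $k = m/\ell$.

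For the odd $k$ case I would use $d \mid p^\ell-1$ and reduce $p^h+1$ modulo $p^\ell-1$; since $\ell \mid h$, this collapses to $p^h+1 \equiv 2 \pmod{p^\ell-1}$, whence $d \mid \gcd(2,\,p^\ell-1)$. When $p=2$ this forces $d=1$; when $p$ is odd, both $p^h+1$ and $p^m-1$ are even, so the bound $d \mid 2$ is attained and $d=2$. These handle the first two branches of the lemma.

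For the even $k$ case I would establish matching lower and upper bounds both equal to $p^\ell+1$. The lower bound is direct: $j$ is odd, so the factorisation $x^j+1 = (x+1)\,M(x)$ with $M(x) = x^{j-1}-x^{j-2}+\cdots-x+1$, evaluated at $x=p^\ell$, gives $p^\ell+1 \mid p^h+1$, and $p^\ell+1 \mid p^{2\ell}-1 \mid p^m-1$ (using $2\ell \mid m$), so $p^\ell+1 \mid d$. For the upper bound I would write $d = (p^\ell+1)\,e$; then $e$ divides both $(p^{2\ell}-1)/(p^\ell+1) = p^\ell-1$ and $(p^h+1)/(p^\ell+1) = M(p^\ell)$. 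The key observation is that reducing $M(p^\ell)$ modulo $p^\ell-1$ replaces each $p^{\ell i}$ by $1$, turning $M(p^\ell)$ into an alternating sum of $j$ ones which, with $j$ odd, equals $1$. Hence $\gcd(e,\,p^\ell-1)=1$, and combined with $e \mid p^\ell-1$ this forces $e=1$, so $d = p^\ell+1$.

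I expect the main obstacle to be precisely this last upper bound: the lower bound $p^\ell+1 \mid d$ is immediate, but pinning $d$ down exactly requires combining the two constraints $d \mid p^{2\ell}-1$ and $d \mid p^h+1$ in just the right way to kill off the $p^\ell-1$ factor appearing in $p^{2\ell}-1 = (p^\ell-1)(p^\ell+1)$. The alternating-sum identity $M(p^\ell) \equiv 1 \pmod{p^\ell-1}$ is where the oddness of $j$ (equivalent, given $\gcd(j,k)=1$, to $m/\ell$ being even) is genuinely used; the rest is essentially bookkeeping with the classical gcd identity.
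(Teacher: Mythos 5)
Your proof is correct. Note that the paper itself gives no proof of this lemma at all -- it is stated as a known fact (it is a classical result on $\gcd(p^h+1,p^m-1)$) -- so there is nothing to compare against; your argument supplies a complete and self-contained justification. All the key steps check out: $\gcd(2h,m)=\ell\gcd(2j,k)$ equals $\ell$ or $2\ell$ according to the parity of $k$ (with $j$ forced odd in the even case since $\gcd(j,k)=1$); the reduction $p^h+1\equiv 2 \pmod{p^\ell-1}$ settles the odd-$k$ branches; and in the even-$k$ branch the factorisation $p^h+1=(p^\ell+1)M(p^\ell)$ together with $M(p^\ell)\equiv 1 \pmod{p^\ell-1}$ (an alternating sum of an odd number of ones) correctly kills the cofactor $e$ dividing both $M(p^\ell)$ and $p^\ell-1$, yielding $d=p^\ell+1$ exactly.
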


The following lemmas present some results on the number of solutions of some equations over $\gf_q$.
\begin{lemma}\label{lem-s}\cite{32D}
Let $n$ be a positive integer and $q$ a prime power such that $\gcd(n , q-1)=s $. Then $x^n-1$ has $s$ zeros in $\gf_q$.
\end{lemma}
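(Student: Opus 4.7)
The plan is to prove this directly by exploiting the cyclic structure of $\gf_q^*$. Since $0$ is not a root of $x^n - 1$ (as $-1 \neq 0$ in $\gf_q$), the zeros in $\gf_q$ coincide with the zeros in $\gf_q^*$, and the latter form a subgroup of the cyclic group $\gf_q^*$ of order $q-1$.

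First, I would fix a generator $g$ of $\gf_q^*$ and parametrize each nonzero element as $x = g^k$ with $0 \leq k \leq q-2$. Then $x^n = 1$ is equivalent to $g^{kn} = 1$, which in turn is equivalent to the divisibility condition $(q-1) \mid kn$. The problem is thus reduced to counting the integers $k \in \{0, 1, \ldots, q-2\}$ satisfying this divisibility.

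Next, I would write $s = \gcd(n, q-1)$, $n = sn'$ and $q-1 = sm'$ with $\gcd(n', m') = 1$. Then $(q-1) \mid kn$ becomes $sm' \mid ksn'$, i.e. $m' \mid kn'$, and since $\gcd(n', m') = 1$, this collapses to $m' \mid k$. The admissible values of $k$ in the range $\{0, 1, \ldots, q-2\} = \{0, 1, \ldots, sm' - 1\}$ are precisely $k \in \{0, m', 2m', \ldots, (s-1)m'\}$, giving exactly $s$ solutions. This yields $s$ zeros of $x^n - 1$ in $\gf_q^*$, and hence in $\gf_q$.

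This proof is entirely routine with essentially no obstacle beyond bookkeeping; the only subtlety is to check that $0$ is never a root (trivial) and to ensure the count $(s-1)m' \leq sm' - 1 = q - 2$ lies in the valid range (immediate since $m' \geq 1$). An alternative, even shorter, phrasing would invoke the standard fact that in a cyclic group of order $q-1$, the number of elements of order dividing $n$ equals $\gcd(n, q-1)$; I would likely present the explicit computation above since it is self-contained and requires no external result.
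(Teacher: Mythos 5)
Your proof is correct. The paper itself gives no proof of this lemma---it is quoted from the reference [32D] (Xu, Cao, Qu)---so there is nothing internal to compare against; your argument is the standard one: reduce to counting exponents $k$ with $(q-1)\mid kn$ in a cyclic group of order $q-1$, and the gcd bookkeeping is carried out correctly, including the (trivial but worth stating) observation that $0$ is not a root.
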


\begin{lemma}\label{lem-ph1}\cite[Proof of Lemma 4]{32D}
Let $h$ be a positive integer. Denote by $N_g$ the number of zeros of $ g(x)=x^{p^h+1} + c$ in $\gf_q$, where $c \in \gf_q^*$. Then $N_g=0$ if and only if $-c$ is not a $(p^h+1)$-th power in $\gf_q^*$. If $-c$ is a $(p^h+1)$-th power in $\gf_q^*$, then $N_g=\gcd(p^h+1,q-1)$.
\end{lemma}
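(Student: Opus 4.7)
The plan is to recast the equation $x^{p^h+1}=-c$ as a question about a group homomorphism on $\gf_q^*$, after which both assertions fall out from standard counting. First I would observe that $x=0$ is not a solution, because $g(0)=c\neq 0$, so all zeros of $g$ lie in $\gf_q^*$. This lets me work entirely inside the cyclic group $\gf_q^*$ of order $q-1$.

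Next I would introduce the map $\phi:\gf_q^*\to \gf_q^*$ defined by $\phi(x)=x^{p^h+1}$, which is a group endomorphism since $\gf_q^*$ is abelian. Its image $\image(\phi)$ is by definition the set of $(p^h+1)$-th powers in $\gf_q^*$, and the number of zeros of $g$ equals the cardinality of the fibre $\phi^{-1}(-c)$. If $-c\notin \image(\phi)$, then this fibre is empty and $N_g=0$, establishing the first direction and forcing the converse as well.

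For the case $-c\in\image(\phi)$, pick any $x_0\in\gf_q^*$ with $\phi(x_0)=-c$. Then every solution has the form $x=x_0 u$ where $u$ ranges over $\ker(\phi)=\{u\in\gf_q^*:u^{p^h+1}=1\}$, so $N_g=|\ker(\phi)|$. At this point I would invoke Lemma \ref{lem-s} with $n=p^h+1$: the polynomial $x^{p^h+1}-1$ has exactly $\gcd(p^h+1,q-1)$ zeros in $\gf_q$, all of which necessarily lie in $\gf_q^*$. This immediately gives $N_g=\gcd(p^h+1,q-1)$.

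I do not expect a genuine obstacle here; the lemma is essentially a restatement of the image/kernel dichotomy for the homomorphism $x\mapsto x^{p^h+1}$, together with one invocation of Lemma \ref{lem-s} to count the kernel. The only care point is the trivial observation that $x=0$ must be excluded before passing to $\gf_q^*$, so that the multiplicative-group argument applies without modification.
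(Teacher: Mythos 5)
Your proof is correct and is the standard argument: the paper itself does not prove this lemma but imports it by citation from the proof of Lemma 4 in the reference, and that argument is exactly the image/kernel counting for the endomorphism $x\mapsto x^{p^h+1}$ of the cyclic group $\gf_q^*$ that you give, with the kernel size supplied by Lemma \ref{lem-s}. No gaps; the exclusion of $x=0$ and the coset structure of the fibre are handled correctly.
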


\begin{lemma}\label{lem-ker}\cite{OVAL}
The trace function from $\gf_q$ to $\gf_p$ is defined by
 $$\tr_{q/p}(x)=x+x^{p}+x^{p^2}+\cdots+x^{p^{m-1}}.$$
 Then for $\alpha \in \gf_q$, $\tr_{q/p}(\alpha)=0$ if and only if $\alpha = \beta^p-\beta$ for some $\beta \in \gf_q$.
\end{lemma}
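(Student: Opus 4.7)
The plan is to establish both directions of this Artin--Schreier--type equivalence using the additivity of $\tr_{q/p}$ together with a dimension count over $\gf_p$. First I would view both the trace $\tr_{q/p}: \gf_q \to \gf_p$ and the map $\phi: \gf_q \to \gf_q$ defined by $\phi(\beta) = \beta^p - \beta$ as $\gf_p$-linear endomorphisms of the $m$-dimensional $\gf_p$-vector space $\gf_q$.

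For the ``if'' direction, the idea is to show $\tr_{q/p} \circ \phi = 0$ directly. By $\gf_p$-linearity it suffices to check $\tr_{q/p}(\beta^p) = \tr_{q/p}(\beta)$, and this should fall out of the observation that raising to the $p$-th power simply permutes the summands $\beta, \beta^p, \ldots, \beta^{p^{m-1}}$ cyclically, using $\beta^{p^m} = \beta$ for $\beta \in \gf_q$. This immediately gives $\image(\phi) \subseteq \ker(\tr_{q/p})$.

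For the reverse containment, I would close the inclusion by matching dimensions. The kernel of $\phi$ consists of those $\beta$ with $\beta^p = \beta$, which is exactly $\gf_p$, so $\dim_{\gf_p} \ker(\phi) = 1$ and hence $\dim_{\gf_p} \image(\phi) = m - 1$. On the other side, I would argue that the trace is a nonzero $\gf_p$-linear functional: the polynomial $x + x^p + \cdots + x^{p^{m-1}}$ has degree $p^{m-1} < p^m = |\gf_q|$, so it cannot vanish on all of $\gf_q$. Therefore $\tr_{q/p}$ is surjective onto $\gf_p$ and $\dim_{\gf_p} \ker(\tr_{q/p}) = m - 1$ as well. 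Combining with the forward inclusion pins down $\image(\phi) = \ker(\tr_{q/p})$, which is the stated equivalence.

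The main subtlety, such as it is, lies in ensuring that the trace is nontrivial; rather than appealing to general Galois-theoretic facts such as the surjectivity of trace for separable extensions, I would deliberately use the short degree-counting argument above so that the lemma remains self-contained within the framework used in the paper. Every other ingredient is just the routine additive structure of the Frobenius.
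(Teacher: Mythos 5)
Your proof is correct. The paper gives no proof of this lemma, citing it directly from Lidl--Niederreiter, and your argument (image of the Artin--Schreier map $\beta\mapsto\beta^p-\beta$ lies in $\ker(\tr_{q/p})$ via the Frobenius permuting the summands, followed by the dimension count $\dim\image(\phi)=m-1=\dim\ker(\tr_{q/p})$ using the degree bound $p^{m-1}<q$ to see the trace is nonzero) is precisely the standard proof found in that reference, so there is nothing to add.
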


\begin{lemma}\label{lem-abc}\cite{P}
Let $\gf_q$ be a finite field of characteristic 2 and let $f(x)=ax^2+bx+c \in \gf_q[x]$ be a polynomial of degree 2. Then
\begin{enumerate}
\item $f$ has exactly one root in $\gf_q$ if and only if $ b=0 $;
\item $f$ has exactly two roots in $\gf_q$ if and only if $ b \neq 0 $ and $\tr_{q/2}(\frac{ac}{b^2})=0$;
\item $f$ has no root in $\gf_q$ if and only if $ b \neq 0 $ and $\tr_{q/2}(\frac{ac}{b^2})=1$.
\end{enumerate}
\end{lemma}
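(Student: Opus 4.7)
The plan is to reduce the analysis to two independent cases according to whether $b$ vanishes, using only the facts that $a \neq 0$ (since $\deg f = 2$), the Frobenius map is a bijection on $\gf_q$, and Lemma \ref{lem-ker} applied with $p = 2$.

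First I would dispose of case (1). If $b = 0$, then $f(x) = ax^2 + c$, and a root satisfies $x^2 = c/a$. Since $\gf_q$ has characteristic $2$, the Frobenius map $x \mapsto x^2$ is an automorphism of $\gf_q$, so $c/a$ has a unique square root in $\gf_q$. Hence $f$ has exactly one root. Conversely, if $b \neq 0$, I will show $f$ has either $0$ or $2$ roots (never $1$), which combined with the above yields the ``iff'' in (1).

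Next, assume $b \neq 0$. The natural move is the substitution $x = (b/a)y$, which converts $f(x) = 0$ into
\[
\tfrac{b^2}{a}\bigl(y^2 + y\bigr) + c = 0,
\]
i.e. $y^2 + y = ac/b^2$ (using $-1 = 1$ in characteristic $2$, and $b^2/a \neq 0$). Since the substitution is a bijection of $\gf_q$, the number of roots of $f$ in $\gf_q$ equals the number of $y \in \gf_q$ solving this additive equation. By Lemma \ref{lem-ker} with $p = 2$, an element $\alpha \in \gf_q$ lies in the image of $\beta \mapsto \beta^2 - \beta = \beta^2 + \beta$ if and only if $\tr_{q/2}(\alpha) = 0$. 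Applied to $\alpha = ac/b^2$, this shows solutions exist iff $\tr_{q/2}(ac/b^2) = 0$, and in that case if $y_0$ is one solution, then $y_0 + 1$ is also a solution (and is distinct from $y_0$), while any solution differs from $y_0$ by a root of $y^2 + y = 0$, of which there are exactly two (namely $0$ and $1$). Hence there are exactly two roots when the trace is $0$, giving (2), and zero roots when the trace is $1$, giving (3).

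The argument is entirely routine; no step presents a real obstacle. The only mildly delicate point is justifying that the substitution $y = (a/b)x$ is well-defined and bijective, which requires both $a \neq 0$ and $b \neq 0$, and that in characteristic $2$ the two roots $0, 1$ of $y^2 + y = 0$ account for the additive ambiguity, ensuring that the solution set of $y^2 + y = \alpha$ has size exactly $0$ or $2$. Together with the case $b = 0$ handled by the Frobenius bijection, this covers all three statements.
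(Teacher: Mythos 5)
Your proof is correct. The paper itself gives no proof of this lemma---it is quoted directly from the cited reference [P]---and your argument (handling $b=0$ via bijectivity of the Frobenius map, and for $b\neq 0$ substituting $x=(b/a)y$ to reduce to the Artin--Schreier equation $y^2+y=ac/b^2$, then applying Lemma \ref{lem-ker} together with the observation that solutions come in pairs $\{y_0,y_0+1\}$) is exactly the standard derivation found in that reference. No gaps.
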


Let $m$ and $h$ be positive integers and $q=2^m$. Now we consider the zeros of polynomials
\begin{eqnarray*}
f(x)=ax^{2^h+1}+bx+c,\ a,b,c \in \gf_q
\end{eqnarray*}
and
\begin{eqnarray*}
g(x)=ax^{2^h+1}+bx^{2^h}+c,\ a,b,c \in \gf_q
\end{eqnarray*}
in $\gf_q$. If $a,b,c \in \gf_q^*$, then $f(x)$ can be reduced to
\begin{eqnarray*}
P_\gamma(x)=x^{2^h+1}+x+\gamma,
\end{eqnarray*}
by the substitution $x\longmapsto ux$, where $u^{2^h}= \frac{b}{a}$ and $\gamma=\frac{c}{au^{2^h+1}}\in \gf_q^*$.
If $a,b,c \in \gf_q^*$, then $g(x)$ can be reduced to
\begin{eqnarray*}
U_\ell(x)=x^{2^h+1}+x^{2^h}+\ell,
\end{eqnarray*}
by the substitution $x\longmapsto vx$, where $v= \frac{b}{a}$ and $\ell=\frac{c}{av^{2^h+1}}\in \gf_q^*$.

\begin{lemma}\label{lem-013}\cite{MAIN}
Let $N_\gamma$ denote the number of zeros of $P_\gamma(x)$ in $\gf_q$ with $\gamma\in \gf_q^*$. If $\gcd(h , m)=1$, then $N_\gamma$ is equal to $0, 1$ or $3$.
\end{lemma}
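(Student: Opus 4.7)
The plan is to fix a root $x_1 \in \gf_q$ of $P_\gamma$ (if one exists) and show that the remaining $\gf_q$-roots form a set of size $0$ or $2$, giving $N_\gamma \in \{1,3\}$; combined with the vacuous case $N_\gamma = 0$, this yields the claim. The key will be to reduce the search for ``other roots'' to a $\gf_2$-linearised affine equation whose kernel has size exactly $2$ thanks to the hypothesis $\gcd(h,m) = 1$.

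First I would observe that $0$ and $1$ are not roots of $P_\gamma$ since $P_\gamma(0) = P_\gamma(1) = \gamma \neq 0$ in characteristic $2$, so any root $x_1 \in \gf_q$ satisfies $x_1 + 1 \in \gf_q^*$; this will be needed for the substitution below. Writing an arbitrary second root as $x = x_1 + v$ with $v \neq 0$ and subtracting $P_\gamma(x_1) = 0$ from $P_\gamma(x) = 0$, I would use $(x_1+v)^{2^h+1} = (x_1^{2^h}+v^{2^h})(x_1+v)$ (Frobenius in characteristic $2$) to obtain
\[
v \cdot \bigl(x_1^{2^h} + v^{2^h-1} x_1 + v^{2^h} + 1\bigr) = 0.
\]
Using $x_1^{2^h} + 1 = (x_1+1)^{2^h}$, the other $\gf_q$-roots of $P_\gamma$ are in bijection with the nonzero solutions $v \in \gf_q^*$ of
\[
R_{x_1}(v) := v^{2^h} + x_1 v^{2^h-1} + (x_1+1)^{2^h} = 0.
\]

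The crucial step is the rational substitution $v = (x_1+1)/z$ with $z \in \gf_q^*$. Multiplying $R_{x_1}(v) = 0$ through by $z^{2^h}/(x_1+1)^{2^h}$ yields the $\gf_2$-affine equation
\[
z^{2^h} + \mu z + 1 = 0, \qquad \mu := \frac{x_1}{x_1+1} \in \gf_q^*.
\]
The map $L'(z) = z^{2^h} + \mu z$ is $\gf_2$-linear on $\gf_q$ with kernel $\{0\} \cup \{z \in \gf_q^* : z^{2^h-1} = \mu\}$. Since $\gcd(2^h - 1,\, 2^m - 1) = 2^{\gcd(h,m)} - 1 = 1$, the map $z \mapsto z^{2^h-1}$ is a bijection on $\gf_q^*$, so there is a unique $z_0 \in \gf_q^*$ with $z_0^{2^h-1} = \mu$. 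Hence $|\ker L'| = 2$, and the affine equation $L'(z) = 1$ has either $0$ or $2$ solutions in $\gf_q$.

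Translating back through $v = (x_1+1)/z$, the equation $R_{x_1}(v) = 0$ has $0$ or $2$ nonzero $\gf_q$-solutions, so $N_\gamma \in \{1,3\}$ whenever $P_\gamma$ admits at least one $\gf_q$-root and $N_\gamma = 0$ otherwise. The main obstacle I anticipate is spotting the right rational substitution $v = (x_1+1)/z$ that linearises $R_{x_1}$; once it is in hand the conclusion reduces to the standard kernel/image count for a $\gf_2$-linear map, with $\gcd(h,m) = 1$ being precisely the hypothesis that forces $|\ker L'| = 2$, so that any $\gf_q$-solution automatically comes with a partner.
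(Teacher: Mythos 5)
Your argument is correct. Note, however, that the paper offers no proof of this lemma at all: it is stated with a citation to the reference \cite{MAIN} (Kim--Mesnager, \emph{Solving $x^{2^k+1}+x+a=0$ in $\gf_{2^n}$ with $\gcd(n,k)=1$}), where the zero-count $N_\gamma\in\{0,1,3\}$ is part of a much finer analysis (criteria for each case, explicit root formulas). What you supply is a self-contained elementary proof of exactly the statement needed. Your steps all check out: $P_\gamma(0)=P_\gamma(1)=\gamma\neq 0$ guarantees $x_1\notin\{0,1\}$, so both $x_1$ and $x_1+1$ are invertible; the difference $P_\gamma(x_1+v)-P_\gamma(x_1)$ factors as $v\bigl(v^{2^h}+x_1v^{2^h-1}+(x_1+1)^{2^h}\bigr)$ by Frobenius; the substitution $v=(x_1+1)/z$ turns this into the affine $\gf_2$-linear equation $z^{2^h}+\mu z=1$ with $\mu=x_1/(x_1+1)\in\gf_q^*$; and $\gcd(2^h-1,2^m-1)=2^{\gcd(h,m)}-1=1$ makes $z\mapsto z^{2^h-1}$ a bijection of $\gf_q^*$, forcing $\lvert\ker L'\rvert=2$, so the fibre over $1$ has $0$ or $2$ elements, all automatically nonzero since $L'(0)=0\neq1$. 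Hence $N_\gamma=1+(0\text{ or }2)$ when a root exists. This is essentially the classical linearization argument for projective polynomials (Bluher-style), and it buys the reader a proof that can be read without consulting \cite{MAIN}; the cited source buys the stronger, more detailed classification that this lemma does not require.
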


\begin{lemma}\label{lem-f013}
Let $h$ and $m$ be positive integers with $\gcd(h , m)=1 $ and $q=2^m$. Denote by $N_f$ the number of zeros of $f(x)=ax^{2^h+1}+bx+c$ in $\gf_q^*$, where $(a,b,c) \neq (0,0,0), a,b,c \in \gf_q$. Then $N_f\in \{0,1,3\}$.
\end{lemma}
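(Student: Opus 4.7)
The plan is a case analysis on which of $a, b, c$ vanish, with the generic case $a,b,c \in \gf_q^*$ reduced to Lemma~\ref{lem-013}. A crucial standing fact is that $\gcd(h,m)=1$ makes the Frobenius $x \mapsto x^{2^h}$ a bijection of $\gf_q$, and $\gcd(2^h+1,q-1)\in\{1,3\}$ by Lemma~\ref{lem-num1} with $\ell=1$.

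First I would dispose of the degenerate cases. If $a=0$, then $f$ is linear, and straightforward inspection gives $N_f\in\{0,1\}$. If $a\neq 0$ and $c=0$, factor $f(x)=x(ax^{2^h}+b)$: when $b=0$ the only root of $f$ is $x=0$, giving $N_f=0$, while when $b\neq 0$ the equation $x^{2^h}=b/a$ has a unique nonzero solution by the Frobenius bijection, so $N_f=1$. If $a,c\neq 0$ and $b=0$, then $f(x)=ax^{2^h+1}+c$; after dividing by $a$, Lemma~\ref{lem-ph1} yields $N_f\in\{0,\gcd(2^h+1,q-1)\}\subseteq\{0,1,3\}$, and all such roots are automatically nonzero because $c\neq 0$.

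The substantive case is $a,b,c\in\gf_q^*$. Since $\gcd(2^h,q-1)=1$, there is a unique $u\in\gf_q^*$ with $u^{2^h}=b/a$. Substituting $x=uy$ in $f$ and dividing by $au^{2^h+1}$ yields
\[
P_\gamma(y)=y^{2^h+1}+y+\gamma,\qquad \gamma=\frac{c}{au^{2^h+1}}\in\gf_q^*.
\]
The map $y\mapsto uy$ is a bijection of $\gf_q$ fixing $0$, so the number of zeros of $f$ in $\gf_q^*$ equals the number of zeros of $P_\gamma$ in $\gf_q^*$. Because $\gamma\neq 0$, the element $y=0$ is not a root of $P_\gamma$, hence this count is exactly $N_\gamma$ in the sense of Lemma~\ref{lem-013}, which asserts $N_\gamma\in\{0,1,3\}$.

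There is no real obstacle in this proof: the heart of the statement is Lemma~\ref{lem-013}, which is cited, and everything else amounts to verifying that the standard substitution is well defined and that the degenerate sub-cases are exhaustively enumerated. The only pitfall I would watch is the $c=0$ branch, where the reduction to $P_\gamma$ fails (it would produce $\gamma=0$, outside the scope of Lemma~\ref{lem-013}), and where the conclusion must instead be read off directly from the factorization $f(x)=x(ax^{2^h}+b)$ using the Frobenius permutation property.
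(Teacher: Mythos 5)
Your proposal is correct and follows essentially the same route as the paper: a case split on which of $a,b,c$ vanish, with the degenerate cases handled by factorization together with Lemmas~\ref{lem-num1} and~\ref{lem-ph1}, and the generic case $a,b,c\in\gf_q^*$ reduced to $P_\gamma$ and Lemma~\ref{lem-013}. Your version is in fact slightly more careful than the paper's (e.g.\ noting that $\gamma\neq 0$ ensures the count over $\gf_q^*$ agrees with the count over $\gf_q$ in Lemma~\ref{lem-013}), but the argument is the same.
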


\begin{proof}
It is obvious that $N_f$ is equal to 0 or 1 if $a=0$ or $b=c=0$. Now let $a\neq0$.
 If $b \neq 0$ and $c=0$, then $f(x)=ax^{2^h+1}+bx=ax(x^{2^h}+\frac{b}{a})$. Since $q$ is even,  it is clear that $f(x)$ has only one zero in $\gf_q^*$. If $b=0$ and $c \neq 0$, then $f(x)=ax^{2^h+1}+c=a(x^{2^h+1}+\frac{c}{a})$ and $N_f\in \{0,1,3\}$ by Lemmas \ref{lem-num1} and \ref{lem-ph1}. If  $b \neq 0$ and $c \neq 0$, by Lemma \ref{lem-013}, $N_f=N_\gamma\in \{0,1,3\}$. Then the desired conclusion follows.
\end{proof}

\begin{lemma}\label{lem-013plus}
Let $N_\ell$ denote the number of zeros of $U_\ell(x)$ in $\gf_q^*$ with $\ell\in \gf_q^*$. If $\gcd(h , m)=1$, then $N_\ell$ is equal to $0, 1$ or $3$.
\end{lemma}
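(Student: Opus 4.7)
The plan is to reduce the problem to Lemma~\ref{lem-f013} via a reciprocal substitution. First, I would observe that $U_\ell(0)=\ell\neq 0$, so every zero of $U_\ell$ in $\gf_q$ automatically lies in $\gf_q^*$. This permits the substitution $x\mapsto 1/y$, which is a bijection on $\gf_q^*$. Setting $x=1/y$ and clearing denominators gives
$$
y^{2^h+1}\,U_\ell(1/y)=1+y+\ell\,y^{2^h+1},
$$
so a nonzero element $x=1/y$ satisfies $U_\ell(x)=0$ if and only if $y$ satisfies
$$
\tilde f(y):=\ell y^{2^h+1}+y+1=0.
$$
Hence $N_\ell$ equals the number of zeros of $\tilde f$ in $\gf_q^*$.

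Next I would apply Lemma~\ref{lem-f013} to $\tilde f$. Since $\ell\in\gf_q^*$ and the coefficients of $y$ and the constant term are $1\in\gf_q^*$, the polynomial $\tilde f$ has the shape $ay^{2^h+1}+by+c$ with $a,b,c\in\gf_q^*$, and the hypothesis $\gcd(h,m)=1$ is exactly what Lemma~\ref{lem-f013} requires. That lemma immediately yields that the number of zeros of $\tilde f$ in $\gf_q^*$ lies in $\{0,1,3\}$, and consequently $N_\ell\in\{0,1,3\}$.

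I do not expect any genuine obstacle: the only delicate point is justifying that the reciprocal substitution does not miss zeros or create spurious ones, and this is handled by the opening remark that $U_\ell(0)\neq 0$ combined with the fact that $y\mapsto 1/y$ is an involution on $\gf_q^*$. In effect, the lemma is a direct corollary of Lemma~\ref{lem-f013} once one recognises that $U_\ell$ is the ``reciprocal'' of a polynomial of the $P_\gamma$-type studied there.
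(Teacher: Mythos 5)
Your proof is correct. It differs from the paper's in the choice of change of variables: the paper exploits the characteristic-$2$ identity $(x+1)^{2^h}=x^{2^h}+1$ to show $P_\ell(x_0+1)=U_\ell(x_0)$, so the shift $x\mapsto x+1$ puts $U_\ell$ directly into the form $P_\gamma(x)=x^{2^h+1}+x+\gamma$ covered by Lemma~\ref{lem-013}, giving $N_\ell=N_\gamma$ at once. You instead use the reciprocal map $x\mapsto 1/x$, which turns $U_\ell$ into $\ell y^{2^h+1}+y+1$ and lets you quote Lemma~\ref{lem-f013} (itself a consequence of Lemma~\ref{lem-013}); since Lemma~\ref{lem-f013} is established before the present lemma and does not depend on it, there is no circularity. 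Both are one-step reductions to the same underlying trinomial count, so neither buys extra generality, but your version is slightly more careful on the bookkeeping: Lemma~\ref{lem-f013} counts zeros in $\gf_q^*$, which matches the statement being proved, whereas the paper's route tacitly uses that $0$ and $1$ are not roots of $P_\ell$ (true because $\ell\neq 0$) when passing between counts over $\gf_q$ and $\gf_q^*$ --- a point your observation that $U_\ell(0)=\ell\neq 0$ handles explicitly.
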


\begin{proof}
Let $x_0$ be a zero of $U_\ell(x)$ in $\gf_q^*$, then it is easy to prove that $P_\ell(x_0+1)=U_\ell(x_0)=0$.
Let $x_0$ be a zero of $P_\gamma(x)$ in $\gf_q^*$, then we have $U_\gamma(x_0+1)=P_\gamma(x_0)=0$.
Then it is easy to deduce $N_\ell=N_\gamma$. The desired conclusion follows from Lemma \ref{lem-013}.
\end{proof}

\begin{lemma}\label{lem-g013}
Let $h$ and $m$ be positive integers with $\gcd(h , m)=1 $ and $q=2^m$. Denote by $N_g$ the number of zeros of $g(x)=ax^{2^h+1}+bx^{2^h}+c$ in $\gf_q^*$, where $(a,b,c) \neq (0,0,0), a,b,c \in \gf_q$. Then $N_g\in \{0,1,3\}$.
\end{lemma}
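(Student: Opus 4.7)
The plan is to emulate the proof of Lemma \ref{lem-f013} by dispatching the degenerate sub-cases one at a time and then reducing the only nontrivial case $abc \neq 0$ to an application of Lemma \ref{lem-013plus}. First I would handle the vanishing of $a$: if $a=0$ then $g(x) = bx^{2^h}+c$, and since $\gcd(2^h, q-1) = 1$ the map $x \mapsto x^{2^h}$ is a bijection of $\gf_q$, so this linearized equation has at most one root in $\gf_q^*$ and certainly $N_g \in \{0,1\} \subset \{0,1,3\}$.

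Assume now that $a \neq 0$. If $b = c = 0$ then $g(x) = ax^{2^h+1}$ has only the root $0$, so $N_g = 0$. If $b \neq 0$ and $c=0$ I would factor $g(x) = x^{2^h}(ax+b)$, whose unique nonzero root is $x = b/a$, giving $N_g = 1$. If $b=0$ and $c \neq 0$, the equation becomes $x^{2^h+1} = -c/a$; combining Lemma \ref{lem-num1} (which, together with $\gcd(h,m)=1$, pins $\gcd(2^h+1,q-1)$ at $1$ or $3$ according as $m$ is odd or even) with Lemma \ref{lem-ph1} yields $N_g \in \{0,1,3\}$.

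Finally, the main case $abc \neq 0$ is exactly the one handled by the substitution $x \mapsto vx$ with $v = b/a$ introduced just before the lemma statement: it turns $g$ into a scalar multiple of $U_\ell(x) = x^{2^h+1}+x^{2^h}+\ell$, with $\ell = c/(av^{2^h+1}) \in \gf_q^*$. Since $v \in \gf_q^*$ this substitution is a bijection of $\gf_q^*$, so $N_g$ equals the number of zeros of $U_\ell$ in $\gf_q^*$, and Lemma \ref{lem-013plus} then delivers $N_g \in \{0,1,3\}$ at once.

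I do not anticipate a genuine obstacle here, because the heart of the argument — the identification between zeros of $U_\ell$ and of $P_\gamma$ via the shift $x \mapsto x+1$ — is already encapsulated in Lemma \ref{lem-013plus}. The only points requiring care are bookkeeping: verifying that each degenerate sub-case produces at most one nonzero root, and checking that the normalization $v = b/a$ really sends $\gf_q^*$ bijectively to itself and yields $\ell \neq 0$ so that Lemma \ref{lem-013plus} is applicable.
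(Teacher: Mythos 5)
Your proposal is correct and follows exactly the route the paper intends: the paper's own proof is a one-line reduction to Lemmas \ref{lem-num1}, \ref{lem-ph1} and \ref{lem-013plus} ``similarly to the proof of Lemma \ref{lem-f013}'', and your case analysis (degenerate cases for $a=0$, $bc=0$, then normalization to $U_\ell$ via $x\mapsto vx$ for $abc\neq 0$) is precisely that argument written out in full.
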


\begin{proof}
Similarly to the proof of Lemma \ref{lem-f013}, the desired conclusion follows from Lemmas \ref{lem-num1}, \ref{lem-ph1} and \ref{lem-013plus}.
\end{proof}

\begin{lemma}\label{lem-ab}
Let $q=2^m$, where $m$ is an odd integer with $m \geq 3$. Then for two distinct elements $a,b\in \gf_q$,  the polynomial $u(x)=x^2+(a+b)x+a^2+b^2+ab$ has no root in $\gf_q$. In other words, for any $c\in \gf_q$, we have $a^2+b^2+c^2+ab+ac+bc \neq 0$. Particularly, if $c=0$, then $a^2+b^2+ab \neq 0$.
\end{lemma}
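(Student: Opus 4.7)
The plan is to apply Lemma \ref{lem-abc} directly to $u(x)=x^2+(a+b)x+(a^2+b^2+ab)$. Since $a\neq b$, in characteristic $2$ the linear coefficient $B=a+b$ is nonzero, so we are in the regime where the existence of a root is governed entirely by whether $\tr_{q/2}(AC/B^2)$ vanishes, with $A=1$ and $C=a^2+b^2+ab$. Showing that this trace equals $1$ will simultaneously give us that $u$ has no root in $\gf_q$, which is precisely the statement that $u(c)=a^2+b^2+c^2+ab+ac+bc$ is nonzero for every $c\in\gf_q$; the special case $c=0$ is then immediate.

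The first key step is to simplify
\[
\frac{a^2+b^2+ab}{(a+b)^2}=1+\frac{ab}{(a+b)^2},
\]
using $(a+b)^2=a^2+b^2$ in characteristic $2$. Next, I would introduce the substitution $y=a/(a+b)\in\gf_q$, which satisfies $b/(a+b)=1+y$, so that $ab/(a+b)^2=y(1+y)=y+y^2$. Consequently
\[
\frac{a^2+b^2+ab}{(a+b)^2}=1+y+y^2.
\]
This rewriting is the crux: it turns the trace computation into one involving only a single element $y\in\gf_q$ together with the constant $1$.

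The second key step is to evaluate the trace. Since the $2$-power Frobenius preserves trace, $\tr_{q/2}(y^2)=\tr_{q/2}(y)$, and these cancel in characteristic $2$, leaving
\[
\tr_{q/2}\!\left(\tfrac{a^2+b^2+ab}{(a+b)^2}\right)=\tr_{q/2}(1)=m\bmod 2.
\]
Because $m$ is odd, this trace equals $1$, and Lemma \ref{lem-abc}(3) yields that $u$ has no root in $\gf_q$.

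I do not anticipate a serious obstacle here; the only subtlety is spotting the substitution $y=a/(a+b)$ that converts $ab/(a+b)^2$ into the Artin--Schreier-type expression $y+y^2$, since after that the trace collapses automatically. The parity hypothesis on $m$ enters only at the very last step, through $\tr_{q/2}(1)=1$, which also explains why the statement would fail for even $m$.
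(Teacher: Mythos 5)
Your proof is correct and follows essentially the same route as the paper: both reduce to Lemma \ref{lem-abc} with $B=a+b\neq 0$, write $\frac{a^2+b^2+ab}{(a+b)^2}=1+\frac{ab}{(a+b)^2}$, and use the substitution $y=a/(a+b)$ to recognize $\frac{ab}{(a+b)^2}=y+y^2$ as an Artin--Schreier element of trace zero, leaving $\tr_{q/2}(1)=1$ for odd $m$. The only cosmetic difference is that you justify $\tr_{q/2}(y+y^2)=0$ directly via Frobenius-invariance of the trace, whereas the paper cites its Lemma \ref{lem-ker}.
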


\begin{proof}
It is obvious that
$$\tr_{q/2}\left(\frac{a^2+b^2+ab}{(a+b)^2}\right) = \tr_{q/2}(1) + \tr_{q/2}\left(\frac{ab}{a^2+b^2}\right). $$
Let $\beta = \frac{a}{a+b} \in \gf_q$. It is clear that $\frac{ab}{a^2+b^2}=\beta^2-\beta$. By Lemma \ref{lem-ker},
$$ \tr_{q/2}\left(\frac{ab}{a^2+b^2}\right) =0.$$
When $m$ is odd, $\tr_{q/2}\left(\frac{a^2+b^2+ab}{(a+b)^2}\right) = \tr_{q/2}(1)=1$. Then by Lemma \ref{lem-abc}, $u(x)$ has no root in $\gf_q$. The proof is completed.
\end{proof}

\subsection{Generalized Vandermonde determinant}
The following  provides a general equation for a generalized Vandermonde determinant with one deleted row in terms of the elementary symmetric polynomial.

\begin{lemma}\label{lem-Vandermonde}\cite[Lemma 17]{LDMT}\cite[Page 466]{ERH}
For each $\ell$ with $0 \leq \ell \leq n$, it holds that
\begin{eqnarray*}
\left|
\begin{array}{ccccc}
1 & 1 & \cdots & 1  & 1\\
u_1 & u_2 & \cdots & u_{n-1} & u_{n}\\
 \vdots   & \vdots   & \vdots                &  \vdots     & \vdots          \\
u_1^{\ell-1} & u_2^{\ell-1} & \cdots & u_{n-1}^{\ell-1} & u_{n}^{\ell-1}\\
u_1^{\ell+1} & u_2^{\ell+1} & \cdots & u_{n-1}^{\ell+1} & u_{n}^{\ell+1}\\
 \vdots   & \vdots   & \vdots                &  \vdots     & \vdots          \\
u_1^n & u_2^n & \cdots & u_{n-1}^n & u_{n}^n
\end{array}
\right|
=\left(\prod_{1 \leqslant i<j \leqslant n}(u_j-u_i)\right)\sigma_{n-\ell}(u_1, u_2, \ldots, u_n),
\end{eqnarray*}
where
$$\sigma_{n-\ell}(u_1, u_2, \ldots, u_n)= \sum_{1 \leqslant i_1< \ldots <i_{n-\ell} \leqslant n}u_{i_1}\cdots u_{i_{n-\ell}}$$
is the $(n-\ell)$-th elementary symmetric polynomial over the set $\{u_1, u_2, \ldots, u_n\}$.
\end{lemma}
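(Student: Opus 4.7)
The plan is to deduce the identity from the standard Vandermonde determinant by introducing one extra formal variable $x$ and extracting the coefficient of $x^\ell$ in two different ways.

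First, I would form the $(n+1) \times (n+1)$ matrix $V(x)$ whose $(i,j)$-entry is $u_j^{i-1}$ for $1 \leq i \leq n+1$ and $1 \leq j \leq n$, and whose last column is $(1, x, x^2, \ldots, x^n)^\top$. This is the classical Vandermonde matrix on the $n+1$ points $u_1, \ldots, u_n, x$, so the usual product formula yields
$$\det V(x) = \left(\prod_{1 \leq i < j \leq n}(u_j - u_i)\right) \prod_{j=1}^n (x - u_j).$$
Expanding the last product in terms of elementary symmetric polynomials, the coefficient of $x^\ell$ on the right-hand side is exactly $(-1)^{n-\ell}\, \sigma_{n-\ell}(u_1, \ldots, u_n) \prod_{1 \leq i < j \leq n}(u_j - u_i)$.

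On the other hand, I would compute $\det V(x)$ by Laplace expansion along the last column. The entry $x^\ell$ sits in row $\ell+1$, and removing row $\ell+1$ together with column $n+1$ leaves an $n \times n$ minor whose rows run through the powers $0, 1, \ldots, \ell-1, \ell+1, \ldots, n$ of $u_1, \ldots, u_n$; this is precisely the determinant $D_\ell$ appearing on the left-hand side of the lemma. Cofactor expansion therefore shows that the coefficient of $x^\ell$ in $\det V(x)$ equals $(-1)^{(\ell+1)+(n+1)} D_\ell = (-1)^{n+\ell} D_\ell$.

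Equating the two expressions and observing that $n+\ell$ and $n-\ell$ have the same parity, the signs cancel and the claimed formula drops out. The only delicate point is the sign bookkeeping: matching the cofactor sign $(-1)^{n+\ell}$ from the Laplace expansion with the sign $(-1)^{n-\ell}$ produced by expanding $\prod_j(x-u_j)$ into elementary symmetric polynomials, so that $\sigma_{n-\ell}$ appears with coefficient $+1$. Since the statement is a polynomial identity in the $u_i$ with integer coefficients, it holds over any commutative ring, and no assumption on the $u_i$ being distinct or lying in a field is required.
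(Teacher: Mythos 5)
Your proof is correct. Note that the paper does not actually prove this lemma: it is imported by citation from \cite[Lemma 17]{LDMT} and \cite[Page 466]{ERH}, so there is no in-paper argument to compare against. Your derivation --- bordering the array with an extra column $(1,x,\ldots,x^n)^\top$ to form the full Vandermonde matrix on $u_1,\ldots,u_n,x$, then matching the coefficient of $x^\ell$ from the product formula $\bigl(\prod_{i<j}(u_j-u_i)\bigr)\prod_{j=1}^n(x-u_j)$ against the Laplace expansion along that column --- is the standard self-contained route to this identity, and your sign bookkeeping is right: the cofactor sign $(-1)^{(\ell+1)+(n+1)}=(-1)^{n+\ell}$ cancels against the sign $(-1)^{n-\ell}$ attached to $\sigma_{n-\ell}$ in the expansion of $\prod_j(x-u_j)$, since $n+\ell$ and $n-\ell$ have the same parity. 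Your closing observation that this is a polynomial identity with integer coefficients, hence valid over any commutative ring, is also worth keeping, since the paper applies the lemma over $\gf_{2^m}$ where no sign distinctions survive anyway.
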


%\section{Constructions of near MDS codes holding 2-designs}

%In this section, let $q=2^m$ with $m \geq 3$. For convenience, let $\dim(\cC)$ and $d(\cC)$ respectively denote the dimension and minimal distance of a linear code $\cC$. Let $\alpha_1=1,\alpha_2,\cdots,\alpha_{q-1}$ be all elements of $\gf_q^*$. In particular, let $\alpha$ be a generator of $\gf_q^*$, then $\alpha_{i}=\alpha^{i}$ for $1 \leq i \leq q-1$.

\section{Two families of 3-dimensional near MDS codes holding 2-designs}\label{sect-code1}

In this section, let $q=2^m$ with $m \geq 3$. Hereafter, let $\dim(\cC)$ and $d(\cC)$ respectively denote the dimension and minimum distance of a linear code $\cC$.  Let $\alpha$ be a generator of $\gf_q^*$ and $\alpha_{i}:=\alpha^{i}$ for $1 \leq i \leq q-1$. Then $\alpha_{q-1}=1$.

Let $h$ be a positive integer with $\gcd(m,h)=1$. Define
\begin{eqnarray*}\label{eqn-construction1}
D=\left[
\begin{array}{lllll}
1 & 1 & \cdots & 1  & 1\\
\alpha_1 & \alpha_2 & \cdots & \alpha_{q-2} & \alpha_{q-1}\\
\alpha_1^{2^h+1} & \alpha_2^{2^h+1} & \cdots & \alpha_{q-2}^{2^h+1} & \alpha_{q-1}^{2^h+1}
\end{array}
\right].
\end{eqnarray*}
 $D$ is a $3$ by $q-1$ matrix over $\gf_q$. Let $\cC_D$ be the linear code over $\gf_q$ generated by $D$.
We will show that $\cC_D$ is an NMDS code and both $\cC_D$ and its dual $\cC_D^\perp$ support 2-designs.

\begin{theorem}\label{thm-0}
Let $q=2^m$ with $m \geq 3$, $h$ be a positive integer with $\gcd(m,h)=1$. Then $\cC_D$ is a $[q-1, 3, q-4]$ NMDS code over $\gf_q$ with weight enumerator
\begin{eqnarray*}
A(z)=1 + \frac{(q-1)^2(q-2)}{6} z^{q-4} + \frac{(q-1)^2(q+4)}{2} z^{q-2}+\frac{(q-1)(q^2+8)}{3} z^{q-1}.
\end{eqnarray*}
Moreover, the minimum weight codewords of $\cC_D$ support a 2-$(q-1,q-4,\frac{(q-4)(q-5)}{6})$ simple design and the minimum weight codewords of $\cC_D^\perp$ support a 2-$(q-1,3,1)$ simple design, i.e. a Steiner system $S(2,3,q-1)$. Furthermore, the codewords of weight 4 in $\cC_D^\perp$ support a 2-$(q-1, 4, \frac{(q-4)(q-7)}{2})$ simple design.
\end{theorem}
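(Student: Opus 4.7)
The proof splits into three pieces. First I would identify each codeword of $\cC_D$ with a triple $(a,b,c) \in \gf_q^3$, writing $\bc = (f(\alpha_i))_{i=1}^{q-1}$ with $f(x) = a + bx + cx^{2^h+1}$, so that $\wt(\bc) = (q-1) - N_f$. Since $\deg f < q$, this identification is a bijection and $\dim \cC_D = 3$. By Lemma \ref{lem-f013}, $N_f \in \{0,1,3\}$ for any nonzero $(a,b,c)$, so only the three nonzero weights $q-4, q-2, q-1$ occur in $\cC_D$. The weight distribution would then be pinned down by two double-counting identities: $\sum_{(a,b,c)\neq\bzero} N_f = (q-1)(q^2-1)$, obtained by summing over zeros $x \in \gf_q^*$ (each one cutting $(a,b,c)$ to a $2$-dimensional subspace), and $\sum_{(a,b,c)\neq\bzero}\binom{N_f}{2} = \binom{q-1}{2}(q-1)$, obtained by summing over pairs of distinct zeros (each pair cutting $(a,b,c)$ to a $1$-dimensional subspace). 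These yield the claimed $A_{q-4}$ first, then $A_{q-2}$ and $A_{q-1}$. To conclude the NMDS property, observe that the all-ones first row of $D$ and the pairwise distinct second entries give $d(\cC_D^\perp) \geq 3$, while $A_{q-4} > 0$ exhibits three dependent columns, giving $d(\cC_D^\perp) \leq 3$.

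Next I would build the Steiner triple system. Given any pair of distinct nonzero elements $x_i, x_j$, the linear system $f(x_i) = f(x_j) = 0$ has a $1$-dimensional solution space in $\gf_q^3$; by Lemma \ref{lem-f013}, any nonzero $f$ in this space with $N_f \geq 2$ must satisfy $N_f = 3$, producing a unique third zero $x_k$. Hence the weight-$3$ supports of $\cC_D^\perp$ form a $2$-$(q-1,3,1)$ Steiner triple system. The matching $2$-$(q-1, q-4, (q-4)(q-5)/6)$ design on the minimum-weight codewords of $\cC_D$ then follows from Lemma \ref{lem-minweight} and the complementary-design formula \eqref{eqn-complementarydesign}.

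The main obstacle is the $2$-design carried by the weight-$4$ codewords of $\cC_D^\perp$, since Theorem \ref{the-AM} does not apply: the quantity $w - \lfloor(w+q-2)/(q-1)\rfloor$ equals $3$ at $w = 4$, which is not strictly less than $d^\perp = 3$. I would instead argue combinatorially. A $4$-set $S \subset \gf_q^*$ supports a weight-$4$ codeword if and only if the unique (up to scalar) dependency among the corresponding four columns of $D$ uses all four coefficients, which in turn is equivalent to $S$ containing no dependent $3$-subset, i.e., no Steiner triple. Since any two Steiner triples share at most one point, every $4$-set contains at most one Steiner triple, so the ``bad'' $4$-sets through a fixed pair $\{x_i, x_j\}$ partition according to the unique Steiner triple they contain: either $\{x_i, x_j, x_k\}$ (yielding $q-4$ choices for the fourth point), or a triple $\{x_i, y, z\}$ not hitting $x_j$ (yielding $r - 1 = (q-4)/2$ such triples, where $r = (q-2)/2$ is the STS replication number), or symmetrically a triple $\{x_j, y, z\}$ not hitting $x_i$ (another $(q-4)/2$). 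Subtracting the $2(q-4)$ bad $4$-sets from the total $\binom{q-3}{2}$ leaves exactly $(q-4)(q-7)/2$ good $4$-sets through $\{x_i, x_j\}$, a count independent of the pair. Since each good $4$-set carries $q-1$ weight-$4$ codewords (scalar multiples of a common null vector), the supports form the desired $2$-$(q-1, 4, (q-4)(q-7)/2)$ design.
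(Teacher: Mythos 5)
Your proposal is correct, and in several places it takes a genuinely different (and more self-contained) route than the paper. For the weight distribution, the paper solves the first three Pless power moments, whereas you double-count incidences $(f,x)$ and $(f,\{x,y\})$ directly; these are equivalent in substance, but your version makes transparent why only the column-pair independence and $N_f\in\{0,1,3\}$ are needed. For $d(\cC_D^\perp)\leq 3$ the paper constructs explicit dependent column triples via a case split on the parity of $m$; you instead observe that $A_{q-4}>0$ already forces three dependent columns, which is cleaner and avoids the case analysis (and is not circular, since your computation of $A_{q-4}$ does not presuppose $d^\perp=3$). For the two designs from minimum-weight codewords, the paper invokes the Assmus--Mattson Theorem (which does apply here, since only the weight $q-4$ lies in $\{1,\dots,q-3\}$), while you build the Steiner system $S(2,3,q-1)$ by hand from the one-dimensional solution space of $f(x_i)=f(x_j)=0$ and then transfer to $\cC_D$ via Lemma \ref{lem-minweight} and the complementary-design formula; both are valid. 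The most substantial divergence is the weight-$4$ design: the paper cites the generalized Assmus--Mattson theorem of \cite{TDX} for the existence of the $2$-design and only verifies simplicity and $\lambda$ by a rank computation plus Lemma \ref{lem-nmdsweight}, whereas you characterize the weight-$4$ supports as exactly the $4$-sets containing no block of the Steiner triple system (rank $2$ being excluded because two blocks cannot share two points) and then count the bad $4$-sets through a fixed pair as $(q-4)+2\cdot\frac{q-4}{2}=2(q-4)$, giving $\binom{q-3}{2}-2(q-4)=\frac{(q-4)(q-7)}{2}$ directly. This is a complete elementary replacement for the external citation and yields $\lambda$ without first computing $A_4^\perp$; the only cosmetic point worth tightening is that the injectivity of $(a,b,c)\mapsto\bc_{a,b,c}$ is better justified by $N_f\leq 3<q-1$ (Lemma \ref{lem-f013}) than by a degree bound, since $2^h+1$ need not be less than $q$.
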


\begin{proof}
We first prove that $\dim(\cC_D)=3$. Let $\bg_1$, $\bg_2$ and $\bg_3$ respectively represent the first, second and third rows of $D$. Assume that there exist elements $a,b,c \in \gf_q$ with $(a,b,c)\neq (0,0,0)$ such that $c \bg_1 + b \bg_2 + a \bg_3=\mathbf{0}$. Then
\begin{eqnarray*}
\left\{
\begin{array}{c}
a\alpha_1^{2^h+1}+b\alpha_1+c=0, \\
a\alpha_2^{2^h+1}+b\alpha_2+c=0, \\
          \vdots                 \\
a\alpha_{q-1}^{2^h+1}+b\alpha_{q-1}+c=0.
\end{array}
\right.
\end{eqnarray*}
This contradicts with the fact that the polynomial $f(x)=ax^{2^h+1}+bx+c$ has at most 3 zeros in $\gf_q^*$ by Lemma \ref{lem-f013}.   Hence $\bg_1$, $\bg_2$ and $\bg_3$ are linearly independent over $\gf_q$ and $\dim(\cC_D)=3$.

We then prove that $\cC_D^\perp$ has parameters $[q-1, q-4, 3]$. Obviously, $\dim(\cC_D^\perp)=(q-1)-3=q-4$. It is clear that each column of $D$ is nonzero and any two columns of $D$ are linearly independent over $\gf_q$. Then $d(\cC_D^\perp) > 2$. Let $x_1, x_2, x_3$ be three pairwise different elements in $\gf_q^*$. Consider the following submatrix as
\begin{eqnarray*}
D_{1}=\left[
\begin{array}{lll}
1 & 1 & 1 \\
x_1 & x_2 & x_3 \\
x_1^{2^h+1} & x_2^{2^h+1} & x_3^{2^h+1}
\end{array}
\right].
\end{eqnarray*}
Then $D$ has 3 columns that are linearly dependent if and only if $|D_{1}|=0$ for some $(x_1,x_2,x_3)$.  Besides, $\rank(D_{1})=2$ if $|D_{1}|=0$. Now we consider the following two cases.

\emph{Case 1:} Let $m$ be even. By Lemmas \ref{lem-num1} and \ref{lem-s}, the polynomial $x^{2^h+1}-1$ has 3 zeros in $\gf_q^*$ denoted by $r_1,r_2$ and $r_3$. Let $(x_1,x_2,x_3)=(r_1,r_2,r_3)$. Then
\begin{eqnarray*}
D_{1}=\left[
\begin{array}{lll}
1 & 1 & 1 \\
r_1 & r_2 & r_3 \\
1 & 1 & 1
\end{array}
\right].
\end{eqnarray*}
Thus $|D_{1}|=0$.

\emph{Case 2:} Let $m$ be odd and $x_3=\alpha_{q-1}=1$. Then
\begin{eqnarray*}
D_{1}=\left[
\begin{array}{lll}
1 & 1 & 1\\
x_1 & x_2 & 1\\
x_1^{2^h+1} & x_2^{2^h+1} & 1
\end{array}
\right].
\end{eqnarray*}
It is easy to deduce that $|D_{1}|=(1+x_1)x_2^{2^h+1}+(1+x_1^{2^h+1})x_2+x_1+x_1^{2^h+1}$. Denote by $f(x)=(1+x_1)x^{2^h+1}+(1+x_1^{2^h+1})x+x_1+x_1^{2^h+1}$.
Note that $1+x_1\neq 0$ as $x_1\neq x_3$. By Lemma \ref{lem-f013}, $f(x)$  has 0 or 1 or 3 zeros in $\gf_q^*$. It is easy to verify that $f(1)=f(x_1)=0$.
Then  exists  an element $r \in \gf_q^*$ which is different from 1 and $x_1$ such that $f(r)=0$. Let $x_2=r$ and we have $|D_{1}|=0$.

Summarizing  the above cases yields that $d(\cC_D^\perp) = 3$. Therefore, $\cC_D^\perp$ has parameters $[q-1, q-4, 3]$.

By definition, we have
$$ \cC_D= \{ \bc_{a,b,c}=(ax^{2^h+1}+bx+c)_{x \in \gf_q^*},a,b,c \in \gf_q \}. $$
To determine the weight $\mbox{wt}(\bc_{a,b,c})$ of a codeword $\bc_{a,b,c}\in \cC_D$, it is sufficient to determine the number of zeros of the equation
$$ ax^{2^h+1}+bx+c =0$$
in $\gf_q^*$.
By Lemma \ref{lem-f013}, the above equation has 0 or 1 or 3 zeros in $\gf_q^*$. Hence, $\mbox{wt}(\bc_{a,b,c})\in \{q-1,q-2,q-4\}$.
%Then $q-4\leq d(\cC_D) \leq q-3$ by the Singleton bound. We declare that $ d(\cC_D)\neq q-3$. Otherwise, $\cC_D$ is an MDS code implying its dual $\cC_D^\perp$
%is also MDS, which contradicts with the fact that  $\cC_D^\perp$ has parameters $[q-1, q-4, 3]$.
%Hence $\cC_D$ is a $[q-1, 3, q-4]$ NMDS code over $\gf_q$.

Finally, we compute the weight enumerator of $\cC_D$ by the first three Pless Power Moments in \cite{HP} and prove that $\cC_D$ is a $[q-1, 3, q-4]$ NMDS code. Let $A_{w_1},A_{w_2},A_{w_3}$ respectively represent the frequencies of the weights $w_1=q-4, w_2=q-2, w_3=q-1$. Then we have
%Finally, we compute the weight enumerator of $\cC_D$ by the first three Pless Power Moments in \cite{HP}. Let $A_{w_1},A_{w_2},A_{w_3}$ respectively represent the frequencies of the weights $w_1=q-4, w_2=q-2, w_3=q-1$. Then we have
\begin{eqnarray*}
\left\{
\begin{array}{ll}
A_{w_1}+A_{w_2}+A_{w_3}=q^3-1, \\
w_1A_{w_1}+w_2A_{w_2}+w_3A_{w_3}=q^2(q-1)^2, \\
w_1^2A_{w_1}+w_2^2A_{w_2}+w_3^2A_{w_3}=q(q-1)^2(q^2-2q+2).
\end{array}
\right.
\end{eqnarray*}
 Solving the above system of linear equations gives
$$ A_{q-4}= \frac{(q-1)^2(q-2)}{6}, A_{q-2}=\frac{(q-1)^2(q+4)}{2}, A_{q-1}=\frac{(q-1)(q^2+8)}{3}. $$
Thus $\cC_D$ is a $[q-1, 3, q-4]$ NMDS code and the weight enumerator of $\cC_D$ follows from Lemma \ref{lem-nmdsweight}. It then follows from  the Assmus-Mattson Theorem in Theorem \ref{the-AM} and Equation (\ref{eqn-t}) that the minimum weight codewords of $\cC_D$ support a 2-$(q-1,q-4,\frac{(q-4)(q-5)}{6})$ simple design, and the minimum weight codewords of $\cC_D^\perp$ support a 2-$(q-1,3,1)$ simple design.

We finally prove that the codewords of weight 4 in $\cC_D^\perp$ support a 2-$(q-1, 4, \frac{(q-4)(q-7)}{2})$ simple design. Thanks to a generalized version of the Assmus-Mattson Theorem (Theorem 2.2 in \cite {TDX}), the codewords of weight 4 in $\cC_D^\perp$ support a 2-design. We need to prove that this design is simple. Let $x,y,z$ be three pairwise distinct elements in $\gf_q^*$. Define
\begin{eqnarray*}
D_2=
\left[
\begin{array}{lll}
 1 & 1  & 1\\
 x & y & z\\
x^{2^h+1} & y^{2^h+1} & z^{2^h+1}
\end{array}
\right].
\end{eqnarray*}
It is obvious that
$$  |D_2| =z^{2^h+1}(x+y)+z(x^{2^h+1}+y^{2^h+1})+xy^{2^h+1}+yx^{2^h+1}.  $$
Let $f(z)=z^{2^h+1}(x+y)+z(x^{2^h+1}+y^{2^h+1})+xy^{2^h+1}+yx^{2^h+1}$. Note that $f(x)=f(y)=0.$ By Lemma \ref{lem-f013}, $f(z)$ has 0 or 1 or 3 zeros in $\gf_q^*$. Thus there exists an element $r_{(x,y)} \in \gf_q^*$ which is different from $x$ and $y$ such that $f(r_{(x,y)})=0$. Then we have rank$(D_2) = 3$ if and only if $z \notin \{ x, y, r_{(x,y)} \}$.
Next we prove that the rank of the submatrix
\begin{eqnarray*}
D_{(x,y,z,w)}=
\left[
\begin{array}{llll}
 1 & 1 & 1 & 1 \\
 x & y & z & w \\
x^{2^h+1} & y^{2^h+1} & z^{2^h+1} & w^{2^h+1}
\end{array}
\right]
\end{eqnarray*}
equals 3 for any four pairwise distinct elements $x,y,z,w \in \gf_q^*$. It is obvious that at least one of $z, w$ is not equal to $r_{(x,y)}$, which implies $D_{(x,y,z,w)}$ has a three-order non-zero minor. Thus $\rank(D_{(x,y,z,w)})=\rank(D_2)=3$. Let $\bc=(c_1,c_2, \ldots, c_{q-1})$ be a codeword of weight 4 in $\cC_D^\perp$ with nonzero coordinates in $\{ i_1, i_2, i_3, i_4 \}$, which means $c_{i_j} \ne 0 $ for $1 \leq j \leq 4$ and $c_v=0$ for all $v \in \{1,2, \ldots, q-1\} \setminus \{i_1, i_2, i_3, i_4\}$. Since $ D $ is a parity-check matrix of $\cC_D^\perp$,  there exist four pairwise distinct elements $x,y,z,w \in \gf_q^*$ such that
\begin{eqnarray*}
\left[
\begin{array}{llll}
 1 & 1 & 1 & 1 \\
 x & y & z & w \\
x^{2^h+1} & y^{2^h+1} & z^{2^h+1} & w^{2^h+1}
\end{array}
\right]
\left[
\begin{array}{l}
c_{i_1}\\
c_{i_2}\\
c_{i_3}\\
c_{i_4}
\end{array}
\right]
=\mathbf{0}.
\end{eqnarray*}
Since rank$(D_{(x,y,z,w)})=3$, the all nonzero solutions of the above equation are $\{a(c_{i_1},c_{i_2},c_{i_3},c_{i_4}):a \in \gf_q^* \}$. Thus $\{a\bc:a \in \gf_q^* \}$ is a set of all codewords of weight 4 in $\cC_D^\perp$ whose nonzero coordinates are $\{i_1, i_2, i_3, i_4 \}$. Hence, the codewords of weight 4 in $\cC_D^\perp$ support a 2-$(q-1, 4, \lambda )$ simple design. Since $\cC_D^\perp$ is an NMDS code, we have $A_4^\perp = \frac{(q-1)^2(q-2)(q-4)(q-7)}{24}$  by Lemma \ref{lem-nmdsweight}. By Equation (\ref{eqn-t}), we have
$\lambda = \frac{(q-4)(q-7)}{2}$.

Then we have completed the proof.
\end{proof}
%\subsection{The second family construction}\label{sect-code2}

Let $h$ be a positive integer with $\gcd(m,h)=1$. Define
\begin{eqnarray*}\label{eqn-construction1}
H=\left[
\begin{array}{lllll}
1 & 1 & \cdots & 1  & 1\\
\alpha_1^{2^h} & \alpha_2^{2^h} & \cdots & \alpha_{q-2}^{2^h} & \alpha_{q-1}^{2^h}\\
\alpha_1^{2^h+1} & \alpha_2^{2^h+1} & \cdots & \alpha_{q-2}^{2^h+1} & \alpha_{q-1}^{2^h+1}
\end{array}
\right].
\end{eqnarray*}
 $H$ is a $3$ by $q-1$ matrix over $\gf_q$. Let $\cC_H$ be the linear code over $\gf_q$ generated by $H$.
We will show that $\cC_H$ is an NMDS code and both $\cC_H$ and its dual $\cC_H^\perp$ support 2-designs.

\begin{theorem}\label{thm-0+}
Let $q=2^m$ with $m \geq 3$, $h$ be a positive integer with $\gcd(m,h)=1$. Then $\cC_H$ is a $[q-1, 3, q-4]$ NMDS code over $\gf_q$ with weight enumerator
\begin{eqnarray*}
A(z)=1 + \frac{(q-1)^2(q-2)}{6} z^{q-4} + \frac{(q-1)^2(q+4)}{2} z^{q-2}+\frac{(q-1)(q^2+8)}{3} z^{q-1}.
\end{eqnarray*}
Moreover, the minimum weight codewords of $\cC_H$ support a 2-$(q-1,q-4,\frac{(q-4)(q-5)}{6})$ simple design and the minimum weight codewords of $\cC_H^\perp$ support a 2-$(q-1,3,1)$ simple design, i.e. a Steiner system $S(2,3,q-1)$. Furthermore, the codewords of weight 4 in $\cC_H^\perp$ support a 2-$(q-1, 4, \frac{(q-4)(q-7)}{2})$ simple design.
\end{theorem}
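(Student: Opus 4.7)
My plan is to mirror the proof of Theorem~\ref{thm-0} almost verbatim, systematically replacing the polynomial $f(x)=ax^{2^h+1}+bx+c$ (and Lemma~\ref{lem-f013}) with $g(x)=ax^{2^h+1}+bx^{2^h}+c$ (and Lemma~\ref{lem-g013}). The claim $\dim(\cC_H)=3$ is immediate: a nontrivial relation $c\bg_1+b\bg_2+a\bg_3=\bzero$ among the rows of $H$ would force every $\alpha_i\in\gf_q^*$ to be a zero of $g$, contradicting Lemma~\ref{lem-g013} since $q-1>3$.

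For the bound $d(\cC_H^\perp)\le 3$, rather than splitting into the cases $m$ even and $m$ odd as in Theorem~\ref{thm-0}, I propose a uniform argument. Fix distinct $x_1,x_2\in\gf_q^*$ and view
\[
|H_1|=\det\left(\begin{array}{lll}
1 & 1 & 1\\
x_1^{2^h} & x_2^{2^h} & x_3^{2^h}\\
x_1^{2^h+1} & x_2^{2^h+1} & x_3^{2^h+1}
\end{array}\right)=(x_1+x_2)^{2^h}x_3^{2^h+1}+(x_1^{2^h+1}+x_2^{2^h+1})x_3^{2^h}+\rho
\]
as a polynomial in $x_3$, where $\rho=x_1^{2^h}x_2^{2^h+1}+x_2^{2^h}x_1^{2^h+1}$ is constant and the expansion uses characteristic $2$. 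Its leading coefficient is nonzero since $x_1\ne x_2$, so Lemma~\ref{lem-g013} allows at most three zeros in $\gf_q^*$; since $x_1$ and $x_2$ are visibly zeros, the count is exactly three, producing a third $x_3\in\gf_q^*\setminus\{x_1,x_2\}$ and hence three linearly dependent columns of $H$. Combined with the trivial $d(\cC_H^\perp)>2$, this yields $d(\cC_H^\perp)=3$.

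Each codeword of $\cC_H$ has the explicit form $\bc_{a,b,c}=(ax^{2^h+1}+bx^{2^h}+c)_{x\in\gf_q^*}$, so Lemma~\ref{lem-g013} forces $\wt(\bc_{a,b,c})\in\{q-4,q-2,q-1\}$; the first three Pless power moments then give exactly the same linear system in the three unknowns $A_{q-4},A_{q-2},A_{q-1}$ as in Theorem~\ref{thm-0}, producing identical values. Thus $\cC_H$ is $[q-1,3,q-4]$ NMDS with the stated weight enumerator, and Lemma~\ref{lem-nmdsweight} propagates this to $\cC_H^\perp$. The Assmus-Mattson Theorem (Theorem~\ref{the-AM}) together with Equation~(\ref{eqn-t}) then immediately delivers the $2$-$(q-1,q-4,\frac{(q-4)(q-5)}{6})$ design on the minimum weight codewords of $\cC_H$ and the Steiner system $S(2,3,q-1)$ on those of $\cC_H^\perp$.

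For the $2$-$(q-1,4,\frac{(q-4)(q-7)}{2})$ design supported by weight-$4$ codewords of $\cC_H^\perp$ I would invoke the generalized Assmus-Mattson Theorem of~\cite{TDX} to obtain a $2$-design, and prove simplicity by the rank-$3$ argument of Theorem~\ref{thm-0}: the determinant $|H_1|$ displayed above pins down, for fixed distinct $x,y\in\gf_q^*$, a unique additional zero $r_{(x,y)}\in\gf_q^*\setminus\{x,y\}$, so for any four pairwise distinct $x,y,z,w\in\gf_q^*$ at least one of $z,w$ avoids $r_{(x,y)}$, forcing the corresponding $3\times 4$ submatrix of $H$ to have rank $3$; then exactly the same parity-check argument identifies all weight-$4$ supports uniquely, and $\lambda=\frac{(q-4)(q-7)}{2}$ is recovered from $A_4^\perp$ via Equation~(\ref{eqn-t}). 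The main potential obstacle is confirming that the substitution from $f$ to $g$ leaves the combinatorial bookkeeping unchanged, which boils down to the fact that Lemma~\ref{lem-g013} gives the same zero-count $\{0,1,3\}$ as Lemma~\ref{lem-f013}; once this is in hand, no new ideas are required.
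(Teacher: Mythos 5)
Your proposal is correct and matches the paper's own treatment, which is a one-line proof stating that the argument of Theorem \ref{thm-0} carries over verbatim with Lemma \ref{lem-g013} in place of Lemma \ref{lem-f013}. Your uniform argument for the existence of three linearly dependent columns---noting that $|H_1|$, viewed as a polynomial $g(x_3)=(x_1+x_2)^{2^h}x_3^{2^h+1}+(x_1^{2^h+1}+x_2^{2^h+1})x_3^{2^h}+\rho$ with nonzero leading coefficient, already vanishes at $x_1$ and $x_2$ and hence has exactly three roots in $\gf_q^*$ by Lemma \ref{lem-g013}---is a mild streamlining of the even/odd case split used in the proof of Theorem \ref{thm-0}, but the overall route is the same.
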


\begin{proof}
Similarly to the proof of Theorem \ref{thm-0}, we can easily derive this theorem by Equation (\ref{eqn-t}), Lemmas \ref{lem-nmdsweight} and \ref{lem-g013}.
\end{proof}

Note that $\cC_D$ and $\cC_H$ have the same parameters and weight enumerator.
It is open whether they are equivalent to each other.

\section{Two families  of 4-dimensional near MDS codes holding 2-designs}\label{sect-code2}

In this section, let $q=2^m$ with $m \geq 3$.  Let $\alpha$ be a generator of $\gf_q^*$ and $\alpha_{i}:=\alpha^{i}$ for $1 \leq i \leq q-1$. Then $\alpha_{q-1}=1$.
Define a $4$ by $q-1$ matrix over $\gf_q$ by
\begin{eqnarray*}\label{eqn-construction2and3}
G_{(i,j)}=\left[
\begin{array}{lllll}
1 & 1 & \cdots & 1  & 1\\
\alpha_1^i & \alpha_2^i & \cdots & \alpha_{q-2}^i & \alpha_{q-1}^i\\
\alpha_1^j & \alpha_2^j & \cdots & \alpha_{q-2}^j & \alpha_{q-1}^j\\
\alpha_1^4 & \alpha_2^4 & \cdots & \alpha_{q-2}^4 & \alpha_{q-1}^4
\end{array}
\right],
\end{eqnarray*}
where $(i,j)=(1,3)$ or $(2,3)$.  Let $\cC_{(i,j)}$ be the linear code over $\gf_q$ generated by $G_{(i,j)}$.
We will prove that $\cC_{(i,j)}$ is an NMDS code and the minimum weight codewords of $\cC_{(i,j)}$ and its dual $\cC_{(i,j)}^\perp$ support 2-designs.

\subsection{When $(i,j)=(1,3)$}

The following lemma plays an important role in the proof of our main result.
\begin{lemma}\label{thm-prooft1}
Let $m$ be an odd integer with $m>3$, $q=2^m$. Let $x_1, x_2, x_3, x_4$ be four pairwise distinct elements in $\gf_q^*$ and we define the matrix
\begin{eqnarray}\label{eqn-m1}
M_{(1,3)}=
\left[
\begin{array}{llll}
1 & 1 & 1  & 1\\
x_1 & x_2 & x_3 & x_4\\
x_1^3 & x_2^3 & x_3^3 & x_4^3\\
x_1^4 & x_2^4 & x_3^4 & x_4^4
\end{array}\right].
\end{eqnarray}
Then for any two different and fixed  elements $x_1, x_2$, the total number of different choices of $x_3, x_4$ such that $|M_{(1,3)}|=0$ is equal to $\frac{q-8}{2}$ (regardless of the ordering of $x_3, x_4$).
\end{lemma}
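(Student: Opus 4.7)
The plan is to reduce the vanishing of $|M_{(1,3)}|$ to a single symmetric condition on $x_1,x_2,x_3,x_4$, solve for $x_4$ in terms of $x_3$, and count the admissible $x_3$ by excluding a short finite list of degenerate values.

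First I would apply Lemma \ref{lem-Vandermonde} with $n=4$ and $\ell=2$ to $M_{(1,3)}$, whose row exponents $\{0,1,3,4\}$ are exactly $\{0,1,2,3,4\}\setminus\{2\}$. This produces
\begin{equation*}
|M_{(1,3)}| = \Bigl(\prod_{1\leq i<j\leq 4}(x_j-x_i)\Bigr)\sigma_{2}(x_1,x_2,x_3,x_4).
\end{equation*}
Since the $x_i$ are pairwise distinct, the Vandermonde factor is nonzero, so $|M_{(1,3)}|=0$ is equivalent to
\begin{equation*}
x_1x_2 + (x_1+x_2)(x_3+x_4) + x_3x_4 = 0.
\end{equation*}

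With distinct $x_1,x_2\in \gf_q^*$ fixed, I would view this as a linear equation in $x_4$. Its coefficient $x_3+x_1+x_2$ vanishes only when $x_3 = x_1+x_2$, in which case the constant term reduces to $x_1^2+x_2^2+x_1x_2$, which is nonzero by Lemma \ref{lem-ab} (this is where oddness of $m$ first enters), so no $x_4$ exists. For every other $x_3$ there is a unique solution
\begin{equation*}
x_4 = \frac{(x_1+x_2)x_3 + x_1x_2}{x_3+x_1+x_2}.
\end{equation*}
Counting valid unordered pairs $\{x_3,x_4\}$ thus amounts to counting $x_3\in \gf_q$ such that $x_3\notin\{0,x_1,x_2,x_1+x_2\}$ and the resulting $x_4$ lies in $\gf_q^*\setminus\{x_1,x_2,x_3\}$. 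The four secondary constraints $x_4\in\{0,x_1,x_2,x_3\}$ are each a single linear or quadratic equation in $x_3$; they force $x_3$ to take one of the values $\frac{x_1x_2}{x_1+x_2}$, $\frac{x_1^2}{x_2}$, $\frac{x_2^2}{x_1}$, and $\sqrt{x_1x_2}$ respectively (the square root being unique in characteristic $2$).

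The step I expect to be the main obstacle is verifying that the eight excluded candidates listed above are pairwise distinct, so that the admissible set has cardinality exactly $q-8$. Every potential coincidence between two of them clears, after multiplying out denominators, to either an equation of the shape $a^2+b^2+ab=0$ over $\gf_q$---which Lemma \ref{lem-ab} forbids whenever $m$ is odd and $a\neq b$---or to the equation $x_1^3=x_2^3$, which forces $x_1=x_2$ because $\gcd(3,q-1)=1$ when $m$ is odd (an instance of Lemma \ref{lem-num1}, since $q\equiv 2\pmod 3$). Once this short case analysis is dispatched, exactly $q-8$ values of $x_3$ produce an admissible $x_4\neq x_3$, and dividing by $2$ to pass from ordered pairs to unordered ones yields the stated count $(q-8)/2$.
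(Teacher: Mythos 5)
Your proposal follows the paper's own argument essentially step for step: reduce $|M_{(1,3)}|=0$ via the generalized Vandermonde lemma to $\sigma_2(x_1,x_2,x_3,x_4)=0$, treat this as linear in $x_4$, handle the degenerate case $x_3=x_1+x_2$ via Lemma \ref{lem-ab}, exclude the same eight values of $x_3$, verify their pairwise distinctness using Lemma \ref{lem-ab} together with $\gcd(3,q-1)=1$, and divide by two for unordered pairs. The argument is correct and matches the paper's proof.
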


\begin{proof}
By Lemma \ref{lem-Vandermonde}, $|M_{(1,3)}|=0$ if and only if $x_1x_2+x_1x_3+x_2x_3+(x_1+x_2+x_3)x_4=0$. Then we first need to consider whether $x_1+x_2+x_3$ equals 0 or not in the following cases.

\emph{Case 1:} Let $x_1+x_2+x_3=0$. Then $x_3=x_1+x_2$ and
$$|M_{(1,3)}|=\prod_{1 \leqslant i<j \leqslant 4}(x_j-x_i)(x_1^2+x_2^2+x_1x_2) \neq 0$$
by Lemma \ref{lem-ab}.
So there is no  $(x_3, x_4)$ such that $|M_{(1,3)}|=0$ in this case.

\emph{Case 2:} Let $x_1+x_2+x_3 \neq 0$. Then $x_3 \neq x_1+x_2$ and
$$|M_{(1,3)}|=0 \Longleftrightarrow x_1x_2+x_1x_3+x_2x_3+(x_1+x_2+x_3)x_4=0 \Leftrightarrow x_4=\frac{x_1x_2+x_1x_3+x_2x_3}{x_1+x_2+x_3}. $$
Since $x_1, x_2, x_3, x_4$ are four pairwise distinct elements in $\gf_q^*$, we have $x_4 \notin \{0,x_1,x_2,x_3\}$. Note that
 $$x_4 \neq 0\Longleftrightarrow \frac{x_1x_2+x_1x_3+x_2x_3}{x_1+x_2+x_3} \neq 0 \Leftrightarrow x_1x_2+x_1x_3+x_2x_3 \neq 0 \Leftrightarrow x_3 \neq \frac{x_1x_2}{x_1+x_2}.$$
 Similarly,  $x_4 \notin \{x_1,x_2,x_3\}$ if and only if $x_3 \notin \{\frac{x_2^2}{x_1},\frac{x_1^2}{x_2},a \}$,where $a^2=x_1x_2$.
 We conclude that $|M_{(1,3)}|=0$ if and only if
\begin{eqnarray*}
x_3 \notin \left\{0, x_1, x_2, x_1+x_2, \frac{x_1x_2}{x_1+x_2}, \frac{x_2^2}{x_1}, \frac{x_1^2}{x_2}, a \right\},
\end{eqnarray*}
where $a^2=x_1x_2$, and
$$x_4=\frac{x_1x_2+x_1x_3+x_2x_3}{x_1+x_2+x_3}.$$
By Lemmas \ref{lem-s} and \ref{lem-ab}, it is easy to prove that the elements in
$$ \left\{0, x_1, x_2, x_1+x_2, \frac{x_1x_2}{x_1+x_2}, \frac{x_2^2}{x_1}, \frac{x_1^2}{x_2}, a \right\} $$
are pairwise distinct. It is obvious that if $(x_3,x_4)$ is a choice, so is $(x_4,x_3)$. Hence the total number of different choices of $x_3, x_4\in \gf_q^*$ such that $|M_{(1,3)}|=0$ is equal to $\frac{q-8}{2}$ for any two different fixed  elements $x_1, x_2$.

The proof is completed.
\end{proof}

\begin{theorem}\label{thm-1}
Let $m$ be an odd integer with $m>3$, $q=2^m$. Then $\cC_{(1,3)}$ is a $[q-1, 4, q-5]$ NMDS code over $\gf_q$ with weight enumerator
\begin{eqnarray*}
A(z)=1 + \frac{(q-1)^2(q-2)(q-8)}{24} z^{q-5} + \frac{5(q-1)^2(q-2)}{6} z^{q-4}+\frac{q(q-1)^2(q-2)}{4} z^{q-3} \\ +\frac{(q-1)^2(2q^2+7q+20)}{6} z^{q-2}+\frac{(q-1)(9q^3+13q^2-6q+80)}{24} z^{q-1}.
\end{eqnarray*}
Moreover, the minimum weight codewords of $\cC_{(1,3)}$ support a 2-$(q-1,q-5,\frac{(q-5)(q-6)(q-8)}{24})$ simple design and the minimum weight codewords of $\cC_{(1,3)}^\perp$ support a 2-$(q-1,4,\frac{q-8}{2})$ simple design.
\end{theorem}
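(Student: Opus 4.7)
The plan is to verify the parameters $[q-1, 4, q-5]$ of $\cC_{(1,3)}$ together with $d(\cC_{(1,3)}^\perp) = 4$, derive the full weight enumerator from Lemma \ref{lem-nmdsweight}, and then read off the two $2$-designs directly from Lemma \ref{thm-prooft1}. The Assmus--Mattson theorem cannot be invoked here: the weights $q-5, q-4, q-3$ all lie in $\{1, \ldots, n-t\}$ for $t = 2$, while the hypothesis would require at most $d^\perp - t = 2$ nonzero weights in that range, so the design argument must be structural.

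For $\dim(\cC_{(1,3)}) = 4$, any nontrivial linear dependence among the rows of $G_{(1,3)}$ would give a polynomial $ax^4 + bx^3 + cx + d$ of degree at most $4$ vanishing on all $q - 1 > 4$ points of $\gf_q^*$, which is impossible. For $d^\perp \geq 4$, I fix distinct $x_1, x_2, x_3 \in \gf_q^*$ and consider the $(0,1,3)$-minor of the corresponding $4 \times 3$ submatrix; by Lemma \ref{lem-Vandermonde} it equals $\prod_{i<j}(x_j - x_i)(x_1 + x_2 + x_3)$, nonzero unless $x_1 + x_2 + x_3 = 0$. In that exceptional case a direct characteristic-$2$ expansion shows the $(0,1,4)$-minor equals $x_1 x_2 (x_1 + x_2)(x_1^2 + x_1 x_2 + x_2^2)$, nonzero by Lemma \ref{lem-ab} (here the oddness of $m$ enters). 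The reverse inequality $d^\perp \leq 4$ is immediate from Lemma \ref{thm-prooft1}. For $d = q - 5$, each nonzero codeword has the form $(ax^4 + bx^3 + cx + d)_{x \in \gf_q^*}$, hence at most $4$ zeros and weight $\geq q - 5$; equality is realized by choosing any $\{x_1, x_2, x_3, x_4\}$ with $\sigma_2 = 0$ (guaranteed by Lemma \ref{thm-prooft1}) and observing that in characteristic $2$, $\prod_{i=1}^4 (x - x_i) = x^4 + \sigma_1 x^3 + \sigma_3 x + \sigma_4$ has no $x^2$ term and therefore lies in the code. Then $d + d^\perp = n$ establishes the NMDS property.

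To compute the weight enumerator I first count $A_{q-5}$ directly: weight-$(q-5)$ codewords are in bijection with pairs (nonzero scalar, $4$-subset with $\sigma_2 = 0$), and double-counting via Lemma \ref{thm-prooft1} (each ordered pair $(x_1, x_2)$ admits $\frac{q-8}{2}$ unordered completions, and each $4$-subset is counted $12$ times) yields $\frac{(q-1)(q-2)(q-8)}{24}$ valid $4$-subsets and thus $A_{q-5} = \frac{(q-1)^2(q-2)(q-8)}{24}$. The remaining frequencies $A_{q-4}, A_{q-3}, A_{q-2}, A_{q-1}$ are then forced by specializing Lemma \ref{lem-nmdsweight} at $s = 1, 2, 3, 4$ with $n - \kappa = q - 5$.

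For the designs: since $d^\perp > 3$, the $1$-dimensional kernel attached to each $4$-subset with $\sigma_2 = 0$ consists entirely of weight-$4$ codewords (no coordinate may vanish, else three columns would be dependent). Hence $\mathcal{B}_4(\cC_{(1,3)}^\perp)$ is precisely the collection of $4$-subsets satisfying $\sigma_2 = 0$, automatically simple, and Lemma \ref{thm-prooft1} asserts that each pair lies in exactly $\frac{q-8}{2}$ of them, giving the $2$-$(q-1, 4, \frac{q-8}{2})$ design. The complementary design formula (\ref{eqn-complementarydesign}) then converts this into the $2$-$(q-1, q-5, \frac{(q-5)(q-6)(q-8)}{24})$ design on the minimum-weight supports of $\cC_{(1,3)}$. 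The main technical obstacle is the exceptional case $x_1 + x_2 + x_3 = 0$ in the $d^\perp \geq 4$ argument, which is precisely where the oddness of $m$ intervenes via Lemma \ref{lem-ab}; the remaining steps are standard Vandermonde-symmetric-polynomial manipulations and routine arithmetic using the NMDS weight recursion.
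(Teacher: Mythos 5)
Your proposal is correct and follows the same overall skeleton as the paper's proof: dimension via the root count of $a+bx+cx^3+dx^4$, the value $d(\cC_{(1,3)}^\perp)=4$ via nonvanishing $3\times 3$ minors plus Lemma \ref{thm-prooft1}, the simple $2$-design on weight-$4$ dual codewords from the rank-$3$ kernel argument and Lemma \ref{thm-prooft1}, and the full weight enumerator from Lemma \ref{lem-nmdsweight}. The local differences are worth noting. For the independence of any three columns, the paper evaluates the single minor with rows $1,x,x^4$ and invokes the oval-polynomial property of $f(x)=x^4$ (Lemma \ref{lem-oval}), whereas you split into cases on $x_1+x_2+x_3$, using the $(0,1,3)$-minor generically and a direct expansion plus Lemma \ref{lem-ab} in the exceptional case; both routes use the oddness of $m$ at the same point and your computation $x_1x_2(x_1+x_2)(x_1^2+x_1x_2+x_2^2)$ checks out. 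For $d=q-5$ you exhibit an explicit minimum-weight codeword $\prod_{i=1}^4(x-x_i)$ with $\sigma_2=0$, which is more constructive than the paper's contradiction argument (if $d=q-4$ the code and hence its dual would be MDS, contradicting $d^\perp=4$). Finally, you obtain $A_{q-5}$ by directly counting $4$-subsets with $\sigma_2=0$ and pass to the design on minimum-weight codewords of $\cC_{(1,3)}$ via the complementary-design formula (\ref{eqn-complementarydesign}), while the paper gets $A_4^\perp$ from Equation (\ref{eqn-t}) and transfers everything through Lemma \ref{lem-minweight}; the two computations agree, and your double-count (each valid $4$-subset hit by $12$ ordered pairs) is correct. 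Your observation that the Assmus--Mattson theorem is inapplicable (three nonzero weights in $\{1,\dots,q-3\}$ versus the allowed $d^\perp-t=2$) matches the paper's remark that these codes hold designs despite failing the AM hypothesis.
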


\begin{proof}
We first prove that $\dim(\cC_{(1,3)})=4$. Let $\bg_1$, $\bg_2$, $\bg_3$ and $\bg_4$ respectively represent the first, second, third and fourth rows of $G_{(1,3)}$. Assume that there exist elements $a,b,c,d \in \gf_q$ with $(a,b,c,d)\neq (0,0,0,0)$ such that $a \bg_1 + b \bg_2 + c \bg_3+ d \bg_4=\mathbf{0}$. Then
\begin{eqnarray*}
\left\{
\begin{array}{c}
a+b\alpha_1+c\alpha_1^{3}+d\alpha_1^{4}=0, \\
a+b\alpha_2+c\alpha_2^{3}+d\alpha_2^{4}=0, \\
          \vdots                 \\
a+b\alpha_{q-1}+c\alpha_{q-1}^{3}+d\alpha_{q-1}^{4}=0.
\end{array}
\right.
\end{eqnarray*}
Obviously, the polynomial $f(x)=a+bx+cx^{3}+dx^{4}$ has at most 4 zeros in $\gf_q^*$, which leads to a contradiction. Hence $\bg_1$, $\bg_2$, $\bg_3$ and $\bg_4$ are linearly independent over $\gf_q$. Thus $\dim(\cC_{(1,3)})=4$.

We then prove that $\cC_{(1,3)}^\perp$ has parameters $[q-1, q-5, 4]$. Obviously, $\dim(\cC_{(1,3)}^\perp)=(q-1)-4=q-5$. We need to prove $d(\cC_{(1,3)}^\perp)=4$. It is sufficient to prove that any 3 columns of $G_{(1,3)}$ are linearly independent and there exist 4 columns of $G_{(1,3)}$ that are linearly dependent. Choosing  any three columns from $G_{(1,3)}$ yields the submatrix
\begin{eqnarray*}
M_{1.1}=
\left[
\begin{array}{lll}
1 & 1 & 1  \\
x_1 & x_2 & x_3 \\
x_1^3 & x_2^3 & x_3^3 \\
x_1^4 & x_2^4 & x_3^4
\end{array}\right],
\end{eqnarray*}
where $x_1,x_2,x_3$ are pairwise distinct elements in $\gf_q^*$. Consider  the 3 by 3 submatrix of $M_{1.1}$ as
\begin{eqnarray*}
M_{1.2}=
\left[
\begin{array}{lll}
1 & 1 & 1  \\
x_1 & x_2 & x_3 \\
x_1^4 & x_2^4 & x_3^4
\end{array}\right].
\end{eqnarray*}
 Denote by $f(x)=x^4$. By Lemma \ref{lem-oval}, we have $|M_{1.2}|=(x_2+x_1)(f(x_3)+f(x_1))+(x_3+x_1)(f(x_2)+f(x_1)) \neq 0$. Then $\rank(M_{1.2})=3$ and any 3 columns of $G_{(1,3)}$ are linearly independent.
Now we consider the submatrix $M_{(1,3)}$ of $G_{(1,3)}$ in Equation (\ref{eqn-m1}). By the proof of Lemma \ref{thm-prooft1}, there is proper $\{x_1, x_2, x_3, x_4\}$ such that $|M(1,3)| = 0.$ This shows that there exist 4 columns of $G(1,3)$ that are linearly dependent. To sum up, $d(\cC_{(1,3)}^\perp)=4$. Let $\bc=(c_1,c_2, \ldots, c_{q-1}) \in \cC_{(1,3)}^\perp$ and wt$(\bc)=4$. Assume that $c_{i_j}=r_{i_j} \in \gf_q^*$, $1 \leq j \leq 4$, and $c_v=0$ for all $v \in \{1,2, \ldots, q-1\} \setminus \{i_1, i_2, i_3, i_4\}$, i.e., $\support(\bc)= \{i_1, i_2, i_3, i_4\}$. Set $x_j=\alpha^{i_j}.$ By definition,
\begin{eqnarray*}
\left[
\begin{array}{llll}
1 & 1 & 1  & 1\\
x_1 & x_2 & x_3 & x_4\\
x_1^3 & x_2^3 & x_3^3 & x_4^3\\
x_1^4 & x_2^4 & x_3^4 & x_4^4
\end{array}
\right]
\left[
\begin{array}{l}
r_{i_1}\\
r_{i_2}\\
r_{i_3}\\
r_{i_4}
\end{array}
\right]
=\mathbf{0}.
\end{eqnarray*}
Since rank$(M_{(1,3)})=3$, the number of solutions with $\{r_{i_1}, r_{i_2}, r_{i_3}, r_{i_4}\} \in (\gf_q^*)^4$ is $q-1$. Then $\{a\bc:a \in \gf_q^* \}$ is the set of all codewords of weight 4 in $\cC_{(1,3)}^\perp$ whose nonzero coordinates are $\{i_1, i_2, i_3, i_4 \}$. Therefore, every codeword of weight 4 and its nonzero multiples in $\cC_{(1,3)}^\perp$ with nonzero coordinates $\{i_1, i_2, i_3, i_4 \}$ must correspond to the set $\{x_1 , x_2, x_3, x_4\}$. By Lemma \ref{thm-prooft1}, the number of choices of $x_3,x_4$ is independent of $x_1,x_2$. We then deduce that the codewords of weight 4 in $\cC_{(1,3)}^\perp$ support a 2-$(q-1,4,\frac{q-8}{2})$ design. Then by Equation (\ref{eqn-t}),
$$ A_4^\perp=(q-1)\frac{\binom{q-1}{2}}{\binom{4}{2}}\frac{q-8}{2}=\frac{(q-1)^2(q-2)(q-8)}{24}.$$
We next prove $d(\cC_{(1,3)})=q-5$. By definition, we have
$$ \cC_{(1,3)}= \{ \bc_{a,b,c,d}=(a+bx+cx^{3}+dx^{4})_{x \in \gf_q^*},a,b,c,d \in \gf_q \}. $$
To determine the weight $\mbox{wt}(\bc_{a,b,c,d})$ of a codeword $\bc_{a,b,c,d}\in \cC_{(1,3)}$, it is sufficient to determine the number of zeros of the equation
$$ a+bx+cx^{3}+dx^{4} =0$$
in $\gf_q^*$.
The above equation has at most 4 zeros in $\gf_q^*$. Hence, $d(\cC_{(1,3)}) \geq q-5$. By the Singleton bound,
$d(\cC_{(1,3)}) \leq q-4$. Then $d(\cC_{(1,3)})=q-4$ or $q-5$. If $d(\cC_{(1,3)})=q-4$, then $\cC_{(1,3)}$ is an MDS code. Then $\cC_{(1,3)}^\perp$ is also MDS, which leads to a contradiction. Therefore, $\cC_{(1,3)}$ is a $[q-1,4,q-5]$ NMDS code. By Lemma \ref{lem-minweight},
$$A_{q-5}=A_4^\perp=\frac{(q-1)^2(q-2)(q-8)}{24}.$$
Then by  Lemma \ref{lem-minweight} and Equation (\ref{eqn-t}), the minimum weight codewords of $\cC_{(1,3)}$ support a 2-$(q-1,q-5,\frac{(q-5)(q-6)(q-8)}{24})$ simple design. Finally, the weight enumerator of $\cC_{(1,3)}$ follows from Lemma \ref{lem-nmdsweight}.
\end{proof}

\subsection{When $(i,j)=(2,3)$}
In this subsection, we consider the case for $(i,j)=(2,3)$.

We will need the following lemma in the proof of our main result.
\begin{lemma}\label{thm-prooft2}
Let $m$ be an integer with $m \geq 3$, $q=2^m$. Let $x_1, x_2, x_3, x_4$ be four pairwise distinct elements in $\gf_q^*$. Define the matrix
\begin{eqnarray*}\label{eqn-m2}
M_{(2,3)}=
\left[
\begin{array}{llll}
1 & 1 & 1  & 1\\
x_1^2 & x_2^2 & x_3^2 & x_4^2\\
x_1^3 & x_2^3 & x_3^3 & x_4^3\\
x_1^4 & x_2^4 & x_3^4 & x_4^4
\end{array}\right].
\end{eqnarray*}
Then for any different fixed elements $x_1, x_2$, the total number of different choices of $x_3, x_4$ such that $|M_{(2,3)}|=0$ is equal to $\frac{q-4}{2}$.
\end{lemma}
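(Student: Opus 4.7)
The plan is to reduce the $4 \times 4$ determinant $|M_{(2,3)}|$ to a product of a nonzero Vandermonde-type factor and the elementary symmetric polynomial $\sigma_3(x_1,x_2,x_3,x_4)$, and then to count the solutions of $\sigma_3=0$ by solving for $x_4$ as a rational function of $x_1,x_2,x_3$. The rows of $M_{(2,3)}$ have exponents $0,2,3,4$, i.e., the exponent $1$ is missing from the full Vandermonde block. Applying Lemma~\ref{lem-Vandermonde} with $n=4$ and $\ell=1$ (noting that the paper's determinant is the transpose, which has the same determinant) yields
\begin{equation*}
|M_{(2,3)}|=\Bigl(\prod_{1\leq i<j\leq 4}(x_j-x_i)\Bigr)\,\sigma_3(x_1,x_2,x_3,x_4),
\end{equation*}
where $\sigma_3(x_1,x_2,x_3,x_4)=x_1x_2x_3+x_1x_2x_4+x_1x_3x_4+x_2x_3x_4$. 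Since the $x_i$ are pairwise distinct the Vandermonde factor is nonzero, so $|M_{(2,3)}|=0$ if and only if $\sigma_3(x_1,x_2,x_3,x_4)=0$.

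Next I would solve for $x_4$. Rewriting $\sigma_3=0$ as $x_4(x_1x_2+x_1x_3+x_2x_3)=x_1x_2x_3$, observe that when $x_1x_2+x_1x_3+x_2x_3=0$, equivalently $x_3=\tfrac{x_1x_2}{x_1+x_2}$ (well defined since $x_1\neq x_2$ forces $x_1+x_2\neq 0$ in characteristic $2$), the left side vanishes while the right side is nonzero, so there is no solution. For every other $x_3\in\gf_q^*\setminus\{x_1,x_2\}$, there is a unique $x_4=\tfrac{x_1x_2x_3}{x_1x_2+x_1x_3+x_2x_3}$, and I would then verify that $x_4\in\gf_q^*\setminus\{x_1,x_2,x_3\}$.

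The verifications are routine algebra in characteristic $2$: substituting $x_4=0$ forces $x_1x_2x_3=0$; substituting $x_4=x_1$ (respectively $x_2$) forces $x_1^2(x_2+x_3)=0$ (respectively $x_2^2(x_1+x_3)=0$), contradicting the distinctness of $x_2,x_3$ (respectively $x_1,x_3$); and substituting $x_4=x_3$ forces $x_3^2(x_1+x_2)=0$, which fails since $x_3\neq 0$ and $x_1\neq x_2$. A separate quick check shows $\tfrac{x_1x_2}{x_1+x_2}\notin\{0,x_1,x_2\}$, so the three forbidden values of $x_3$, namely $x_1,\,x_2,\,\tfrac{x_1x_2}{x_1+x_2}$, are genuinely distinct and all lie in $\gf_q^*$.

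Consequently, the admissible $x_3$ range over $\gf_q^*\setminus\{x_1,x_2,\tfrac{x_1x_2}{x_1+x_2}\}$, giving exactly $q-1-3=q-4$ ordered pairs $(x_3,x_4)$ with $|M_{(2,3)}|=0$. Since swapping $x_3$ and $x_4$ yields the same unordered pair and no solution is fixed under this swap (we showed $x_4\neq x_3$), the number of unordered choices is $\tfrac{q-4}{2}$. The only subtle point is ensuring $x_4\neq x_3$; everything else is bookkeeping, and the characteristic-$2$ arithmetic actually simplifies the case analysis compared with the $(1,3)$ case, since we never need to appeal to Lemma~\ref{lem-ab} here.
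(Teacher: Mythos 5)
Your proof is correct and follows exactly the route the paper intends: the paper's own proof of this lemma is just a pointer to the analogous argument for $M_{(1,3)}$ via Lemma~\ref{lem-Vandermonde}, and you have supplied precisely those details (reduction to $\sigma_3=0$, the single excluded value $x_3=\frac{x_1x_2}{x_1+x_2}$, the automatic validity of the resulting $x_4$, and the division by $2$ for unordered pairs). Your closing observation that no appeal to Lemma~\ref{lem-ab} is needed matches the paper's remark that Theorem~\ref{thm-2} therefore holds for all $m\geq 3$.
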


\begin{proof}
Similarly to the proof of Lemma \ref{thm-prooft1},
we can easily derive this lemma by Lemma \ref{lem-Vandermonde}.
\end{proof}

\begin{theorem}\label{thm-2}
Let $m$ be an integer with $m \geq 3$, $q=2^m$. Then $\cC_{(2,3)}$ is a $[q-1, 4, q-5]$ NMDS code over $\gf_q$ with weight enumerator
\begin{eqnarray*}
A(z)=1 + \frac{(q-1)^2(q-2)(q-4)}{24} z^{q-5} + \frac{(q-1)^2(q-2)}{6} z^{q-4}+\frac{(q-1)^2(q-2)(q+4)}{4} z^{q-3} \\ +\frac{(q-1)^2(2q^2+3q+28)}{6} z^{q-2}+\frac{(q-1)(9q^3+17q^2-18q+88)}{24} z^{q-1}.
\end{eqnarray*}
Moreover, the minimum weight codewords of $\cC_{(2,3)}$ support a 2-$(q-1,q-5,\frac{(q-4)(q-5)(q-6)}{24})$ simple design and the minimum weight codewords of $\cC_{(2,3)}^\perp$ support a 2-$(q-1,4,\frac{q-4}{2})$ simple design.
\end{theorem}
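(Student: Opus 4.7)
The plan is to mirror the strategy from the proof of Theorem \ref{thm-1} for the case $(i,j) = (1,3)$. First I would establish the parameters of $\cC_{(2,3)}$ and $\cC_{(2,3)}^\perp$, then identify the minimum weight codewords in the dual using Lemma \ref{thm-prooft2}, apply the minimum weight correspondence of Lemma \ref{lem-minweight} to transfer the design structure to $\cC_{(2,3)}$ itself, and finally recover the full weight enumerator from Lemma \ref{lem-nmdsweight}.

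For the dimension, I would argue that a nontrivial relation $a\bg_1 + b\bg_2 + c\bg_3 + d\bg_4 = \mathbf{0}$ among the rows of $G_{(2,3)}$ would force the polynomial $a + bx^2 + cx^3 + dx^4$ of degree at most $4$ to vanish at all $q-1 \geq 7$ elements of $\gf_q^*$, a contradiction, giving $\dim(\cC_{(2,3)}) = 4$. For $d(\cC_{(2,3)}^\perp)$, the key is to show any three columns of $G_{(2,3)}$ indexed by distinct $x_1,x_2,x_3 \in \gf_q^*$ are linearly independent. Unlike the $(1,3)$ case (which required the oval polynomial property), here I would simply take the $3 \times 3$ submatrix formed by the rows $1, x^2, x^4$: its determinant is the Vandermonde determinant in $x_1^2, x_2^2, x_3^2$, which equals $\prod_{i<j}(x_j^2 - x_i^2) = \prod_{i<j}(x_j - x_i)^2 \neq 0$ in characteristic $2$. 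Combined with Lemma \ref{thm-prooft2}, which guarantees the existence of four pairwise distinct elements $x_1,x_2,x_3,x_4$ with $|M_{(2,3)}| = 0$, this yields $d(\cC_{(2,3)}^\perp) = 4$.

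Next I would show that the codewords of weight $4$ in $\cC_{(2,3)}^\perp$ support a $2$-design. Fix any two distinct positions (equivalently, distinct $x_1, x_2 \in \gf_q^*$); by Lemma \ref{thm-prooft2} there are exactly $(q-4)/2$ unordered pairs $\{x_3, x_4\}$ such that $|M_{(2,3)}| = 0$. Because $\rank(M_{(2,3)}) = 3$ when the determinant vanishes (the three-row Vandermonde argument above rules out rank $\leq 2$), the null space is one-dimensional, so the codewords of weight $4$ with prescribed support $\{i_1, i_2, i_3, i_4\}$ form exactly $\{a\bc : a \in \gf_q^*\}$. Hence the supports of weight-$4$ codewords of $\cC_{(2,3)}^\perp$ give a simple $2$-$(q-1, 4, (q-4)/2)$ design, and by Equation (\ref{eqn-t}) we get $A_4^\perp = \frac{(q-1)^2(q-2)(q-4)}{24}$. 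To show $\cC_{(2,3)}$ is NMDS, I would note $d(\cC_{(2,3)}) \geq q - 5$ (polynomial root bound) and $d(\cC_{(2,3)}) \leq q-4$ (Singleton); if equality held in Singleton then $\cC_{(2,3)}$ would be MDS, forcing $\cC_{(2,3)}^\perp$ also MDS with $d^\perp = 5$, contradicting $d^\perp = 4$. Thus $d(\cC_{(2,3)}) = q-5$ and $\cC_{(2,3)}$ is NMDS. Lemma \ref{lem-minweight} then gives $A_{q-5} = A_4^\perp$, and Lemma \ref{lem-nmdsweight} with $\kappa = 4$ produces the remaining frequencies $A_{q-4}, A_{q-3}, A_{q-2}, A_{q-1}$ in the stated form. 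Finally, the minimum weight codewords of $\cC_{(2,3)}$ support the complementary design via Lemma \ref{lem-minweight}; applying Equation (\ref{eqn-complementarydesign}) with $n = q-1$, $k = 4$, $t = 2$ and $\lambda = (q-4)/2$ yields $\lambda_{0,2} = \frac{(q-4)(q-5)(q-6)}{24}$, matching the stated parameters.

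The only real obstacle is verifying that the rank of $M_{(2,3)}$ is exactly $3$ (not smaller) whenever $|M_{(2,3)}| = 0$, which is needed to conclude uniqueness (up to scalar) of the weight-$4$ codeword with a given support. This follows from the Vandermonde observation in the previous step, since any three of the four columns already form a rank-$3$ submatrix. With that in hand the rest is a routine combination of Lemmas \ref{lem-nmdsweight}, \ref{lem-minweight}, and \ref{thm-prooft2} together with Equations (\ref{eqn-complementarydesign}) and (\ref{eqn-t}), paralleling the argument in Theorem \ref{thm-1}.
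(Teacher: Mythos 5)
Your proposal is correct and follows essentially the same route as the paper, whose own proof of this theorem is just a one-line deferral to the argument of Theorem \ref{thm-1} combined with Lemma \ref{thm-prooft2}. The one place where the template of Theorem \ref{thm-1} cannot be copied verbatim is the $3$-column independence step (there it rests on the oval-polynomial property of $x^4$, which needs $m$ odd), and your replacement --- taking the rows $1, x^2, x^4$ and observing that the resulting determinant is the Vandermonde determinant in $x_1^2,x_2^2,x_3^2$, hence $\prod_{i<j}(x_i+x_j)^2\neq 0$ --- is exactly the right fix and explains why this theorem holds for all $m\geq 3$, as the paper remarks after the statement. All the numerical checks ($A_4^\perp$, the value $\lambda_{0,2}=\frac{(q-4)(q-5)(q-6)}{24}$ from Equation (\ref{eqn-complementarydesign}), and the weight enumerator via Lemma \ref{lem-nmdsweight}) come out as stated.
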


\begin{proof}
Similarly to the proof of Theorem \ref{thm-1}, we can prove this theorem by Lemma \ref{thm-prooft2}.
\end{proof}

Note that Theorem \ref{thm-2} works for any $m \geq 3$ as Lemma \ref{thm-prooft2} dose not rely on Lemma \ref{lem-ab}.
\section{Five families  of 5-dimensional near MDS codes holding 2-designs or 3-designs}\label{sect-code3}
In this section, let $q=2^m$ with $m > 3$. Let $\alpha$ be a generator of $\gf_q^*$ and $\alpha_{i}:=\alpha^{i}$ for $1 \leq i \leq q-1$.
Then $\alpha_{q-1}=1$.
Define
\begin{eqnarray*}\label{eqn-construction4to7}
G_{(i,j,k)}=\left[
\begin{array}{lllll}
1 & 1 & \cdots & 1  & 1\\
\alpha_1^i & \alpha_2^i & \cdots & \alpha_{q-2}^i & \alpha_{q-1}^i\\
\alpha_1^j & \alpha_2^j & \cdots & \alpha_{q-2}^j & \alpha_{q-1}^j\\
\alpha_1^k & \alpha_2^k & \cdots & \alpha_{q-2}^k & \alpha_{q-1}^k\\
\alpha_1^5 & \alpha_2^5 & \cdots & \alpha_{q-2}^5 & \alpha_{q-1}^5
\end{array}
\right],
\end{eqnarray*}
where $(i,j,k)=(2,3,4),(1,2,3),(1,2,4)$ or $(1,3,4)$.  Let $\cC_{(i,j,k)}$ be the linear code over $\gf_q$ generated by $G_{(i,j,k)}$.
We will show that $\cC_{(i,j,k)}$ is an NMDS code and the minimum weight codewords of $\cC_{(i,j,k)}$ and its dual $\cC_{(i,j,k)}^\perp$ support 2-designs. Besides, we denote by $\overline{\cC_{(1,2,4)}}$ the extended code of $\cC_{(1,2,4)}$. We will also prove that the minimum weight codewords of $\overline{\cC_{(1,2,4)}}$ and its dual $\overline{\cC_{(1,2,4)}}^\perp$ support 3-designs.

\subsection{When $(i,j,k)=(2,3,4)$}
Let $(i,j,k)=(2,3,4)$. We study the linear code $\cC_{(2,3,4)}$ in this subsection.

The following lemma plays an important role in the proof of our result.
\begin{lemma}\label{thm-prooft3}
Let $m$ be a positive integer with $m>3$ and $q=2^m$. Let $x_1, x_2, x_3, x_4, x_5$ be five pairwise distinct elements in $\gf_q^*$. Define the matrix
\begin{eqnarray*}\label{eqn-m3}
M_{(2,3,4)}=
\left[
\begin{array}{lllll}
1 & 1 & 1  & 1 & 1\\
x_1^2 & x_2^2 & x_3^2 & x_4^2 & x_5^2\\
x_1^3 & x_2^3 & x_3^3 & x_4^3 & x_5^3\\
x_1^4 & x_2^4 & x_3^4 & x_4^4 & x_5^4\\
x_1^5 & x_2^5 & x_3^5 & x_4^5 & x_5^5
\end{array}\right].
\end{eqnarray*}
Then for any two different and fixed elements $x_1, x_2$, the total number of different choices of $(x_3, x_4, x_5)$ such that $|M_{(2,3,4)}|=0$ is equal to $\frac{(q-4)(q-8)}{6}$ (regardless of the ordering of $x_3, x_4, x_5$).
\end{lemma}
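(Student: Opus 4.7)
The plan is to imitate the proofs of Lemmas~\ref{thm-prooft1} and~\ref{thm-prooft2} by first reducing the determinant condition to a symmetric-function identity via Lemma~\ref{lem-Vandermonde}, and then performing a direct count. Since the missing exponent in $M_{(2,3,4)}$ is $1$, applying Lemma~\ref{lem-Vandermonde} with $n=5$ and $\ell=1$ gives
\begin{eqnarray*}
|M_{(2,3,4)}| = \left(\prod_{1\le i<j\le 5}(x_j-x_i)\right)\sigma_{4}(x_1,\ldots,x_5).
\end{eqnarray*}
The Vandermonde factor is nonzero because the $x_i$ are pairwise distinct, so $|M_{(2,3,4)}|=0$ if and only if $\sigma_4(x_1,\ldots,x_5)=0$. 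Since each $x_i\in \gf_q^*$, I can factor $\sigma_4(x_1,\ldots,x_5)=\left(\prod_{i=1}^{5}x_i\right)\sum_{i=1}^{5}1/x_i$, so the condition collapses to $\sum_{i=1}^{5}1/x_i=0$.

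Next, I would substitute $y_i:=1/x_i$, which is a bijection on $\gf_q^*$ and preserves pairwise distinctness. With $s:=y_1+y_2$, the task becomes: count the unordered triples $\{y_3,y_4,y_5\}$ of pairwise distinct elements of $\gf_q^*\setminus\{y_1,y_2\}$ satisfying $y_3+y_4+y_5=s$. A key preliminary observation is that $s\notin B:=\{0,y_1,y_2\}$: in characteristic $2$, $s\ne 0$ because $y_1\ne y_2$, and $s\ne y_i$ because the other $y_j$ is nonzero.

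I would conclude with an inclusion--exclusion on $B$. The set $T_0$ of ordered triples in $\gf_q^{3}$ of pairwise distinct entries summing to $s$ has size $(q-1)(q-2)$ (this subtracts the unique triple $(s,s,s)$ and the $3(q-1)$ triples with exactly two equal coordinates). Let $P_j\subset T_0$ be the subset whose $j$-th coordinate lies in $B$. A slot-by-slot check, using $s\notin B$ to rule out the coincidences that would otherwise appear in characteristic $2$, gives $|P_j|=3(q-2)$, $|P_j\cap P_k|=6$ for $j\ne k$, and $|P_1\cap P_2\cap P_3|=6$ (for the triple intersection, note that the identity $0+y_1+y_2=s$ makes all six permutations of $B$ lie in $T_0$). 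Inclusion--exclusion then yields
\begin{eqnarray*}
|T_0\setminus(P_1\cup P_2\cup P_3)|=(q-1)(q-2)-9(q-2)+18-6=(q-4)(q-8),
\end{eqnarray*}
and dividing by $3!$ to pass to unordered triples gives $\frac{(q-4)(q-8)}{6}$, as claimed. The main obstacle is the bookkeeping in the inclusion--exclusion: one must verify that the various distinctness and nonvanishing conditions are correctly enforced in characteristic $2$, where identities such as $b+c=0\Leftrightarrow b=c$ create coincidences that do not occur in odd characteristic.
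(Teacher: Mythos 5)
Your proof is correct, and while it opens the same way as the paper's, the counting is carried out by a genuinely different route. Both arguments begin by applying Lemma~\ref{lem-Vandermonde} with the missing exponent $\ell=1$ to reduce $|M_{(2,3,4)}|=0$ to $\sigma_4(x_1,\ldots,x_5)=0$. From there the paper keeps the variables $x_i$ and argues by cases: it examines when the coefficient of $x_5$ in $\sigma_4$ vanishes, solves for $x_5$ in the generic case, and then counts admissible $(x_3,x_4)$ by exhibiting a list of $4$ forbidden values for $x_3$ and a list of $8$ forbidden values for $x_4$ and checking that each list consists of pairwise distinct elements. Your substitution $y_i=1/x_i$ instead converts $\sigma_4=0$ into the linear condition $y_1+\cdots+y_5=0$, which in effect reduces this lemma to the situation of Lemma~\ref{thm-prooft4} (the case of $M_{(1,2,3)}$, where Lemma~\ref{lem-Vandermonde} directly yields $x_1+\cdots+x_5=0$); this also explains structurally why $\cC_{(2,3,4)}$ and $\cC_{(1,2,3)}$ turn out to have the same weight enumerator. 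Your inclusion--exclusion over $B=\{0,y_1,y_2\}$ is a cleaner finish than the paper's explicit forbidden-value lists: the counts $|T_0|=(q-1)(q-2)$, $|P_j|=3(q-2)$, $|P_j\cap P_k|=6$ and $|P_1\cap P_2\cap P_3|=6$ all check out (the observation $s\notin B$ is precisely what kills the characteristic-$2$ coincidences), and $(q-1)(q-2)-9(q-2)+18-6=(q-4)(q-8)$ as claimed, so dividing by $3!$ gives $\frac{(q-4)(q-8)}{6}$. The only point worth making explicit is the one-line remark that $x\mapsto 1/x$ is a bijection of $\gf_q^*$ carrying unordered admissible triples for the original problem to unordered admissible triples for the transformed one, so the two counts coincide.
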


\begin{proof}
By Lemma \ref{lem-Vandermonde}, $|M_{(2,3,4)}|=0$ if and only if
$$x_1x_2x_3x_4+\left(x_1x_2x_3+(x_1x_2+x_1x_3+x_2x_3)x_4\right)x_5=0.$$
Let $x_1,x_2$ be two different and fixed elements in $\gf_q^*$. Consider the following cases.

\emph{Case 1:} Let $x_3=\frac{x_1x_2}{x_1+x_2}$, then $x_1x_2+x_1x_3+x_2x_3=0$ and $x_1x_2x_3+(x_1x_2+x_1x_3+x_2x_3)x_4=x_1x_2x_3 \neq 0.$ Then
$$|M_{(2,3,4)}|=\prod_{1 \leqslant i<j \leqslant 5}(x_j-x_i)(x_1x_2x_3(x_4+x_5)) \neq 0.$$
So there is no  $(x_3, x_4, x_5)$ such that $|M_{(2,3,4)}|=0$ in this case.

\emph{Case 2:} Let $x_3 \neq \frac{x_1x_2}{x_1+x_2}$ and $x_1x_2x_3+(x_1x_2+x_1x_3+x_2x_3)x_4=0$ which implies  $x_4 = \frac{x_1x_2x_3}{x_1x_2+x_1x_3+x_2x_3}$. Then
$$|M_{(2,3,4)}|=\prod_{1 \leqslant i<j \leqslant 5}(x_j-x_i)(x_1x_2x_3x_4) \neq 0.$$
So there is no  $(x_3, x_4, x_5)$ such that $|M_{(2,3,4)}|=0$ in this case.

\emph{Case 3:} Let $x_3 \neq \frac{x_1x_2}{x_1+x_2}$ and $x_1x_2x_3+(x_1x_2+x_1x_3+x_2x_3)x_4 \neq 0 $ which implies $ x_4 \neq \frac{x_1x_2x_3}{x_1x_2+x_1x_3+x_2x_3}$. Then
$$|M_{(2,3,4)}|=0 \Leftrightarrow x_5=\frac{x_1x_2x_3x_4}{x_1x_2x_3+x_1x_2x_4+x_1x_3x_4+x_2x_3x_4}.$$
Since $x_1, x_2, x_3, x_4, x_5$ are five pairwise distinct elements in $\gf_q^*$, then $x_5 \notin \{0,x_1,x_2,x_3,x_4\}$. Similarly to the proof in Lemma \ref{thm-prooft1}, we can derive that  $x_4 \notin \{\frac{x_2x_3}{x_2+x_3},\frac{x_1x_3}{x_1+x_3},\frac{x_1x_2}{x_1+x_2} \}$. We then conclude that $|M_{(2,3,4)}|=0$ if and only
$$x_3 \notin \left\{0, x_1, x_2, \frac{x_1x_2}{x_1+x_2} \right\},$$
$$x_4 \notin \left\{0, x_1, x_2, x_3, \frac{x_2x_3}{x_2+x_3},\frac{x_1x_3}{x_1+x_3},\frac{x_1x_2}{x_1+x_2}, \frac{x_1x_2x_3}{x_1x_2+x_1x_3+x_2x_3} \right\}$$
and
$$x_5=\frac{x_1x_2x_3x_4}{x_1x_2x_3+x_1x_2x_4+x_1x_3x_4+x_2x_3x_4}.$$
It is easy to prove that the elements in
$$ \left\{0, x_1, x_2, \frac{x_1x_2}{x_1+x_2} \right\} $$
are pairwise distinct, so is
$$ \left\{0, x_1, x_2, x_3, \frac{x_2x_3}{x_2+x_3},\frac{x_1x_3}{x_1+x_3},\frac{x_1x_2}{x_1+x_2}, \frac{x_1x_2x_3}{x_1x_2+x_1x_3+x_2x_3} \right\}.$$
Then the total number of different choices of $(x_3, x_4,x_5)$ such that $|M_{(2,3,4)}|=0$ is equal to $\frac{(q-4)(q-8)}{6}$ regardless of the ordering of $x_3, x_4, x_5$.

The desired conclusion follows.
\end{proof}

\begin{theorem}\label{thm-3}
Let $m$ be a positive integer with $m>3$ and $q=2^m$. Then $\cC_{(2,3,4)}$ is a $[q-1, 5, q-6]$ NMDS code over $\gf_q$ with weight enumerator
$$A(z)=1 + \frac{(q-1)^2(q-2)(q-4)(q-8)}{120} z^{q-6} + \frac{5(q-1)^2(q-2)(q-4)}{24} z^{q-5} + $$
$$\frac{(q-1)^2(q-2)(q^2-2q+2)}{12} z^{q-4}+\frac{(q-1)^2(q-2)(2q^2+9q+28)}{12} z^{q-3}+$$
$$\frac{(q-1)^2(9q^3+22q^2+12q+176)}{24} z^{q-2}+\frac{(q-1)(44q^4+65q^3+125q^2-170q+536)}{120}z^{q-1}.$$
Moreover, the minimum weight codewords of $\cC_{(2,3,4)}$ support a 2-$(q-1,q-6,\frac{(q-4)(q-6)(q-7)(q-8)}{120})$ simple design and the minimum weight codewords of $\cC_{(2,3,4)}^\perp$ support a 2-$(q-1,5,\frac{(q-4)(q-8)}{6})$ simple design.
\end{theorem}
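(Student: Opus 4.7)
The plan is to follow the template of Theorems \ref{thm-1} and \ref{thm-2}, translated to the $5$-dimensional setting with exponent set $\{0,2,3,4,5\}$. All of the nontrivial combinatorial content is packaged in Lemma \ref{thm-prooft3}, which for any two fixed distinct $x_1,x_2\in\gf_q^*$ counts exactly $(q-4)(q-8)/6$ unordered triples $\{x_3,x_4,x_5\}$ making $|M_{(2,3,4)}|=0$. The rest of the argument is a four-step procedure that mirrors the proof of Theorem \ref{thm-1}.

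First, $\dim(\cC_{(2,3,4)})=5$ follows from the fact that a nontrivial polynomial $a+bx^2+cx^3+dx^4+ex^5$ of degree at most $5$ cannot vanish on all of $\gf_q^*$ when $q-1>5$. Next, to show $d(\cC_{(2,3,4)}^\perp)=5$, the existence of five linearly dependent columns is immediate from Lemma \ref{thm-prooft3}, while the linear independence of any four columns follows by exhibiting the $4\times 4$ minor obtained from rows $2,3,4,5$: factoring $x_j^2$ out of column $j$ reduces the determinant to a standard Vandermonde $\prod_{i<j}(x_j-x_i)\neq 0$, avoiding any appeal to the oval-polynomial machinery used in Theorem \ref{thm-1}. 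Third, mimicking Theorem \ref{thm-1}, each weight-$5$ codeword of $\cC_{(2,3,4)}^\perp$ is determined up to a scalar by a $5$-subset of $\gf_q^*$ on which $|M_{(2,3,4)}|=0$; Lemma \ref{thm-prooft3} shows this count is independent of the choice of $x_1,x_2$, so the supports form a simple $2$-$(q-1,5,(q-4)(q-8)/6)$ design, whence $A_5^\perp=(q-1)^2(q-2)(q-4)(q-8)/120$ via Equation (\ref{eqn-t}).

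Fourth, the Singleton bound leaves $d(\cC_{(2,3,4)})\in\{q-5,q-6\}$; the value $q-5$ is excluded because it would force both $\cC_{(2,3,4)}$ and its dual to be MDS, contradicting $d(\cC_{(2,3,4)}^\perp)=5$. Thus $\cC_{(2,3,4)}$ is $[q-1,5,q-6]$ NMDS, Lemma \ref{lem-minweight} delivers $A_{q-6}=A_5^\perp$ together with the complementary $2$-$(q-1,q-6,(q-4)(q-6)(q-7)(q-8)/120)$ design via Equation (\ref{eqn-t}), and Lemma \ref{lem-nmdsweight} recursively produces the remaining frequencies $A_{q-5},A_{q-4},\ldots,A_{q-1}$. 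The main obstacle has already been absorbed into Lemma \ref{thm-prooft3}; the most delicate bookkeeping lies in verifying that the ``forbidden'' elements in its case analysis remain pairwise distinct, which is the reason for the hypothesis $m>3$ (so that $q\geq 32$). A useful arithmetic sanity check along the way is to confirm that the $A_{q-5}$ predicted by Lemma \ref{lem-nmdsweight} from the computed $A_{q-6}$ matches the stated coefficient $\tfrac{5(q-1)^2(q-2)(q-4)}{24}$.
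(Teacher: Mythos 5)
Your proposal is correct and follows essentially the same route as the paper, which itself proves Theorem \ref{thm-3} only by appeal to the template of Theorem \ref{thm-1} together with Lemmas \ref{lem-nmdsweight}, \ref{lem-minweight}, \ref{thm-prooft3} and Equation (\ref{eqn-t}); your four steps are exactly the omitted details, and your arithmetic ($A_5^\perp=A_{q-6}$, the two $\lambda$ values, and the $A_{q-5}$ consistency check) all verify. The only point worth noting is your clean observation that four-column independence reduces to an ordinary Vandermonde determinant after factoring $x_j^2$ from each column of the rows with exponents $2,3,4,5$, which is the natural replacement for the oval-polynomial argument used in Theorem \ref{thm-1} and is consistent with the paper's remark that this case does not rely on Lemma \ref{lem-ab}.
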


\begin{proof}
Similarly to the proof of Theorem \ref{thm-1}, we can derive this Theorem by Lemmas \ref{lem-nmdsweight},\ref{lem-minweight},\ref{thm-prooft3} and Equation (\ref{eqn-t}). The details are omitted here.
\end{proof}

\subsection{When $(i,j,k)=(1,2,3)$}
Let $(i,j,k)=(1,2,3)$. We study the linear code $\cC_{(1,2,3)}$ in this subsection.

The following lemma plays an important role in the proof of our next main result.
\begin{lemma}\label{thm-prooft4}
Let $m$ be a positive integer with $m>3$ and $q=2^m$. Let $x_1, x_2, x_3, x_4, x_5$ be five pairwise distinct elements in $\gf_q^*$. Define a matrix
\begin{eqnarray*}\label{eqn-m4}
M_{(1,2,3)}=
\left[
\begin{array}{lllll}
1 & 1 & 1 & 1 & 1\\
x_1 & x_2 & x_3 & x_4 & x_5\\
x_1^2 & x_2^2 & x_3^2 & x_4^2 & x_5^2\\
x_1^3 & x_2^3 & x_3^3 & x_4^3 & x_5^3\\
x_1^5 & x_2^5 & x_3^5 & x_4^5 & x_5^5
\end{array}\right].
\end{eqnarray*}
Then for any two different and fixed elements $x_1, x_2$, the total number of different choices of $(x_3, x_4, x_5)$ such that $|M_{(1,2,3)}|=0$ is equal to $\frac{(q-4)(q-8)}{6}$ (regardless of the ordering of $x_3, x_4, x_5$).
\end{lemma}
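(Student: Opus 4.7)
The plan is to apply Lemma~\ref{lem-Vandermonde} directly, since $M_{(1,2,3)}$ is exactly a generalized Vandermonde whose rows correspond to the exponent set $\{0,1,2,3,5\}$, i.e.\ the full range $\{0,1,\ldots,5\}$ with $\ell = 4$ deleted. Applied in our setting ($n=5$, $\ell=4$) the lemma yields
\[
|M_{(1,2,3)}| \;=\; \Bigl(\prod_{1 \le i < j \le 5}(x_j - x_i)\Bigr)\,\sigma_{1}(x_1,\ldots,x_5) \;=\; \Bigl(\prod_{1 \le i < j \le 5}(x_j - x_i)\Bigr)\,(x_1+x_2+x_3+x_4+x_5).
\]
Because the $x_i$ are pairwise distinct, the Vandermonde factor is nonzero, so $|M_{(1,2,3)}|=0$ is equivalent, in characteristic $2$, to the linear condition $x_5 = x_1+x_2+x_3+x_4$.

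Now fix two distinct elements $x_1,x_2 \in \gf_q^*$. I will parametrize the admissible configurations by the ordered pair $(x_3,x_4)$, with $x_5$ then forced by the equation above. The admissibility requirements are that $x_3,x_4,x_5$ be nonzero and that $\{x_1,x_2,x_3,x_4,x_5\}$ have five distinct elements. Translating each constraint on $x_5$ into a constraint on $x_4$ (using $x_5 = x_1+x_2+x_3+x_4$), one sees that $x_3$ must avoid the four-element set $\{0,x_1,x_2,x_1+x_2\}$, where the last exclusion comes from $x_5 \neq x_4$, and that $x_4$ must then avoid the eight-element set
\[
\{\,0,\ x_1,\ x_2,\ x_3,\ x_1+x_2,\ x_1+x_3,\ x_2+x_3,\ x_1+x_2+x_3\,\}.
\]
The routine verification step will be to check that, under the assumption $x_3 \notin \{0,x_1,x_2,x_1+x_2\}$, these eight elements are genuinely distinct; each coincidence leads to one of $x_i=0$, $x_i=x_j$, or $x_3=x_1+x_2$, all excluded. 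Hence the number of ordered pairs $(x_3,x_4)$ giving $|M_{(1,2,3)}|=0$ is $(q-4)(q-8)$.

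Finally, since the statement asks for the count regardless of the ordering of $x_3,x_4,x_5$ and $x_5$ is determined by the ordered pair $(x_3,x_4)$, each unordered triple $\{x_3,x_4,x_5\}$ satisfying the equation is produced from exactly $3 \cdot 2 = 6$ ordered pairs (choose which of the three values is placed in slot $x_3$, then in slot $x_4$). Dividing gives the claimed total $\tfrac{(q-4)(q-8)}{6}$. The only mildly technical step is the distinctness check for the eight excluded values of $x_4$, but unlike the $(1,3)$ case in Lemma~\ref{thm-prooft1} this argument needs only elementary field arithmetic and does not invoke Lemma~\ref{lem-ab}, which is why no parity hypothesis on $m$ is required.
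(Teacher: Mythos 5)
Your proposal is correct and follows essentially the same route as the paper's own proof: apply Lemma~\ref{lem-Vandermonde} to reduce $|M_{(1,2,3)}|=0$ to $x_1+x_2+x_3+x_4+x_5=0$, translate the nonzero/distinctness requirements into the exclusion sets $\{0,x_1,x_2,x_1+x_2\}$ for $x_3$ and the eight-element set for $x_4$, and divide $(q-4)(q-8)$ by $6$ to pass to unordered triples. The distinctness check and the observation that no parity hypothesis on $m$ is needed also match the paper.
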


\begin{proof}
Let  $x_1, x_2$ be two  different and fixed elements in $\gf_q^*$.
By Lemma \ref{lem-Vandermonde}, $|M_{(1,2,3)}|=0$ if and only if $x_1+x_2+x_3+x_4+x_5=0$. Then

$$|M_{(1,2,3)}|=0 \Leftrightarrow x_5=x_1+x_2+x_3+x_4. $$
Since  $x_5 \notin \{0,x_1,x_2,x_3,x_4\}$, we  deduce that $x_4 \notin \{x_2+x_3,x_1+x_3,x_1+x_2,x_1+x_2+x_3\}$ and $x_3 \neq x_1+x_2 $. We then conclude that  $|M_{(1,2,3)}|=0$ if and only if
$$x_3 \notin \{0, x_1, x_2, x_1+x_2 \},$$
$$x_4 \notin \{0, x_1, x_2, x_3, x_2+x_3,x_1+x_3,x_1+x_2, x_1+x_2+x_3 \}$$
and
$$x_5=x_1+x_2+x_3+x_4.$$
Obviously, the elements in $ \{0, x_1, x_2, x_1+x_2 \} \mbox{ and } \{0, x_1, x_2, x_3, x_2+x_3,x_1+x_3,x_1+x_2, x_1+x_2+x_3 \}$
are pairwise distinct, respectively. So the total number of different choices of $(x_3, x_4,x_5)$ such that $|M_{(1,2,3)}|=0$ is equal to $\frac{(q-4)(q-8)}{6}$ regardless of the ordering of $x_3, x_4, x_5$.
Then we have completed the proof.
\end{proof}

\begin{theorem}\label{thm-4}
Let $m$ be a positive integer with $m>3$ and $q=2^m$. Then $\cC_{(1,2,3)}$ is a $[q-1, 5, q-6]$ NMDS code over $\gf_q$ with weight enumerator
$$A(z)=1 + \frac{(q-1)^2(q-2)(q-4)(q-8)}{120} z^{q-6} + \frac{5(q-1)^2(q-2)(q-4)}{24} z^{q-5} + $$
$$\frac{(q-1)^2(q-2)(q^2-2q+2)}{12} z^{q-4}+\frac{(q-1)^2(q-2)(2q^2+9q+28)}{12} z^{q-3}+$$
$$\frac{(q-1)^2(9q^3+22q^2+12q+176)}{24} z^{q-2}+\frac{(q-1)(44q^4+65q^3+125q^2-170q+536)}{120}z^{q-1}.$$
Moreover, the minimum weight codewords of $\cC_{(1,2,3)}$ support a 2-$(q-1,q-6,\frac{(q-4)(q-6)(q-7)(q-8)}{120})$ simple design and the minimum weight codewords of $\cC_{(1,2,3)}^\perp$ support a 2-$(q-1,5,\frac{(q-4)(q-8)}{6})$ simple design.
\end{theorem}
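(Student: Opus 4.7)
The plan is to follow the template used in the proofs of Theorems \ref{thm-1} and \ref{thm-3}, now invoking Lemma \ref{thm-prooft4} as the key combinatorial input. First I would verify $\dim(\cC_{(1,2,3)}) = 5$: any nontrivial $\gf_q$-linear relation among the rows of $G_{(1,2,3)}$ would amount to a polynomial $a + bx + cx^2 + dx^3 + ex^5$ of degree at most $5$ vanishing on all $q - 1 > 5$ elements of $\gf_q^*$, which is impossible. Next I would identify $d(\cC_{(1,2,3)}^\perp) = 5$: any four columns of $G_{(1,2,3)}$ are linearly independent because the rows with exponents $0, 1, 2, 3$ restrict to a $4 \times 4$ Vandermonde matrix in four distinct nonzero entries; on the other hand, Lemma \ref{thm-prooft4} exhibits a $5$-tuple $(x_1, \ldots, x_5)$ with $|M_{(1,2,3)}| = 0$, giving a dependence among five columns.

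The next step is to show that the supports of the weight-$5$ codewords of $\cC_{(1,2,3)}^\perp$ are the blocks of a $2$-$(q-1, 5, (q-4)(q-8)/6)$ simple design. For any weight-$5$ codeword $\mathbf{c} \in \cC_{(1,2,3)}^\perp$ with support $\{i_1, \ldots, i_5\}$, setting $x_j = \alpha^{i_j}$, the $5 \times 5$ submatrix of $G_{(1,2,3)}$ on these columns has rank $4$ (the $4 \times 4$ Vandermonde minor is nonzero), so its kernel is one-dimensional and contributes exactly $q - 1$ codewords with the given support; conversely every $5$-subset $\{x_1, \ldots, x_5\}$ of $\gf_q^*$ with $|M_{(1,2,3)}| = 0$ arises this way. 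Lemma \ref{thm-prooft4} says that the number of such $5$-subsets containing any prescribed pair $\{x_1, x_2\}$ is the constant $(q-4)(q-8)/6$, independent of the pair, which is exactly the 2-design property. Equation \eqref{eqn-t} then yields
$$A_5^\perp \;=\; \frac{(q-1)\binom{q-1}{2}}{\binom{5}{2}}\cdot\frac{(q-4)(q-8)}{6} \;=\; \frac{(q-1)^2(q-2)(q-4)(q-8)}{120}.$$

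Finally, codewords of $\cC_{(1,2,3)}$ have the form $\mathbf{c}_{a,b,c,d,e} = (a + bx + cx^2 + dx^3 + ex^5)_{x \in \gf_q^*}$, whose at most $5$ zeros force $d(\cC_{(1,2,3)}) \geq q - 6$; the Singleton bound forces $d \leq q - 5$; and $d = q-5$ would make both $\cC_{(1,2,3)}$ and $\cC_{(1,2,3)}^\perp$ MDS with $d^\perp = 6$, contradicting the previous step, so $\cC_{(1,2,3)}$ is $[q-1, 5, q-6]$ NMDS. Lemma \ref{lem-minweight} gives $A_{q-6} = A_5^\perp$ together with a bijection (up to scalar) between minimum weight codewords of the two codes; applying Equation \eqref{eqn-t} to $\cC_{(1,2,3)}$ at $w = q - 6$ produces the $2$-$(q-1, q-6, (q-4)(q-6)(q-7)(q-8)/120)$ simple design. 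Substituting $A_{q-6}$ into Lemma \ref{lem-nmdsweight} delivers the remaining frequencies $A_{q-5}, \ldots, A_{q-1}$ and hence the stated weight enumerator. The principal difficulty, namely the precise count $(q-4)(q-8)/6$ of extensions of a fixed pair into a dependent $5$-subset together with the crucial independence of this count from the chosen pair, was already resolved in Lemma \ref{thm-prooft4}; everything else in the proof is routine template bookkeeping mirroring Theorems \ref{thm-1} and \ref{thm-3}.
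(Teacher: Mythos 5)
Your proposal is correct and follows essentially the same route as the paper, which proves Theorem \ref{thm-4} by explicitly deferring to the template of Theorem \ref{thm-1} with Lemma \ref{thm-prooft4} as the key input; your expansion of that template (dimension via the degree bound, $d(\cC_{(1,2,3)}^\perp)=5$ via the $4\times 4$ Vandermonde minor on the consecutive exponents $0,1,2,3$ plus the existence of a singular $M_{(1,2,3)}$, the $2$-design property from the pair-independent count $\frac{(q-4)(q-8)}{6}$, then $A_5^\perp$, the NMDS conclusion, $A_{q-6}=A_5^\perp$ from Lemma \ref{lem-minweight}, and Lemma \ref{lem-nmdsweight} for the remaining frequencies) matches what the authors intend. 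The computations of $A_5^\perp$ and of both design parameters check out against the stated weight enumerator.
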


\begin{proof}
With a similar proof as that of  Theorem \ref{thm-1}, we can easily prove this theorem by Lemma \ref{thm-prooft4}.
\end{proof}

\subsection{When $(i,j,k)=(1,2,4)$}
In this subsection, let $(i,j,k)=(1,2,4)$ and we study the linear code $\cC_{(1,2,4)}$.

The following lemma is essential for the proof of our result.
\begin{lemma}\label{thm-prooft5}
Let $m$ be an odd integer with $m>3$ and $q=2^m$. Let $x_1, x_2, x_3, x_4, x_5$ be five pairwise distinct elements in $\gf_q^*$. Define the matrix
\begin{eqnarray*}\label{eqn-m5}
M_{(1,2,4)}=
\left[
\begin{array}{lllll}
1 & 1 & 1 & 1 & 1\\
x_1 & x_2 & x_3 & x_4 & x_5\\
x_1^2 & x_2^2 & x_3^2 & x_4^2 & x_5^2\\
x_1^4 & x_2^4 & x_3^4 & x_4^4 & x_5^4\\
x_1^5 & x_2^5 & x_3^5 & x_4^5 & x_5^5
\end{array}\right].
\end{eqnarray*}
Then for any two different and fixed elements $x_1, x_2$, the total number of different choices of $(x_3, x_4, x_5)$ such that $|M_{(1,2,4)}|=0$ is equal to $\frac{(q-5)(q-8)}{6}$ (regardless of the ordering of $x_3,x_4,x_5$).
\end{lemma}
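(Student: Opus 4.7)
The plan is to mirror the template used for Lemmas \ref{thm-prooft1}, \ref{thm-prooft3} and \ref{thm-prooft4}: apply the generalized Vandermonde identity to reduce the vanishing of $|M_{(1,2,4)}|$ to the vanishing of an elementary symmetric polynomial, then count. Since $M_{(1,2,4)}$ is a $5 \times 5$ generalized Vandermonde with the $x^3$-row deleted, Lemma \ref{lem-Vandermonde} (with $n = 5$, $\ell = 3$) gives
\begin{equation*}
|M_{(1,2,4)}| = \Bigl(\prod_{1 \leq i < j \leq 5}(x_j - x_i)\Bigr) \cdot \sigma_{2}(x_1, x_2, x_3, x_4, x_5).
\end{equation*}
Pairwise distinctness of the $x_i$ makes the Vandermonde nonzero, so the determinant vanishes iff $\sigma_2(x_1,\ldots,x_5) = 0$. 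I would then write this equation as linear in $x_5$:
\begin{equation*}
\sigma_2 = (x_1 + x_2 + x_3 + x_4)\, x_5 + \sigma_2(x_1, x_2, x_3, x_4) = 0.
\end{equation*}

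The analysis splits into two cases on the coefficient $x_1 + x_2 + x_3 + x_4$. In the degenerate case where it vanishes (so $x_4 = x_1 + x_2 + x_3$), I would substitute and simplify in characteristic $2$ to obtain $\sigma_2(x_1, x_2, x_3, x_4) = x_1^2 + x_2^2 + x_3^2 + x_1 x_2 + x_1 x_3 + x_2 x_3$, which is precisely the expression handled by Lemma \ref{lem-ab}. Since $m$ is odd, Lemma \ref{lem-ab} forces this quantity to be nonzero for pairwise distinct $x_1,x_2,x_3$, so $\sigma_2 \ne 0$ and no solutions arise in this case. This is exactly where the odd-$m$ hypothesis is used.

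In the generic case $x_1 + x_2 + x_3 + x_4 \neq 0$, the equation $\sigma_2 = 0$ determines $x_5$ uniquely as
\begin{equation*}
x_5 = \frac{\sigma_2(x_1, x_2, x_3, x_4)}{x_1 + x_2 + x_3 + x_4}.
\end{equation*}
So I would count ordered triples $(x_3, x_4, x_5)$ by first choosing $x_3$ and $x_4$, with $x_5$ then forced, and finally dividing by $3! = 6$. For each admissible $x_3 \in \gf_q^* \setminus \{x_1,x_2\}$, the forbidden values of $x_4$ are $\{0, x_1, x_2, x_3\}$ (for distinctness), $x_4 = x_1 + x_2 + x_3$ (to remain in the generic case), together with the three further linear conditions on $x_4$ coming from the constraints $x_5 \ne x_1$, $x_5 \ne x_2$, $x_5 \ne x_3$ (the constraints $x_5 \ne 0$ and $x_5 \ne x_4$ turn out to be automatic or reduce to a previously recorded exclusion, as in the earlier lemmas). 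Symmetrically, $x_3$ must avoid values that would collapse the forbidden set for $x_4$, accounting for the factor $q - 5$.

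The main obstacle is the bookkeeping in the counting step: I must enumerate the linear exclusions, confirm that the excluded values for $x_4$ (respectively for $x_3$) are pairwise distinct and lie in $\gf_q^* \setminus \{x_1, x_2\}$, and verify that none of these distinctness checks fails. As in the proof of Lemma \ref{thm-prooft1}, the pairwise distinctness of these exclusion points will again rely on Lemma \ref{lem-ab} (which is the reason for $m$ odd), together with Lemma \ref{lem-s}. Once this is settled, the count of ordered triples is $(q-5)(q-8)$, and dividing by $3!$ yields the claimed $\frac{(q-5)(q-8)}{6}$ unordered choices of $(x_3,x_4,x_5)$.
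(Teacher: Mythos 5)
Your overall strategy coincides with the paper's: reduce $|M_{(1,2,4)}|=0$ via Lemma \ref{lem-Vandermonde} to $\sigma_2(x_1,\ldots,x_5)=0$, view this as linear in $x_5$, dispose of the degenerate case $x_1+x_2+x_3+x_4=0$ with Lemma \ref{lem-ab} (this is indeed where odd $m$ enters), and count ordered triples before dividing by $3!$. However, your counting step contains a concrete error. The constraints $x_5\neq 0$ and $x_5\neq x_4$ are \emph{not} automatic and do not reduce to previously recorded exclusions. Writing $x_5=\frac{x_1x_2+x_1x_3+x_2x_3+(x_1+x_2+x_3)x_4}{x_1+x_2+x_3+x_4}$, the condition $x_5=0$ is equivalent to $x_4=\frac{x_1x_2+x_1x_3+x_2x_3}{x_1+x_2+x_3}$ (when $x_3\neq x_1+x_2$), and $x_5=x_4$ is equivalent to $x_4^2=x_1x_2+x_1x_3+x_2x_3$, i.e.\ to $x_4=b$ where $b$ is the unique square root of $x_1x_2+x_1x_3+x_2x_3$. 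Both are genuine additional exclusions, so the generic forbidden set for $x_4$ has $10$ elements (including $0$), leaving $q-10$ admissible values of $x_4$, not $q-8$.

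Consequently the count cannot be organized as ``$(q-5)$ admissible values of $x_3$, each with $(q-8)$ values of $x_4$.'' In the paper's argument $x_3$ ranges over all $q-3$ elements of $\gf_q^*\setminus\{x_1,x_2\}$, and the number of admissible $x_4$ varies with $x_3$: for the five special values $x_3\in\left\{x_1+x_2,\ \frac{x_1^2}{x_2},\ \frac{x_2^2}{x_1},\ \frac{x_1x_2}{x_1+x_2},\ c\right\}$ with $c^2=x_1x_2$, two of the ten exclusions collapse (onto $0$ or onto each other), leaving $8$ forbidden values and hence $q-8$ choices of $x_4$; for the remaining $q-8$ values of $x_3$ all ten exclusions are pairwise distinct --- a fact that itself requires Lemmas \ref{lem-s}, \ref{lem-ab} and \ref{lem-abc} --- giving $q-10$ choices. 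The total over ordered triples is $5(q-8)+(q-8)(q-10)=(q-5)(q-8)$, which yields the claim after division by $6$. Your sketch arrives at the correct final product, but by an accounting that would not survive execution: identifying the five degenerate values of $x_3$ and verifying the pairwise distinctness of the exclusion sets is the actual content of the proof, and that content is missing from your proposal.
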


\begin{proof}
By Lemma \ref{lem-Vandermonde}, $|M_{(1,2,4)}|=0$ if and only if
$$x_1x_2+x_1x_3+x_1x_4+x_2x_3+x_2x_4+x_3x_4+(x_1+x_2+x_3+x_4)x_5=0.$$
Let $x_1,x_2$ be two different and fixed elements in $\gf_q^*$. Consider the following cases.

\emph{Case 1:} Let $x_3 = x_1+x_2$. Then
$$|M_{(1,2,4)}|=\prod_{1 \leqslant i<j \leqslant 5}(x_j-x_i)(x_1^2+x_2^2+x_1x_2+x_4x_5) = 0 \Leftrightarrow x_5=\frac{x_1^2+x_2^2+x_1x_2}{x_4}.$$
 Then $x_5 \notin \{0,x_1,x_2,x_3,x_4\}$ implies $x_4 \notin \{x_1+x_2+\frac{x_2^2}{x_1},x_1+x_2+\frac{x_1^2}{x_2},\frac{x_1^2+x_1x_2+x_2^2}{x_1+x_2},a\}$, where $a^2=x_1^2+x_1x_2+x_2^2.$ In this case , we conclude that $|M_{(1,2,4)}|=0$ if and only if
$$ x_4 \notin \left\{0,x_1,x_2,x_1+x_2,x_1+x_2+\frac{x_2^2}{x_1},x_1+x_2+\frac{x_1^2}{x_2},\frac{x_1^2+x_1x_2+x_2^2}{x_1+x_2},a\right\},$$
where $a^2=x_1^2+x_1x_2+x_2^2,$ and
$$ x_5=\frac{x_1^2+x_1x_2+x_2^2}{x_4}. $$
By Lemmas \ref{lem-s} and \ref{lem-ab}, we can easily prove that the elements in
$$\left\{0,x_1,x_2,x_1+x_2,x_1+x_2+\frac{x_2^2}{x_1},x_1+x_2+\frac{x_1^2}{x_2},\frac{x_1^2+x_1x_2+x_2^2}{x_1+x_2},a\right\}$$
are pairwise distinct.
In this case,  the total number of different choices of $(x_3, x_4,x_5)$ such that $|M_{(1,2,4)}|=0$ is equal to $\frac{(q-8)}{6}$ regardless of the ordering of $x_3,x_4,x_5$.

\emph{Case 2:} Let $x_3 \neq x_1+x_2$ and $x_1+x_2+x_3+x_4=0$.  Then $x_4=x_1+x_2+x_3$. By Lemma \ref{lem-ab},
$$|M_{(1,2,4)}|=\prod_{1 \leqslant i<j \leqslant 5}(x_j-x_i)(x_1^2+x_2^2+x_3^2+x_1x_2+x_1x_3+x_2x_3) \neq 0.$$
So there is no  $(x_3, x_4, x_5)$ such that $|M_{(1,2,4)}|=0$ in this case.

\emph{Case 3:} Let $x_3 \neq x_1+x_2$ and $x_1+x_2+x_3+x_4\neq0$. Then $x_4 \neq x_1+x_2+x_3$ and
$$|M_{(1,2,4)}|=0 \Leftrightarrow  x_5=\frac{x_1x_2+x_1x_3+x_1x_4+x_2x_3+x_2x_4+x_3x_4}{x_1+x_2+x_3+x_4}.$$
It is easy to deduce that $x_5 \notin \{0,x_1,x_2,x_3,x_4\}$ implies
$$x_4 \notin \left\{\frac{x_1x_2+x_1x_3+x_2x_3}{x_1+x_2+x_3},\frac{x_1^2+x_2x_3}{x_2+x_3},\frac{x_2^2+x_1x_3}{x_1+x_3},\frac{x_3^2+x_1x_2}{x_1+x_2}, b\right\},$$
where $b^2=x_1x_2+x_1x_3+x_2x_3$. In this case, we conclude that  $|M_{(1,2,4)}|=0$ if and only if
$$x_3 \notin \{0, x_1, x_2, x_1+x_2 \},$$
$$x_4 \notin L:= \left\{0, x_1, x_2, x_3, x_1+x_2+x_3 , \frac{x_1x_2+x_1x_3+x_2x_3}{x_1+x_2+x_3},\frac{x_1^2+x_2x_3}{x_2+x_3},\frac{x_2^2+x_1x_3}{x_1+x_3},\frac{x_3^2+x_1x_2}{x_1+x_2},b \right\},$$
where $b^2=x_1x_2+x_1x_3+x_2x_3$, and
$$x_5=\frac{x_1x_2+x_1x_3+x_1x_4+x_2x_3+x_2x_4+x_3x_4}{x_1+x_2+x_3+x_4}.$$
Consider the following subcases of $L$.

\emph{Subcase 3.1:} If $x_3=\frac{x_1^2}{x_2}$, then $\frac{x_1^2+x_2x_3}{x_2+x_3}=0$, $\frac{x_1x_2+x_1x_3+x_2x_3}{x_1+x_2+x_3}=x_1$ and other elements in $L$ are pairwise distinct, which implies $|L|=8$. If $x_3=\frac{x_2^2}{x_1}$, then by the symmetry of $x_1$ and $x_2$, we have $|L|=8$.

\emph{Subcase 3.2:} If $x_3=c$ for $c^2=x_1x_2$, then $\frac{x_3^2+x_1x_2}{x_1+x_2}=0$, $\frac{x_1x_2+x_1x_3+x_2x_3}{x_1+x_2+x_3}=x_3$ and the other elements in $L$ are pairwise distinct, which implies $|L|=8$.

\emph{Subcase 3.3:} If $x_3=\frac{x_1x_2}{x_1+x_2}$, then $\frac{x_1x_2+x_1x_3+x_2x_3}{x_1+x_2+x_3}=b=0$ and other elements in $L$ are pairwise distinct, which implies $|L|=8$.

\emph{Subcase 3.4:} Let $x_3 \notin S:=\left\{0, x_1, x_2, x_1+x_2,\frac{x_1^2}{x_2},\frac{x_2^2}{x_1},\frac{x_1x_2}{x_1+x_2},c \right\}$, where $c^2=x_1x_2$. By Lemmas \ref{lem-s} and \ref{lem-ab}, it is easy to prove that the elements in $S$ are pairwise distinct and $|S|=8.$ By Lemmas \ref{lem-s} and \ref{lem-abc}, the elements in $L$ are pairwise distinct, which implies $|L|=10$.

In this case, the total number of different choices of $(x_3, x_4,x_5)$ such that $|M_{(1,2,4)}|=0$ is equal to
$$\frac{4(q-8)}{3!}+\frac{(q-8)(q-10)}{3!}=\frac{(q-6)(q-8)}{6}$$
regardless of the ordering of $x_3,x_4,x_5$.

Thanks to the above cases,  the total number of different choices of $(x_3, x_4,x_5)$ such that $|M_{(1,2,4)}|=0$ is equal to
$$\frac{(q-8)}{6}+\frac{(q-6)(q-8)}{6}=\frac{(q-5)(q-8)}{6}$$
regardless of the ordering of $x_3,x_4,x_5$.

Then we have completed the proof.
\end{proof}

\begin{theorem}\label{thm-5}
Let $m$ be an odd integer with $m>3$ and $q=2^m$. Then $\cC_{(1,2,4)}$ is a $[q-1, 5, q-6]$ NMDS code over $\gf_q$ with weight enumerator
$$A(z)=1 + \frac{(q-1)^2(q-2)(q-5)(q-8)}{120} z^{q-6} + \frac{(q-1)^2(q-2)(3q-14)}{12} z^{q-5} + $$
$$\frac{(q-1)^2(q-2)(q^2-3q+10)}{12} z^{q-4}+\frac{(q-1)^2(q-2)(q^2+5q+10)}{6} z^{q-3}+$$
$$\frac{(q-1)^2(9q^3+21q^2+22q+160)}{24} z^{q-2}+\frac{(q-1)(22q^4+33q^3+57q^2-72q+260)}{60}z^{q-1}.$$
Moreover, the minimum weight codewords of $\cC_{(1,2,4)}$ support a 2-$(q-1,q-6,\frac{(q-5)(q-6)(q-7)(q-8)}{120})$ simple design and the minimum weight codewords of $\cC_{(1,2,4)}^\perp$ support a 2-$(q-1,5,\frac{(q-5)(q-8)}{6})$ simple design.
\end{theorem}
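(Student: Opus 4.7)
The plan is to follow the same scheme that was used for Theorems \ref{thm-1}, \ref{thm-3} and \ref{thm-4}, since the key enumerative input (Lemma \ref{thm-prooft5}) is already in hand. First I would verify that $\dim(\cC_{(1,2,4)})=5$: writing an arbitrary nontrivial linear combination of the five rows as $a+bx+cx^{2}+dx^{4}+ex^{5}$, any relation among the rows would force this polynomial (of degree at most $5$) to vanish at all $q-1$ nonzero elements of $\gf_q$, which is impossible since $q-1>5$. Hence the rows are independent.

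Next I would show $d(\cC_{(1,2,4)}^{\perp})=5$. Any $4$ columns of $G_{(1,2,4)}$ are linearly independent because the corresponding $4\times 4$ minor is a (generalized) Vandermonde-type determinant that can be expanded via Lemma \ref{lem-Vandermonde}, and the relevant non-vanishing reduces to the same $x_1^2+x_2^2+x_1x_2\ne 0$ kind of statement settled by Lemma \ref{lem-ab} (using that $m$ is odd). On the other hand, Lemma \ref{thm-prooft5} already exhibits $5$-tuples $(x_1,\dots,x_5)$ with $|M_{(1,2,4)}|=0$, so there exist $5$ columns that are linearly dependent. Combined with $\dim(\cC_{(1,2,4)}^{\perp})=q-6$, this gives parameters $[q-1,q-6,5]$ for the dual. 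Arguing exactly as in the proof of Theorem \ref{thm-1}, since $\rank(M_{(1,2,4)})=4$ the support of a weight-$5$ codeword in $\cC_{(1,2,4)}^{\perp}$ determines the codeword up to a nonzero scalar, so by Lemma \ref{thm-prooft5} the minimum weight codewords of $\cC_{(1,2,4)}^{\perp}$ support a simple $2$-$(q-1,5,\tfrac{(q-5)(q-8)}{6})$ design; equation (\ref{eqn-t}) then yields
\begin{equation*}
A_5^{\perp}=\frac{(q-1)^2(q-2)(q-5)(q-8)}{120}.
\end{equation*}

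To pin down $d(\cC_{(1,2,4)})$, note that a codeword of $\cC_{(1,2,4)}$ has the form $(a+bx+cx^2+dx^4+ex^5)_{x\in\gf_q^*}$, so its weight is at least $q-1-5=q-6$. If equality $d=q-5$ failed and $d=q-4$, then $\cC_{(1,2,4)}$ would be MDS and so would its dual, contradicting $d(\cC_{(1,2,4)}^{\perp})=5<q-5$. Thus $\cC_{(1,2,4)}$ is an $[q-1,5,q-6]$ NMDS code. By Lemma \ref{lem-minweight}, $A_{q-6}=A_5^{\perp}$, and the same lemma transfers the $2$-design structure: the minimum weight codewords of $\cC_{(1,2,4)}$ support the complementary $2$-$(q-1,q-6,\tfrac{(q-5)(q-6)(q-7)(q-8)}{120})$ simple design, where the value of $\lambda$ is obtained from (\ref{eqn-t}) (or equivalently (\ref{eqn-complementarydesign})).

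Finally, the complete weight enumerator is recovered by plugging $A_{q-6}$ into Lemma \ref{lem-nmdsweight}: the formulas for $A_{q-6+s}$ with $s=1,2,3,4,5$ in that lemma become an explicit rational expression in $q$, and simplifying produces exactly the coefficients claimed in the theorem statement. The only genuinely hard step was the counting in Lemma \ref{thm-prooft5}; everything here is bookkeeping built on top of it, and the main thing to watch is ensuring the design is simple, which follows from the rank-$4$ argument together with the fact that Lemma \ref{thm-prooft5} counts unordered triples $(x_3,x_4,x_5)$.
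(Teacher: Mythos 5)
Your proposal is correct and follows essentially the same route as the paper, which itself only says that Theorem \ref{thm-5} is proved ``with a similar proof as that of Theorem \ref{thm-1}'' using Lemma \ref{thm-prooft5}; your fleshed-out version (dimension via root counting, $d(\cC_{(1,2,4)}^{\perp})=5$ via the generalized Vandermonde expansion together with Lemma \ref{lem-ab} for $m$ odd, the rank-$4$ argument for simplicity, Lemma \ref{lem-minweight} for $A_{q-6}=A_5^{\perp}$, and Lemma \ref{lem-nmdsweight} for the full weight enumerator) is exactly what that reference entails. The only blemish is a transcription slip in the Singleton step: for a $[q-1,5]$ code the bound reads $d\leq q-5$, so the case to exclude is $d=q-5$ (not $d=q-4$), and the contradiction is that the dual $[q-1,q-6,5]$ code would then have to be MDS with minimum distance $6$ rather than $5$.
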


\begin{proof}
By Lemma \ref{thm-prooft5}, we can prove this theorem with a similar proof as that of Theorem \ref{thm-1}.
The details are omitted here.
\end{proof}

\subsection{The extended code $\overline{\cC_{(1,2,4)}}$}
It is obvious that the extended  code $\overline{\cC_{(1,2,4)}}$ of $\cC_{(1,2,4)}$ is generated by the following matrix:

\begin{eqnarray*}\label{eqn-construction-extended}
\overline{G_{(1,2,4)}}=\left[
\begin{array}{lllll}
1 & 1 & \cdots & 1  & 1\\
\alpha_1^1 & \alpha_2^1 & \cdots & \alpha_{q-1}^1 & 0\\
\alpha_1^2 & \alpha_2^2 & \cdots & \alpha_{q-1}^2 & 0\\
\alpha_1^4 & \alpha_2^4 & \cdots & \alpha_{q-1}^4 & 0\\
\alpha_1^5 & \alpha_2^5 & \cdots & \alpha_{q-1}^5 & 0
\end{array}
\right].
\end{eqnarray*}

We need the following lemma to give our main result in this subsection.
\begin{lemma}\label{thm-proof-extended}
Let $m$ be an odd integer with $m>3$ and $q=2^m$. Let $x_1, x_2, x_3, x_4, x_5$ be five pairwise distinct elements in $\gf_q$. Define the matrix
\begin{eqnarray*}\label{eqn-m5}
\overline{M_{(1,2,4)}}=
\left[
\begin{array}{lllll}
1 & 1 & 1 & 1 & 1\\
x_1 & x_2 & x_3 & x_4 & x_5\\
x_1^2 & x_2^2 & x_3^2 & x_4^2 & x_5^2\\
x_1^4 & x_2^4 & x_3^4 & x_4^4 & x_5^4\\
x_1^5 & x_2^5 & x_3^5 & x_4^5 & x_5^5
\end{array}\right].
\end{eqnarray*}
Then for any pairwise different and fixed elements $x_1, x_2, x_3$, the total number of different choices of $(x_4, x_5)$ such that $|\overline{M_{(1,2,4)}}|=0$ is equal to $\frac{q-8}{2}$ (regardless of the ordering of $x_4, x_5$).
\end{lemma}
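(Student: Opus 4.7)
The plan is to apply Lemma \ref{lem-Vandermonde} directly to $\overline{M_{(1,2,4)}}$, which is a generalized Vandermonde matrix with the row of cubes (the $\ell=3$ row) deleted. Taking $n=5$ and $\ell=3$ there yields
\[
|\overline{M_{(1,2,4)}}|=\Bigl(\prod_{1\leq i<j\leq 5}(x_j-x_i)\Bigr)\,\sigma_2(x_1,x_2,x_3,x_4,x_5).
\]
Since the five $x_i$ are pairwise distinct, the Vandermonde factor is nonzero, so $|\overline{M_{(1,2,4)}}|=0$ if and only if $\sigma_2(x_1,\ldots,x_5)=0$. Writing $s:=x_1+x_2+x_3$ and $p:=x_1x_2+x_1x_3+x_2x_3$, this collapses to the clean bilinear equation
\[
p+s(x_4+x_5)+x_4x_5=0.
\]

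With $x_1,x_2,x_3$ fixed, I would first count ordered pairs $(x_4,x_5)\in\gf_q^2$ satisfying this equation. If $x_4=s$ the equation reduces to $p+s^2=x_1^2+x_2^2+x_3^2+x_1x_2+x_1x_3+x_2x_3=0$, which Lemma \ref{lem-ab} forbids for $m$ odd and pairwise distinct $x_1,x_2,x_3$; thus $x_4=s$ contributes nothing. For every other $x_4$, solving gives $x_5=(p+sx_4)/(s+x_4)$ uniquely. So there are exactly $q-1$ ordered solutions in total.

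Next I would subtract the ordered solutions that fail the distinctness requirements $x_4\neq x_5$ and $x_4,x_5\notin\{x_1,x_2,x_3\}$. These fall into three types: $(\mathrm{i})$ $x_4=x_5$, which forces $x_4^2=p$ and so yields the single pair $(\sqrt{p},\sqrt{p})$; $(\mathrm{ii})$ $x_4=x_i$ for some $i\in\{1,2,3\}$, where substituting gives the paired value $y_i:=(x_i^2+x_jx_k)/(x_j+x_k)$ (via the identity $p+sx_i=x_i^2+x_jx_k$ in characteristic $2$), producing three ordered pairs $(x_i,y_i)$; and $(\mathrm{iii})$ the symmetric pairs $(y_i,x_i)$ obtained by swapping. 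Subtracting $7$ invalid ordered pairs from $q-1$ leaves $q-8$ valid ordered pairs, i.e.\ $(q-8)/2$ unordered pairs, as required.

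The main obstacle is confirming that these seven ordered pairs are genuinely distinct, so the count is exactly seven and not smaller. The coincidences $y_i=x_j$ with $i=j$ reduce to $(x_i+x_j)(x_i+x_k)=0$, with $i\neq j$ they reduce to $x_i=x_j$; $y_i=y_j$ for $i\neq j$ is ruled out by the uniqueness of $x_5$ given $x_4$; and $x_i=\sqrt{p}$ again reduces to $(x_i+x_j)(x_i+x_k)=0$. All of these are precluded by the pairwise distinctness of $x_1,x_2,x_3$. The only verifications that genuinely use the hypothesis that $m$ is odd are $p+s^2\neq 0$ (to eliminate $x_4=s$) and $y_i\neq s$ (to guarantee $(y_i,x_i)$ actually arises from the general formula); both follow directly from Lemma \ref{lem-ab}.
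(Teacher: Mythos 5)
Your proposal is correct and follows essentially the same route the paper intends: the paper's proof simply defers to Lemmas \ref{lem-Vandermonde}, \ref{lem-ab} and \ref{lem-abc} and the template of Lemma \ref{thm-prooft1}, namely reducing $|\overline{M_{(1,2,4)}}|=0$ to $\sigma_2(x_1,\ldots,x_5)=0$, solving uniquely for the last variable, and excluding a set of degenerate values whose pairwise distinctness is checked via Lemma \ref{lem-ab}. Your reorganization (count all $q-1$ ordered solutions, then subtract the $7$ that violate distinctness) is an equivalent bookkeeping of the paper's ``$8$ forbidden values of the free variable'' count, and your distinctness verifications are exactly the ones needed.
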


\begin{proof}
By Lemmas \ref{lem-abc}, \ref{lem-ab} and \ref{lem-Vandermonde}, we can prove this theorem with a similar proof as that of Lemma \ref{thm-prooft1}.
\end{proof}

\begin{theorem}\label{thm-extended}
Let $m$ be an odd integer with $m>3$ and $q=2^m$. Then $\overline{\cC_{(1,2,4)}}$  is a $[q, 5, q-5]$ NMDS code over $\gf_q$ with weight enumerator
$$A(z)=1 + \frac{q(q-1)^2(q-2)(q-8)}{120} z^{q-5} + \frac{5q(q-1)^2(q-2)}{24} z^{q-4} + $$
$$\frac{q^2(q-1)^2(q-2)}{12} z^{q-3}+\frac{q(q-1)^2(2q^2+7q+20)}{12} z^{q-2}+$$
$$\frac{q(q-1)(9q^3+13q^2-6q+80)}{24} z^{q-1}+\frac{(q-1)(44q^4+21q^3+49q^2-114q+120)}{120}z^{q}.$$
Moreover, the minimum weight codewords of $\overline{\cC_{(1,2,4)}}$ support a 3-$(q,q-5,\frac{(q-5)(q-6)(q-7)(q-8)}{120})$ simple design and the minimum weight codewords of $\overline{\cC_{(1,2,4)}}^\perp$ support a 3-$(q,5,\frac{q-8}{2})$ simple design.
\end{theorem}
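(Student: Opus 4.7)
The plan is to mimic Theorems \ref{thm-1} and \ref{thm-5}, using the extra overall-parity column of $\overline{G_{(1,2,4)}}$ together with Lemma \ref{thm-proof-extended} to upgrade the $2$-design found there to a $3$-design. First I would show $\dim(\overline{\cC_{(1,2,4)}}) = 5$: the last column $(1,0,0,0,0)^\top$ forces the constant-row coefficient in any linear dependency of the rows to vanish, after which a nonzero polynomial of degree $\leq 5$ cannot vanish on all $q-1$ nonzero elements of $\gf_q$. Since $\sum_{x \in \gf_q^*} x^j = 0$ for $j \in \{1,2,4,5\}$, the overall parity coincides with the constant coefficient, so every codeword of $\overline{\cC_{(1,2,4)}}$ is precisely the evaluation vector $(f(x))_{x \in \gf_q}$ of some $f(x) = c_0 + c_1 x + c_2 x^2 + c_4 x^4 + c_5 x^5$ on the whole field. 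In particular $d(\overline{\cC_{(1,2,4)}}) \geq q-5$.

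\textbf{(b) Dual minimum distance.} Next I would argue that $d(\overline{\cC_{(1,2,4)}}^\perp) = 5$. Any four columns of $\overline{G_{(1,2,4)}}$ that avoid the extra column are already independent by Theorem \ref{thm-5}. For four columns including the extra one, independence reduces to rank $3$ of the $4 \times 3$ submatrix whose rows are indexed by the exponents $\{1,2,4,5\}$ at three distinct $x_1, x_2, x_3 \in \gf_q^*$. After factoring out $x_1 x_2 x_3$, the $3 \times 3$ minor on rows $\{1,2,4\}$ becomes a generalized Vandermonde determinant on rows $(1, x, x^3)$, which by Lemma \ref{lem-Vandermonde} equals $\prod_{i<j}(x_j-x_i)(x_1+x_2+x_3)$; when this vanishes I would switch to the minor on rows $\{1,2,5\}$, which after factoring $x_1 x_2 x_3$ reduces to the determinant on rows $(1, x, x^4)$, and this is nonzero by Lemma \ref{lem-oval} because $x^4$ is an oval polynomial of $\gf_q$ for odd $m$ (Lemma \ref{thm-knownopolys}). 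Existence of five linearly dependent columns, needed for equality $d^\perp = 5$, is supplied by Lemma \ref{thm-proof-extended}.

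\textbf{(c) Designs and weight enumerator.} Lemma \ref{thm-proof-extended} says that for every $3$-subset $\{x_1,x_2,x_3\} \subset \gf_q$, there are exactly $(q-8)/2$ unordered pairs $\{x_4,x_5\}$ completing it to a linearly dependent $5$-subset. Since the rank of such a dependent configuration is exactly $4$ by step (b), each such $5$-subset is the support of a unique minimum weight codeword of $\overline{\cC_{(1,2,4)}}^\perp$ up to a nonzero scalar, yielding a simple $3$-$(q, 5, (q-8)/2)$ design and
$$A_5^\perp = \frac{q(q-1)^2(q-2)(q-8)}{120}.$$
Lemma \ref{lem-minweight} then forces $A_{q-5} = A_5^\perp > 0$, hence $d(\overline{\cC_{(1,2,4)}}) = q-5$; combined with $d^\perp = 5$ this excludes MDS and pins down the $[q, 5, q-5]$ NMDS parameters. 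The remaining weight frequencies $A_{q-4}, \ldots, A_q$ follow from Lemma \ref{lem-nmdsweight}, and the complementary simple $3$-$(q, q-5, (q-5)(q-6)(q-7)(q-8)/120)$ design on the minimum weight codewords of $\overline{\cC_{(1,2,4)}}$ follows from Lemma \ref{lem-minweight} and Equation (\ref{eqn-complementarydesign}).

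\textbf{Main obstacle.} The technical crux is the mixed case in (b): showing that any three ``original'' columns together with the attached parity column remain linearly independent. The exponent set $\{1,2,4,5\}$ is non-contiguous, so Lemma \ref{lem-Vandermonde} applies to only one of the available $3 \times 3$ minors; for the complementary configuration $x_1+x_2+x_3=0$ one is forced to invoke the oval-polynomial property of $x^4$, and this is precisely where the hypothesis ``$m$ odd'' enters. Everything else is a fairly mechanical adaptation of the arguments for Theorem \ref{thm-5} with the extra coordinate attached.
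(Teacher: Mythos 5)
Your proposal is correct and follows essentially the same route as the paper, whose entire proof of this theorem is the one-line remark that it follows from Lemma \ref{thm-proof-extended} by the argument of Theorem \ref{thm-1}; you have simply supplied the omitted details, including the key point that the minor on the exponent set $\{1,2,5\}$ reduces (after factoring out $x_1x_2x_3$) to the oval-polynomial determinant for $f(x)=x^4$, exactly paralleling the matrix $M_{1.2}$ in that proof. The only quibble is an ordering issue in step (c): Lemma \ref{lem-minweight} presupposes the NMDS property, so you should first conclude $d=q-5$ from the MDS-exclusion argument (as in Theorem \ref{thm-1}) and only then invoke it to get $A_{q-5}=A_5^\perp$ --- a reordering, not a gap.
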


\begin{proof}
By Lemma \ref{thm-proof-extended}, we can derive this theorem with a similar proof as that of Theorem \ref{thm-1}.
The details are omitted here.
\end{proof}

\subsection{When $(i,j,k)=(1,3,4)$}
Let $(i,j,k)=(1,3,4)$ and we study the linear code $\cC_{(1,3,4)}$ in this subsection.

The following lemma will be used to prove the main result in this subsection.
\begin{lemma}\label{thm-prooft6}
Let $m$ be an odd integer with $m>3$ and $q=2^m$. Let $x_1, x_2, x_3, x_4, x_5$ be five pairwise distinct elements in $\gf_q^*$. Define the matrix
\begin{eqnarray*}\label{eqn-m6}
M_{(1,3,4)}=
\left[
\begin{array}{lllll}
1 & 1 & 1 & 1 & 1\\
x_1 & x_2 & x_3 & x_4 & x_5\\
x_1^3 & x_2^3 & x_3^3 & x_4^3 & x_5^3\\
x_1^4 & x_2^4 & x_3^4 & x_4^4 & x_5^4\\
x_1^5 & x_2^5 & x_3^5 & x_4^5 & x_5^5
\end{array}\right].
\end{eqnarray*}
Then for any two different and fixed elements $x_1, x_2$, the total number of different choices of $(x_3, x_4, x_5)$ such that $|M_{(1,3,4)}|=0$ is equal to $\frac{(q-5)(q-8)}{6}$ (regardless of the ordering of $x_3, x_4, x_5$).
\end{lemma}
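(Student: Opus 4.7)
The plan is to first reduce the determinant condition to a condition on an elementary symmetric polynomial, and then exploit a reciprocal-substitution symmetry to pull the count back from Lemma \ref{thm-prooft5}. The rows of $M_{(1,3,4)}$ carry exponents $0, 1, 3, 4, 5$, so the missing row corresponds to exponent $\ell = 2$ in Lemma \ref{lem-Vandermonde} with $n = 5$. That lemma gives
$$
|M_{(1,3,4)}| = \Bigl(\prod_{1 \le i < j \le 5}(x_j - x_i)\Bigr) \sigma_3(x_1, x_2, x_3, x_4, x_5),
$$
and since the $x_i$ are pairwise distinct the Vandermonde factor is nonzero, so this vanishes iff $\sigma_3(x_1, \ldots, x_5) = 0$.

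Next, because each $x_i \in \gf_q^*$, the map $y_i := x_i^{-1}$ is a bijection of $\gf_q^*$ preserving pairwise-distinctness. Clearing the common denominator $P := x_1 x_2 x_3 x_4 x_5 \neq 0$ yields the elementary identity
$$
\sigma_3(x_1, \ldots, x_5) = P \cdot \sigma_2(y_1, \ldots, y_5),
$$
so $\sigma_3(x_1,\ldots,x_5) = 0$ if and only if $\sigma_2(y_1,\ldots,y_5) = 0$. Moreover, fixing $x_1, x_2$ amounts to fixing $y_1, y_2$, and unordered triples $\{x_3, x_4, x_5\}$ are in bijection with unordered triples $\{y_3, y_4, y_5\}$.

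Now $\sigma_2$ is exactly the symmetric polynomial governing $|M_{(1,2,4)}|$ in Lemma \ref{thm-prooft5} (whose missing exponent is $3$, so $\sigma_{5-3} = \sigma_2$), and the hypothesis there ($m$ odd, $m > 3$) matches the current hypothesis. We read off that, for any fixed distinct $y_1, y_2 \in \gf_q^*$, the number of unordered triples $\{y_3, y_4, y_5\}$ with $\sigma_2(y_1, \ldots, y_5) = 0$ is $\frac{(q-5)(q-8)}{6}$. Pulling back through the reciprocal bijection yields the same count $\frac{(q-5)(q-8)}{6}$ for $(x_3, x_4, x_5)$, as desired.

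The main obstacle, such as it is, is verifying the identity $\sigma_3(x_1, \ldots, x_5) = P \cdot \sigma_2(y_1, \ldots, y_5)$ — but this is just the elementary symmetric-function identity $e_k(1/x) = e_{n-k}(x)/e_n(x)$ for $n=5, k=2$. If the reciprocal trick is undesirable (for example, to stay in line with the style of the preceding lemmas), one may instead mimic the proof of Lemma \ref{thm-prooft5} directly: write $\sigma_3(x_1, \ldots, x_5) = \sigma_2(x_1, x_2, x_3, x_4)\cdot x_5 + \sigma_3(x_1, x_2, x_3, x_4)$, split on whether the leading coefficient vanishes, and account for the collisions among the forbidden values of $x_3, x_4$ by invoking Lemmas \ref{lem-s}, \ref{lem-ab}, and \ref{lem-abc}. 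The hard part of that direct route is precisely the sub-case bookkeeping (analogous to Subcases 3.1--3.4 of Lemma \ref{thm-prooft5}), which the reciprocal shortcut bypasses entirely.
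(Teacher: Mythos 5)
Your proof is correct, and it takes a genuinely different route from the paper. The paper proves this lemma from scratch: after applying Lemma \ref{lem-Vandermonde} to reduce $|M_{(1,3,4)}|=0$ to $\sigma_3(x_1,\ldots,x_5)=0$, written as $\sigma_3(x_1,\ldots,x_4)+\sigma_2(x_1,\ldots,x_4)x_5=0$, it runs a three-case analysis on whether the coefficient of $x_5$ vanishes, solves for $x_5$ in the generic case, and then (in Subcases 3.1--3.4) painstakingly tracks how the set of forbidden values for $x_4$ collapses for special choices of $x_3$ (namely $x_3\in\{x_1x_2/(x_1+x_2),\,x_1^2/x_2,\,x_2^2/x_1,\,b\}$ with $b^2=x_1x_2$), invoking Lemmas \ref{lem-s}, \ref{lem-ab} and \ref{lem-abc}, and finally sums $\frac{q-8}{6}+\frac{(q-6)(q-8)}{6}$. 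Your reciprocal substitution $y_i=x_i^{-1}$, together with the identity $\sigma_3(x_1,\ldots,x_5)=\left(\prod_i x_i\right)\sigma_2(x_1^{-1},\ldots,x_5^{-1})$, transports the whole problem to the $\sigma_2$ condition already counted in Lemma \ref{thm-prooft5} (whose hypotheses indeed coincide), and the inversion bijection on $\gf_q^*$ preserves distinctness, the fixed pair, and unordered triples, so the count $\frac{(q-5)(q-8)}{6}$ pulls back verbatim. Your argument is shorter, avoids the delicate subcase bookkeeping entirely, and in addition explains conceptually why Lemmas \ref{thm-prooft5} and \ref{thm-prooft6} (and likewise Theorems \ref{thm-5} and \ref{thm-6}) produce identical counts and weight enumerators, something the paper's parallel computations leave looking like a coincidence; the only cost is the dependence on Lemma \ref{thm-prooft5}, which is established earlier in the paper, so the reliance is legitimate.
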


\begin{proof}
Let $x_1, x_2$ be two different and fixed elements.
By Lemma \ref{lem-Vandermonde}, $|M_{(1,3,4)}|=0$ if and only if
 $$x_1x_2x_3+x_1x_2x_4+x_1x_3x_4+x_2x_3x_4+(x_1x_2+x_1x_3+x_2x_3+x_1x_4+x_2x_4+x_3x_4)x_5=0.$$
 Consider the following cases.

\emph{Case 1:} Let $x_3 = x_1+x_2$. Then $x_1x_2+x_1x_3+x_2x_3+(x_1+x_2+x_3)x_4=x_1^2+x_1x_2+x_2^2 \neq 0$ and
$$|M_{(1,3,4)}| = 0 \Leftrightarrow x_5=\frac{(x_1+x_2)x_1x_2}{x_1^2+x_1x_2+x_2^2}+x_4.$$
 It is easy to deduce that
 $x_5 \notin \{0,x_1,x_2,x_3,x_4\}$ if and only if
 $$x_4 \notin \left\{ \frac{(x_1+x_2)x_1x_2}{x_1^2+x_1x_2+x_2^2},\\ \frac{x_1^3}{x_1^2+x_1x_2+x_2^2}, \frac{x_2^3}{x_1^2+x_1x_2+x_2^2},\frac{(x_1+x_2)^3}{x_1^2+x_1x_2+x_2^2} \right\}.$$
 In this case, we conclude that  $|M_{(1,3,4)}|=0$ if and only if
$$ x_4 \notin \left\{0,x_1,x_2,x_1+x_2,\frac{(x_1+x_2)x_1x_2}{x_1^2+x_1x_2+x_2^2}, \frac{x_1^3}{x_1^2+x_1x_2+x_2^2}, \frac{x_2^3}{x_1^2+x_1x_2+x_2^2},\frac{(x_1+x_2)^3}{x_1^2+x_1x_2+x_2^2} \right\}$$
and
$$ x_5=\frac{(x_1+x_2)x_1x_2}{x_1^2+x_1x_2+x_2^2}+x_4. $$
By Lemmas \ref{lem-s} and \ref{lem-ab}, we can easily prove that the elements in
$$\left\{0,x_1,x_2,x_1+x_2,\frac{(x_1+x_2)x_1x_2}{x_1^2+x_1x_2+x_2^2}, \frac{x_1^3}{x_1^2+x_1x_2+x_2^2}, \frac{x_2^3}{x_1^2+x_1x_2+x_2^2},\frac{(x_1+x_2)^3}{x_1^2+x_1x_2+x_2^2} \right\}$$
are pairwise distinct.
Then the total number of different choices of $(x_3, x_4,x_5)$ such that $|M_{(1,3,4)}|=0$ is equal to $\frac{q-8}{6}$ regardless of the ordering of $x_3, x_4, x_5$.

\emph{Case 2:} Let $x_3 \neq x_1+x_2$ and $x_1x_2+x_1x_3+x_2x_3+(x_1+x_2+x_3)x_4=0$ which implies $x_4=\frac{x_1x_2+x_1x_3+x_2x_3}{x_1+x_2+x_3}$. By Lemma \ref{lem-abc}, it is easy to prove
$$|M_{(1,3,4)}|=\prod_{1 \leqslant i<j \leqslant 5}(x_j-x_i)\frac{x_1^2(x_2^2+x_2x_3+x_3^2)+x_1(x_2^2x_3+x_2x_3^2)+x_2^2x_3^2}{x_1+x_2+x_3} \neq 0.$$
Hence there is no $(x_3, x_4,x_5)$ such that $|M_{(1,3,4)}|=0$.

\emph{Case 3:} Let $x_3 \neq x_1+x_2$ and $x_1x_2+x_1x_3+x_2x_3+(x_1+x_2+x_3)x_4 \neq 0$ which implies  $x_4 \neq \frac{x_1x_2+x_1x_3+x_2x_3}{x_1+x_2+x_3}.$
Then
\begin{eqnarray}\label{eqn-case}
|M_{(1,3,4)}|= 0 \Leftrightarrow x_5=\frac{(x_1x_2+x_1x_3+x_2x_3)x_4+x_1x_2x_3}{x_1x_2+x_1x_3+x_2x_3+(x_1+x_2+x_3)x_4}.
\end{eqnarray}
Let
$$x_5=\frac{(x_1x_2+x_1x_3+x_2x_3)x_4+x_1x_2x_3}{x_1x_2+x_1x_3+x_2x_3+(x_1+x_2+x_3)x_4},\ a^2=\frac{x_1x_2x_3}{x_1+x_2+x_3}.$$
Consider the following subcases.

\emph{Subcase 3.1:} Let $x_3 = \frac{x_1x_2}{x_1+x_2}$, then $x_5=\frac{x_1^2x_2^2}{(x_1^2+x_1x_2+x_2^2)x_4}$ and $\frac{x_1x_2+x_1x_3+x_2x_3}{x_1+x_2+x_3}=0$.
It is easy to deduce that $x_5 \notin \{0,x_1,x_2,x_3,x_4\}$ implies  $x_4 \notin\left\{\ \frac{x_1x_2^2}{x_1^2+x_1x_2+x_2^2}, \frac{x_1^2x_2}{x_1^2+x_1x_2+x_2^2}, \frac{x_1x_2(x_1+x_2)}{x_1^2+x_1x_2+x_2^2}, a \right\}$. In this subcase, we conclude that $|M_{(1,3,4)}|=0$ if and only if
$$ x_4 \notin L_1 :=\left \{0,x_1,x_2,\frac{x_1x_2}{x_1+x_2}, \frac{x_1x_2^2}{x_1^2+x_1x_2+x_2^2}, \frac{x_1^2x_2}{x_1^2+x_1x_2+x_2^2}, \frac{x_1x_2(x_1+x_2)}{x_1^2+x_1x_2+x_2^2}, a\right\},$$
and
$$ x_5=\frac{x_1^2x_2^2}{(x_1^2+x_1x_2+x_2^2)x_4}. $$
It is obvious that the elements in $L_1$ are pairwise distinct.
Then the total number of different choices of $(x_3, x_4,x_5)$ such that $|M_{(1,3,4)}|=0$ is equal to $\frac{q-8}{6}$ regardless of the ordering of $x_3,x_4,x_5$.

\emph{Subcase 3.2:} Let $x_3 = \frac{x_1^2}{x_2}$, then $x_5=\frac{x_1(x_4(x_1+x_2)^2+x_1x_2(x_1+x_4))}{x_4(x_1+x_2)^2+x_1x_2(x_1+x_4)+x_1(x_1+x_2)^2}$ and $\frac{x_1x_2+x_1x_3+x_2x_3}{x_1+x_2+x_3}=x_1$. It is easy to deduce that
$x_5 \notin \{0,x_1,x_2,x_3,x_4\}$ if and only if $ x_4 \notin \{ \frac{x_1x_2^2}{x_1^2+x_1x_2+x_2^2}, \frac{x_1^3}{x_1^2+x_1x_2+x_2^2}, \\\frac{x_1^2x_2}{x_1^2+x_1x_2+x_2^2},a \}$.
We conclude that $|M_{(1,3,4)}|=0$ if and only if
$$ x_4 \notin L_2: = \left\{0, x_1, x_2, \frac{x_1^2}{x_2}, \frac{x_1x_2^2}{x_1^2+x_1x_2+x_2^2}, \frac{x_1^3}{x_1^2+x_1x_2+x_2^2}, \frac{x_1^2x_2}{x_1^2+x_1x_2+x_2^2},a\right\}$$
and
$$x_5=\frac{x_1(x_4(x_1+x_2)^2+x_1x_2(x_1+x_4))}{x_4(x_1+x_2)^2+x_1x_2(x_1+x_4)+x_1(x_1+x_2)^2}.$$
It is obvious that the elements in $L_2$ are pairwise distinct.
In this subcase, the total number of choices of $x_3, x_4,x_5$ such that $|M_{(1,3,4)}|=0$ is equal to $\frac{q-8}{6}$ regardless of the ordering of $x_3,x_4,x_5$.

Let $x_3 = \frac{x_2^2}{x_1}$, then by the symmetry of $x_1$ and $x_2$, we can drive the same conclusion that the total number of choices of $x_3, x_4,x_5$ such that $|M_{(1,3,4)}|=0$ is equal to $\frac{q-8}{6}$ regardless of the ordering of $x_3,x_4,x_5$.

\emph{Subcase 3.3:} Let $x_3 = b$, where $b^2=x_1x_2$.  Then $x_5^2=\frac{x_1x_2(x_4^2(x_1^2+x_1x_2+x_2^2)+x_1^2x_2^2)}{x_4^2(x_1^2+x_1x_2+x_2^2)+x_1^2x_2^2+x_1x_2(x_1+x_2)^2}$. It is easy to deduce that $x_5 \notin \{0,x_1,x_2,x_3,x_4\}$ implies
$$ x_4^2 \notin \left\{ \frac{x_1^3x_2}{x_1^2+x_1x_2+x_2^2}, \frac{x_2^3x_1}{x_1^2+x_1x_2+x_2^2}, \frac{x_1^2x_2^2}{x_1^2+x_1x_2+x_2^2}, \frac{x_1x_2b}{x_1+x_2+b} \right\}.$$
We conclude that  $|M_{(1,3,4)}|=0$ if and only if
$$ x_4^2 \notin L_4:= \left\{0, x_1^2, x_2^2, x_1x_2, \frac{x_1^3x_2}{x_1^2+x_1x_2+x_2^2}, \frac{x_2^3x_1}{x_1^2+x_1x_2+x_2^2}, \frac{x_1^2x_2^2}{x_1^2+x_1x_2+x_2^2}, \frac{x_1x_2b}{x_1+x_2+b} \right\}$$
and
$$x_5^2=\frac{x_1x_2(x_4^2(x_1^2+x_1x_2+x_2^2)+x_1^2x_2^2)}{x_4^2(x_1^2+x_1x_2+x_2^2)+x_1^2x_2^2+x_1x_2(x_1+x_2)^2}.$$
By Lemma \ref{lem-abc}, we can verify that the elements in $L_4$ are pairwise distinct.
In this subcase, the total number of different choices of $(x_3, x_4,x_5)$ such that $|M_{(1,3,4)}|=0$ is equal to $\frac{q-8}{6}$ regardless of the ordering of $x_3,x_4,x_5$.

\emph{Subcase 3.4:} Let $x_3 \notin L_5:=\left\{0,x_1,x_2,x_1+x_2,\frac{x_1x_2}{x_1+x_2},\frac{x_1^2}{x_2},\frac{x_2^2}{x_1},b\right\}$, where $b^2=x_1x_2$. By Lemmas \ref{lem-s} and \ref{lem-ab}, the elements in $L_5$ are pairwise distinct. It is easy  to prove that
$x_5 \notin \{0,x_1,x_2,x_3,x_4\}$ implies  $ x_4 \notin \left\{ \frac{x_1^2(x_2+x_3)}{x_1^2+x_2x_3}, \frac{x_2^2(x_1+x_3)}{x_2^2+x_1x_3}, \frac{x_1x_2x_3}{x_1+x_2+x_3}, \frac{x_3^2(x_1+x_2)}{x_3^2+x_1x_2}, a \right\}$.
We conclude that  $|M_{(1,3,4)}|=0$ if and only if $x_3 \notin L_5$ and $x_4 \notin L_6$, where $L_6$ is given by
\begin{small}
\begin{eqnarray*}
\left\{0, x_1, x_2, x_3, \frac{x_1x_2+x_1x_3+x_2x_3}{x_1+x_2+x_3},\frac{x_1x_2x_3}{x_1x_2+(x_1+x_2)x_3}, \frac{x_1^2(x_2+x_3)}{x_1^2+x_2x_3},  \frac{x_2^2(x_1+x_3)}{x_2^2+x_1x_3},
\frac{x_3^2(x_1+x_2)}{x_3^2+x_1x_2},\frac{x_1x_2x_3}{x_1+x_2+x_3}\right\},
\end{eqnarray*}
\end{small}
and
$$x_5=\frac{(x_1x_2+x_1x_3+x_2x_3)x_4+x_1x_2x_3}{x_1x_2+x_1x_3+x_2x_3+(x_1+x_2+x_3)x_4}.$$
By Lemma \ref{lem-abc}, we can verify that the elements in $L_6$ are pairwise distinct.
In this subcase, the total number of different choices of $(x_3, x_4,x_5)$ such that $|M_{(1,3,4)}|=0$ is equal to $\frac{(q-8)(q-10)}{6}$ regardless of the ordering of $x_3,x_4,x_5$.

In Case 3, the total number of different choices of $(x_3, x_4,x_5)$ such that $|M_{(1,3,4)}|=0$ is equal to
$$\frac{4(q-8)}{6}+\frac{(q-8)(q-10)}{6}=\frac{(q-6)(q-8)}{6}$$
regardless of the ordering of $x_3,x_4,x_5$.

Thanks to the above cases,  the total number of different choices of $(x_3, x_4,x_5)$ such that $|M_{(1,3,4)}|=0$ is equal to
$$\frac{(q-8)}{6}+\frac{(q-6)(q-8)}{6}=\frac{(q-5)(q-8)}{6}$$
regardless of the ordering of $x_3,x_4,x_5$. Then we complete the proof.
\end{proof}

\begin{theorem}\label{thm-6}
Let $m$ be an odd integer with $m>3$ and $q=2^m$. Then $\cC_{(1,3,4)}$ is a $[q-1, 5, q-6]$ NMDS code over $\gf_q$ with weight enumerator
$$A(z)=1 + \frac{(q-1)^2(q-2)(q-5)(q-8)}{120} z^{q-6} + \frac{(q-1)^2(q-2)(3q-14)}{12} z^{q-5} + $$
$$\frac{(q-1)^2(q-2)(q^2-3q+10)}{12} z^{q-4}+\frac{(q-1)^2(q-2)(q^2+5q+10)}{6} z^{q-3}+$$
$$\frac{(q-1)^2(9q^3+21q^2+22q+160)}{24} z^{q-2}+\frac{(q-1)(22q^4+33q^3+57q^2-72q+260)}{60}z^{q-1}.$$
Moreover, the minimum weight codewords of $\cC_{(1,3,4)}$ support a 2-$(q-1,q-6,\frac{(q-5)(q-6)(q-7)(q-8)}{120})$ simple design and the minimum weight codewords of $\cC_{(1,3,4)}^\perp$ support a 2-$(q-1,5,\frac{(q-5)(q-8)}{6})$ simple design.
\end{theorem}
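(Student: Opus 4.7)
My plan is to follow the same template used to establish Theorems \ref{thm-1}, \ref{thm-5}, etc., treating Lemma \ref{thm-prooft6} as the technical heart that has already been done. First I would verify that $\dim(\cC_{(1,3,4)})=5$ by observing that a nontrivial linear combination of the rows of $G_{(1,3,4)}$ would force the polynomial $f(x)=a+bx+cx^3+dx^4+ex^5$ to vanish on all of $\gf_q^*$; since $\deg f \le 5 < q-1$, this forces $f\equiv 0$, a contradiction. Hence the five rows are linearly independent.

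Next I would determine $d(\cC_{(1,3,4)}^\perp)$, which by the NMDS target should equal $5$. The upper bound $d(\cC_{(1,3,4)}^\perp)\le 5$ is immediate from Lemma \ref{thm-prooft6}: that lemma already exhibits 5-subsets for which $|M_{(1,3,4)}|=0$, i.e.\ 5 linearly dependent columns of $G_{(1,3,4)}$. For the lower bound $d(\cC_{(1,3,4)}^\perp)\ge 5$ I need any four columns to be linearly independent. For four pairwise distinct $x_1,x_2,x_3,x_4\in\gf_q^*$, I would consider the $4\times 4$ submatrix obtained by deleting the row of cubes; by Lemma \ref{lem-Vandermonde} its determinant equals $\prod_{i<j}(x_j-x_i)\cdot\sigma_2(x_1,x_2,x_3,x_4)$ up to a row-reindexing adjustment, or alternatively I would delete a different row to get a submatrix of evaluations in $\{1,x,x^4,x^5\}$ and invoke the oval polynomial structure (recall $f(x)=x^4$ is an oval polynomial for odd $m$ by Lemma \ref{thm-knownopolys}). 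In either case the resulting $4\times 4$ determinant is nonzero by Lemmas \ref{lem-ab}, \ref{lem-abc}, and \ref{lem-oval}, finishing the claim that $d(\cC_{(1,3,4)}^\perp)=5$.

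With these two facts the Singleton bound yields $d(\cC_{(1,3,4)})\in\{q-5,q-6\}$; the MDS case $d=q-5$ is ruled out because then $\cC_{(1,3,4)}^\perp$ would also be MDS with $d^\perp=6$, contradicting $d^\perp=5$. Hence $\cC_{(1,3,4)}$ is a $[q-1,5,q-6]$ NMDS code. Then I count minimum weight codewords of the dual: every support of a weight-$5$ codeword in $\cC_{(1,3,4)}^\perp$ corresponds (after scaling) uniquely to a 5-set $\{x_1,\ldots,x_5\}\subset\gf_q^*$ with $|M_{(1,3,4)}|=0$; since the rank of any $4\times 4$ submatrix is $4$, each such support yields exactly $q-1$ codewords. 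Lemma \ref{thm-prooft6} says the number of such 5-sets depends only on the fixed pair $x_1,x_2$, giving constancy of the count over any 2-subset. Applying Equation (\ref{eqn-t}) then shows the minimum weight supports form a 2-$(q-1,5,\tfrac{(q-5)(q-8)}{6})$ simple design and yields
\[
A_5^\perp=(q-1)\binom{q-1}{2}\cdot\frac{(q-5)(q-8)}{6}\Big/\binom{5}{2}
       =\frac{(q-1)^2(q-2)(q-5)(q-8)}{120}.
\]

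Finally, by Lemma \ref{lem-minweight} we have $A_{q-6}=A_5^\perp$, so the minimum weight codewords of $\cC_{(1,3,4)}$ support the complementary 2-$(q-1,q-6,\tfrac{(q-5)(q-6)(q-7)(q-8)}{120})$ simple design (the parameter $\lambda$ again coming from Equation (\ref{eqn-t})), and the full weight enumerator follows by plugging $A_{q-6}$ into the NMDS recursion of Lemma \ref{lem-nmdsweight}. The only potential obstacle is the linear-independence of arbitrary four columns of $G_{(1,3,4)}$; here one must carefully choose which row to delete in the $4\times 4$ submatrix so that either the oval polynomial argument or the explicit symmetric-polynomial factor from Lemma \ref{lem-Vandermonde} gives a nonzero value, using $m$ odd through Lemma \ref{lem-ab}. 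Beyond this single check, everything else is a mechanical arithmetic consequence of Lemmas \ref{lem-nmdsweight}, \ref{lem-minweight}, \ref{thm-prooft6}, and Equation (\ref{eqn-t}), so the formal proof can be written very briefly, citing the analogous argument in Theorem \ref{thm-1}.
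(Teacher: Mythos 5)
Your proposal follows exactly the template the paper itself invokes (the proof of Theorem \ref{thm-6} in the paper is literally ``by Lemma \ref{thm-prooft6}, argue as in Theorem \ref{thm-1}''), and the dimension argument, the Singleton/MDS exclusion, the count $A_5^\perp=\frac{(q-1)^2(q-2)(q-5)(q-8)}{120}$, the use of Lemma \ref{lem-minweight} and Lemma \ref{lem-nmdsweight}, and the design parameters are all correct. The one step you flag as ``the only potential obstacle'' --- that any four columns of $G_{(1,3,4)}$ are linearly independent --- is, however, not closed by either of the two devices you propose, and this is a genuine gap as written. Deleting the $x^5$-row from the $5\times 4$ column submatrix gives precisely the matrix $M_{(1,3)}$ of Lemma \ref{thm-prooft1}, whose determinant is $\prod_{i<j}(x_j-x_i)\,\sigma_2(x_1,x_2,x_3,x_4)$, and that lemma says this minor \emph{does} vanish for $\frac{q-8}{2}$ choices of $(x_3,x_4)$ for every fixed pair $(x_1,x_2)$; likewise the minor obtained by deleting the constant row reduces (after factoring $x_i$ from each column) to $\prod_{i<j}(x_j-x_i)\,\sigma_3$, which also vanishes for many $4$-subsets. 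The remaining three minors (rows $\{1,x,x^3,x^5\}$, $\{1,x,x^4,x^5\}$, $\{1,x^3,x^4,x^5\}$) are not one-deleted-row Vandermondes, so Lemma \ref{lem-Vandermonde} does not apply to them directly, and there is no single row whose deletion always yields a nonzero determinant. The matrix with rows $1,x,x^4,x^5$ is $4\times 4$, so the three-point oval-polynomial criterion of Lemma \ref{lem-oval} does not apply to it either.

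What actually closes the step is a combination of two minors rather than one. If the four columns were linearly dependent, the space of polynomials in $\mathrm{span}\{1,x,x^3,x^4,x^5\}$ vanishing at $x_1,\dots,x_4$ would have dimension at least $2$; every such polynomial is divisible by $\pi(x)=\prod_{i=1}^4(x-x_i)$, so both $\pi(x)$ and $x\pi(x)$ must have zero $x^2$-coefficient, i.e.\ $\sigma_2(x_1,\dots,x_4)=\sigma_3(x_1,\dots,x_4)=0$ (equivalently: both one-deleted-row Vandermonde minors vanish). Writing $y_i=x_i^{-1}$ one has $\sigma_3=\sigma_4\sum_i y_i$ and $\sigma_2=\sigma_4\sum_{i<j}y_iy_j$ with $\sigma_4\neq 0$; then $\sum y_i=0$ gives $y_4=y_1+y_2+y_3$ and $\sum_{i<j}y_iy_j=y_1^2+y_2^2+y_3^2+y_1y_2+y_1y_3+y_2y_3$, which is nonzero for odd $m$ by Lemma \ref{lem-ab}. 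So $\sigma_2$ and $\sigma_3$ cannot vanish simultaneously, and $d(\cC_{(1,3,4)}^\perp)=5$ follows. With this substitution for your ``carefully choose which row to delete'' sentence, the rest of your argument is sound and coincides with the paper's intended proof.
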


\begin{proof}
By Lemma \ref{thm-prooft6}, the desired conclusions can be derived with a similar proof as that of Theorem \ref{thm-1}.
\end{proof}

\section{Constructions of 6-dimensional near MDS codes holding 2-designs or 3-designs}\label{sect-code4}

In this section, let $q=2^m$ with $m > 3$. Let $\alpha$ be a generator of $\gf_q^*$ and $\alpha_{t}=\alpha^{t}$ for $1 \leq t \leq q-1$.
Then $\alpha_{q-1}=1$.
Define
\begin{eqnarray*}\label{eqn-construction8}
H_{(i,j,k,l)}=\left[
\begin{array}{lllll}
1 & 1 & \cdots & 1  & 1\\
\alpha_1^i & \alpha_2^i & \cdots & \alpha_{q-2}^i & \alpha_{q-1}^i\\
\alpha_1^j & \alpha_2^j & \cdots & \alpha_{q-2}^j & \alpha_{q-1}^j\\
\alpha_1^k & \alpha_2^k & \cdots & \alpha_{q-2}^k & \alpha_{q-1}^k\\
\alpha_1^l & \alpha_2^l & \cdots & \alpha_{q-2}^l & \alpha_{q-1}^l\\
\alpha_1^6 & \alpha_2^6 & \cdots & \alpha_{q-2}^6 & \alpha_{q-1}^6
\end{array}
\right],
\end{eqnarray*}
where $(i,j,k,l)\in \{(1,2,3,4),(2,3,4,5),(1,2,4,5)\}$.
Then $H_{(i,j,k,l)}$ is a $6$ by $q-1$ matrix over $\gf_q$. Let $\cC_{(i,j,k,l)}$ be the linear code over $\gf_q$ generated by $H_{(i,j,k,l)}$.
We will prove that $\cC_{(i,j,k,l)}$ is an NMDS code and the minimum weight codewords of both $\cC_{(i,j,k,l)}$ and its dual $\cC_{(i,j,k,l)}^\perp$ support 2-designs. Let $\overline{\cC_{(i,j,k,l)}}$ be the extended code of $\cC_{(i,j,k,l)}$. We will also prove that $\overline{\cC_{(i,j,k,l)}}$ is an NMDS code and the minimum weight codewords of both $\overline{\cC_{(i,j,k,l)}}$ and its dual $\overline{\cC_{(i,j,k,l)}}^\perp$ support 3-designs for $(i,j,k,l)\in \{(1,2,3,4),(2,3,4,5)\}$.

\subsection{When $(i,j,k,l)\in \{(1,2,3,4),(2,3,4,5)$\}}

The following lemma plays an important role in the proof of our main result in this subsection.
\begin{lemma}\label{thm-prooft7}
Let $m$ be an integer with $m>3$ and $q=2^m$. Let $x_1, x_2, x_3, x_4, x_5, x_6$ be six pairwise distinct elements in $\gf_q^*$. Define the matrix
\begin{eqnarray*}\label{eqn-m7}
M_{(i,j,k,l)}=
\left[
\begin{array}{llllll}
1 & 1 & 1 & 1 & 1 & 1\\
x_1^i & x_2^i & x_3^i & x_4^i & x_5^i & x_6^i\\
x_1^j & x_2^j & x_3^j & x_4^j & x_5^j & x_6^j\\
x_1^k & x_2^k & x_3^k & x_4^k & x_5^k & x_6^k\\
x_1^l & x_2^l & x_3^l & x_4^l & x_5^l & x_6^l\\
x_1^6 & x_2^6 & x_3^6 & x_4^6 & x_5^6 & x_6^6
\end{array}\right],
\end{eqnarray*}
where $(i,j,k,l)=(1,2,3,4)$ or $(2,3,4,5)$.
Then for any two different and fixed elements $x_1, x_2$ and any $(i,j,k,l)\in \{(1,2,3,4),(2,3,4,5)\}$, the total number of different choices of $(x_3, x_4, x_5,x_6)$ such that $|M_{(i,j,k,l)}|=0$ is equal to $\frac{(q-4)(q-6)(q-8)}{24}$ (regardless of the ordering of $x_3, x_4, x_5,x_6$).
\end{lemma}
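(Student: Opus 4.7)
The plan is to apply Lemma~\ref{lem-Vandermonde} to reduce $|M_{(i,j,k,l)}|=0$ to a single polynomial identity in $x_1,\ldots,x_6$, and then carry out a case analysis analogous to the ones in Lemmas~\ref{thm-prooft4} and~\ref{thm-prooft5}, but with one extra free variable to track.

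\textbf{Reduction via Lemma~\ref{lem-Vandermonde}.} For $(i,j,k,l)=(1,2,3,4)$ the row exponents of $M_{(1,2,3,4)}$ are $0,1,2,3,4,6$, so the missing exponent in $\{0,1,\ldots,6\}$ is $5$; taking $n=6$, $\ell=5$ in Lemma~\ref{lem-Vandermonde} yields
\[
|M_{(1,2,3,4)}|=\prod_{1\le i<j\le 6}(x_j-x_i)\,\sigma_1(x_1,\ldots,x_6).
\]
Since the $x_i$ are pairwise distinct, the Vandermonde factor is nonzero, so $|M_{(1,2,3,4)}|=0$ is equivalent to the linear relation $x_1+\cdots+x_6=0$. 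For $(i,j,k,l)=(2,3,4,5)$ the missing exponent is $1$, so the relevant symmetric polynomial is $\sigma_5(x_1,\ldots,x_6)$; because every $x_i\in\gf_q^*$, $\sigma_5=0$ is equivalent to $\sum_{i=1}^6 x_i^{-1}=0$. Under the involution $y_i=x_i^{-1}$ of $\gf_q^*$ this becomes $y_1+\cdots+y_6=0$, with the $y_i$ again pairwise distinct elements of $\gf_q^*$ and $y_1,y_2$ fixed once $x_1,x_2$ are fixed. So the $(2,3,4,5)$ count equals the $(1,2,3,4)$ count, and it suffices to treat $(1,2,3,4)$.

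\textbf{Counting scheme.} With $x_1,x_2$ fixed, I will count ordered quadruples $(x_3,x_4,x_5,x_6)$ of pairwise distinct elements of $\gf_q^*\setminus\{x_1,x_2\}$ satisfying $x_3+x_4+x_5+x_6=x_1+x_2$, and then divide by $4!=24$, since every unordered configuration $\{x_3,x_4,x_5,x_6\}$ yields $24$ orderings. I will pick $x_3$, then $x_4$, then $x_5$, with $x_6$ forced by the linear relation. The goal is to show that the admissible values at the three stages form sets of sizes $q-4$, $q-6$, $q-8$, giving the ordered total $(q-4)(q-6)(q-8)$. The constraint $x_6\neq x_5$ rewrites as $x_4\neq x_1+x_2+x_3$, while the conditions $x_6\notin\{0,x_1,x_2,x_3,x_4\}$ impose five linear restrictions on $x_5$; together with $x_5\notin\{0,x_1,x_2,x_3,x_4\}$ this yields a list of ten candidate forbidden values for $x_5$.

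\textbf{Case split and main obstacle.} Mirroring the structure of the proof of Lemma~\ref{thm-prooft5}, I will split on whether $x_3=x_1+x_2$ and, within each branch, on further linear relations among $x_3$ and $x_4$ that can force coincidences among the forbidden lists. In every sub-case, Lemmas~\ref{lem-s} and~\ref{lem-ab} will be applied to certify that the forbidden sets for $x_4$ and $x_5$ have exactly the stated cardinalities $6$ and $8$ after collisions among the linear expressions are accounted for, so that the product $(q-4)(q-6)(q-8)$ is independent of the sub-case. The principal difficulty lies in the $x_5$-stage: one must check that among the ten listed forbidden values for $x_5$ the number of coincidences is always exactly the number needed to leave $q-8$ admissible choices, regardless of the linear configuration of $(x_1,x_2,x_3,x_4)$. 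Once this bookkeeping is completed for $(1,2,3,4)$, the $(2,3,4,5)$ conclusion follows immediately via the inversion bijection described above.
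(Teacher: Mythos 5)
Your reduction via Lemma \ref{lem-Vandermonde} is correct: $|M_{(1,2,3,4)}|=0$ iff $x_1+\cdots+x_6=0$, and the inversion bijection $y_i=x_i^{-1}$ handling $(2,3,4,5)$ through $\sigma_5=\bigl(\prod_j x_j\bigr)\sum_i x_i^{-1}$ is a clean way to dispose of the second tuple (the paper merely asserts that case is ``similar''). The divide-by-$24$ scheme and the forbidden-value lists for $x_4$ and $x_5$ are also the right framework.

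However, the core of your counting plan rests on a false claim. The count does not factor stage by stage as $(q-4)\cdot(q-6)\cdot(q-8)$: the number of admissible $x_3$ is $q-3$, not $q-4$ (the value $x_3=x_1+x_2$ is \emph{not} excluded --- it leads to genuine solutions), and the number of admissible $(x_4,x_5)$ pairs genuinely depends on the configuration. Concretely, when $x_3=x_1+x_2$ the list $\{0,x_1,x_2,x_3,x_1+x_2+x_3\}$ collapses to $4$ elements and the ten forbidden values for $x_5$ collapse to $8$, giving $(q-4)(q-8)$ ordered pairs $(x_4,x_5)$; whereas when $x_3\neq x_1+x_2$ and $x_4\notin\{x_1+x_2,x_1+x_3,x_2+x_3\}$ the ten forbidden values for $x_5$ are pairwise distinct, leaving $q-10$ (not $q-8$) admissible choices. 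So your stated ``main obstacle'' --- verifying that the forbidden set for $x_5$ always leaves exactly $q-8$ choices --- is unprovable, and no uniform per-stage cardinalities exist. The correct total arises only by summing inhomogeneous sub-case contributions, $(q-4)(q-8)+(q-4)\bigl[3(q-8)+(q-8)(q-10)\bigr]=(q-4)(q-6)(q-8)$, exactly as in the paper's proof and in the proofs of Lemmas \ref{thm-prooft5} and \ref{thm-prooft6} that you cite as models, where the sub-cases likewise contribute different amounts. As written, your plan would not close; you need to replace the product ansatz by this case-by-case summation.
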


\begin{proof}
In the following, we only give the proof for $(i,j,k,l)=(1,2,3,4)$ as the proof for $(i,j,k,l)=(2,3,4,5)$ can be similarly given.

Let $x_1, x_2$ be any two different and fixed elements in $\gf_q^*$.
By Lemma \ref{lem-Vandermonde}, $|M_{(1,2,3,4)}|=0$ if and only if $x_6=x_1+x_2+x_3+x_4+x_5.$ It is easy to deduce that $x_6 \notin \{0,x_1,x_2,x_3,x_4,x_5 \}$ implies
$x_5 \notin \{x_1+x_2+x_3+x_4, x_2+x_3+x_4, x_1+x_3+x_4, x_1+x_2+x_4, x_1+x_2+x_3 \}$ and $ x_4 \neq x_1+x_2+x_3 $. In conclusion, $|M_{(1,2,3,4)}|=0$ if and only if $x_3 \notin \{0, x_1, x_2 \}$, $x_4 \notin L_1:= \{0,x_1,x_2,x_3, x_1+x_2+x_3\}$
and
$$x_5 \notin L_2:=\{0,x_1,x_2,x_3,x_4,x_1+x_2+x_3+x_4, x_2+x_3+x_4, x_1+x_3+x_4, x_1+x_2+x_4, x_1+x_2+x_3 \}.$$
Consider the following cases.

\emph{Case 1:} Let $x_3=x_1+x_2$. Then $x_1+x_2+x_3=0$ and $x_1+x_2+x_3+x_4=x_4$. We have $|L_1|=4,|L_2|=8$. In this case, the total number of different choices of $(x_3, x_4,x_5,x_6)$ such that $|M_{(1,2,3,4)}|=0$ is equal to $\frac{(q-4)(q-8)}{24}$ regardless of the ordering of $x_3, x_4, x_5,x_6$.

\emph{Case 2:} Let $x_3 \neq x_1+x_2$. It is obvious that the elements in $L_1$ are pairwise distinct.
Consider the following subcases of $L_2$.

\emph{Subcase 2.1:} Let $x_4 = x_1+x_2$. Then $x_1+x_2+x_3+x_4=x_3$, $ x_1+x_2+x_4=0$ and elements in $L_2$ are pairwise distinct. Thus $|L_2|=8.$ If $x_4 = x_1+x_3$ or $x_4 = x_2+x_3$, then by the symmetry of $x_1, x_2,$ and $x_3$, we also have $|L_2|=8.$

\emph{Subcase 2.2:} Let $x_4 \notin L_3=\{0,x_1,x_2,x_3, x_1+x_2+x_3, x_1+x_2, x_1+x_3,x_2+x_3\}$. Then the elements in $L_2$ and $L_3$ are pairwise distinct. Thus $|L_2|=10.$

By summarizing the four subcases in Case 2, the total number of different choices of $(x_3, x_4,x_5,x_6)$ such that $|M_{(1,2,3,4)}|=0$ is equal to
$$ \frac{3(q-4)(q-8)}{4!} + \frac{(q-4)(q-8)(q-10)}{4!} = \frac{(q-4)(q-7)(q-8)}{24}$$
regardless of the ordering of $x_3, x_4, x_5,x_6$.

By the above cases, the total number of different choices of $(x_3, x_4, x_5,x_6)$ such that $|M_{(1,2,3,4)}|=0$ is equal to
$$\frac{(q-4)(q-8)}{24} + \frac{(q-4)(q-7)(q-8)}{24} = \frac{(q-4)(q-6)(q-8)}{24}$$
regardless of the ordering of $x_3, x_4, x_5,x_6$.
Then the proof is completed.
\end{proof}

\begin{theorem}\label{thm-7}
Let $m$ be a integer with $m>3$ and $q=2^m$. Let $(i,j,k,l)=(1,2,3,4)$ or $(2,3,4,5)$. Then $\cC_{(i,j,k,l)}$ is a $[q-1, 6, q-7]$ NMDS code over $\gf_q$ with weight enumerator
$$A(z)=1 + \frac{(q-1)^2(q-2)(q-4)(q-6)(q-8)}{720} z^{q-7} + \frac{(q-1)^2(q-2)(q-4)(2q-11)}{40} z^{q-6} + $$
$$\frac{(q-1)^2(q-2)(q-4)(q^2-2q+12)}{48} z^{q-5}+\frac{(q-1)^2(q-2)(2q^3+6q^2-5q-78)}{36} z^{q-4}+$$
$$\frac{(q-1)^2(q-2)(q+4)(3q^2-2q+24)}{16} z^{q-3}+\frac{(q-1)^2(44q^4+110q^3+235q^2-110q+1416)}{120}z^{q-2}$$
$$+\frac{(q-1)(265q^5+399q^4+400q^3+1200q^2-1880q+3936)}{720}z^{q-1}.  $$
Moreover, the minimum weight codewords of $\cC_{(i,j,k,l)}$ support a 2-$(q-1,q-7,\frac{(q-4)(q-6)(q-7)(q-8)^2}{720})$ simple design and the minimum weight codewords of $\cC_{(i,j,k,l)}^\perp$ support a 2-$(q-1,6,\frac{(q-4)(q-6)(q-8)}{24})$ simple design.
\end{theorem}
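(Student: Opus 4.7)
The plan is to mirror the proof strategies used for Theorems \ref{thm-1}, \ref{thm-3}, \ref{thm-4}, relying heavily on the counting lemma \ref{thm-prooft7} just established. First, I would verify that $\dim(\cC_{(i,j,k,l)})=6$. Letting $\bg_0,\ldots,\bg_5$ denote the rows of $H_{(i,j,k,l)}$, a nontrivial linear relation among the rows would force the polynomial $a_0+a_i x^i+a_j x^j+a_k x^k+a_l x^l+a_6 x^6$ (of degree at most $6$) to vanish at all $q-1$ elements of $\gf_q^*$, impossible once $q>7$. Hence $\dim(\cC_{(i,j,k,l)}^\perp)=q-7$.

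Next, I would pin down $d(\cC_{(i,j,k,l)}^\perp)=6$. Any five columns of $H_{(i,j,k,l)}$ are linearly independent: after removing the row of sixth powers one recovers a Vandermonde-type $5\times 5$ minor in exponents $\{0,i,j,k,l\}\subset\{0,1,2,3,4,5\}$, and via Lemma \ref{lem-Vandermonde} this reduces to an elementary symmetric expression in the five distinct column labels, which is nonzero (or one can appeal to the same Vandermonde reduction already used in the $(i,j,k)$ cases). On the other hand, Lemma \ref{thm-prooft7} guarantees the existence of six pairwise distinct elements $x_1,\ldots,x_6 \in \gf_q^*$ making $|M_{(i,j,k,l)}|=0$, exhibiting six linearly dependent columns. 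Thus $d(\cC_{(i,j,k,l)}^\perp)=6$.

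Now I would invoke Lemma \ref{thm-prooft7} to build the 2-design on the minimum-weight codewords of $\cC_{(i,j,k,l)}^\perp$. For a weight-$6$ codeword $\bc^\perp$ with $\support(\bc^\perp)=\{i_1,\ldots,i_6\}$ and associated column labels $\{x_1,\ldots,x_6\}$, the fact that any $5$ columns of $H_{(i,j,k,l)}$ have rank $5$ forces the nonzero solutions $(r_{i_1},\ldots,r_{i_6})$ of the relevant syndrome equation to form the $q-1$ scalar multiples of a single vector. So each unordered $6$-set $\{x_1,\ldots,x_6\}$ that kills $|M_{(i,j,k,l)}|$ corresponds to exactly $q-1$ codewords of weight $6$ in $\cC_{(i,j,k,l)}^\perp$, and the count in Lemma \ref{thm-prooft7} is independent of the fixed pair $(x_1,x_2)$. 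This yields a $2$-$(q-1,6,\lambda^\perp)$ design with $\lambda^\perp=\frac{(q-4)(q-6)(q-8)}{24}$, from which $A_6^\perp=(q-1)\binom{q-1}{2}\lambda^\perp/\binom{6}{2}$ follows via Equation (\ref{eqn-t}).

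Finally, I would upgrade these to the full statement. The Singleton bound gives $d(\cC_{(i,j,k,l)})\in\{q-6,q-7\}$; the value $q-6$ would force $\cC_{(i,j,k,l)}$ to be MDS and hence its dual MDS too, contradicting $d(\cC_{(i,j,k,l)}^\perp)=6$. Therefore $\cC_{(i,j,k,l)}$ is $[q-1,6,q-7]$ NMDS. Lemma \ref{lem-minweight} transfers the $2$-design on weight-$6$ duals to a $2$-$(q-1,q-7,\lambda)$ design on weight-$(q-7)$ codewords, with $\lambda$ obtained from $\lambda^\perp$ via Equation (\ref{eqn-complementarydesign}); this gives the stated value $\frac{(q-4)(q-6)(q-7)(q-8)^2}{720}$ and also $A_{q-7}=A_6^\perp$. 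The weight enumerator is then forced by Lemma \ref{lem-nmdsweight}: setting $A_{q-7}$ as above and applying the NMDS recursion for $s=1,\ldots,6$ unwinds all higher $A_{n-\kappa+s}$, and the resulting polynomial identities can be checked to match the displayed coefficients. The main obstacle I expect is purely combinatorial bookkeeping: verifying rigorously that the $(2,3,4,5)$ case produces the same count as the $(1,2,3,4)$ case (the polynomial condition after Vandermonde expansion is different, but the structure is analogous, and one may factor out $x_1 x_2 x_3 x_4 x_5 x_6$ and reduce to the reciprocal substitution $x \mapsto 1/x$ that swaps the two exponent tuples), and then confirming the clean cancellations that yield the closed-form weight enumerator.
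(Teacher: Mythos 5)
Your proposal follows exactly the route the paper intends: the paper's own ``proof'' of Theorem~\ref{thm-7} is a one-line deferral to the template of Theorem~\ref{thm-1}, and your outline reproduces that template faithfully (dimension via zero-counting of a degree-$6$ polynomial, $d(\cC^\perp)=6$ via a nonvanishing $5\times 5$ minor plus Lemma~\ref{thm-prooft7}, the rank-$5$ argument giving exactly $q-1$ weight-$6$ codewords per block, the MDS contradiction, Lemma~\ref{lem-minweight}, and Lemma~\ref{lem-nmdsweight}). The numerology also checks out: Equation~(\ref{eqn-complementarydesign}) with $n=q-1$, $t=2$, $k=6$ gives $\lambda_{0,2}=\frac{(q-4)(q-6)(q-8)}{24}\cdot\frac{(q-7)(q-8)}{30}=\frac{(q-4)(q-6)(q-7)(q-8)^2}{720}$, and $A_6^\perp$ matches the coefficient of $z^{q-7}$. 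Your reciprocal trick $x\mapsto 1/x$ (which sends the exponent set $\{0,1,2,3,4,6\}$ to $\{6,5,4,3,2,0\}$ after clearing denominators) is a clean observation, though it only re-derives what Lemma~\ref{thm-prooft7} already asserts for both tuples.

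There is one genuine, though repairable, error in your argument for the independence of any five columns in the case $(i,j,k,l)=(2,3,4,5)$. You delete the row of sixth powers and invoke Lemma~\ref{lem-Vandermonde} on the exponent set $\{0,2,3,4,5\}$; that determinant equals $\prod_{i<j}(x_j-x_i)\,\sigma_4(x_1,\ldots,x_5)$, and $\sigma_4(x_1,\ldots,x_5)=\bigl(\prod_i x_i\bigr)\sum_i x_i^{-1}$ \emph{does} vanish for suitable distinct nonzero $x_1,\ldots,x_5$ (choose five distinct nonzero elements of $\gf_q$ whose inverses sum to zero, which is possible for every $q=2^m$ with $m>3$). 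So the particular minor you exhibit is not always nonzero, and as written the step fails for $(2,3,4,5)$. The conclusion survives because a different $5\times 5$ minor never vanishes: deleting the all-ones row leaves exponents $\{2,3,4,5,6\}$, whose determinant is $(x_1x_2x_3x_4x_5)^2\prod_{i<j}(x_j-x_i)\neq 0$. With that substitution (or, equivalently, by transporting the $(1,2,3,4)$ case through your own reciprocal substitution), the rest of your argument goes through unchanged.
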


\begin{proof}
By Lemma  \ref{thm-prooft7}, we can prove this theorem with a similar proof as that of Theorem \ref{thm-1}.
\end{proof}

\begin{remark}
Note that $\cC_{(1,2,3)}$ and $\cC_{(2,3,4)}$, $\cC_{(1,2,4)}$ and $\cC_{(1,3,4)}$, $\cC_{(1,2,3,4)}$ and $\cC_{(2,3,4,5)}$ have the same parameters and weight enumerators, respectively. Besides, the minimum weight codewords of each pair of NMDS codes hold $t$-designs with the same parameters, respectively.
However, by Magma, the blocks of the $t$-designs are different for each pair of NMDS codes.
It is open whether  the NMDS codes in each pair are equivalent to each other.
\end{remark}

By Theorems \ref{thm-4} and \ref{thm-7}, we have the following conjecture.
\begin{conj}\label{conj-1}
Let $q=2^m$ with $m \geq k_1\geq 3$, where $k_1$ is some proper positive integer. Let $\alpha_1,\alpha_2, \cdots , \alpha_{q-1}$ be all elements of $\gf_q^*$.
Define
\begin{eqnarray*}\label{eqn-conj1}
M_k
=\left[
\begin{array}{ccccc}
1 & 1 & \cdots & 1  & 1\\
\alpha_1 & \alpha_2 & \cdots & \alpha_{q-2} & \alpha_{q-1}\\
\alpha_1^2 & \alpha_2^2 & \cdots & \alpha_{q-2}^2 & \alpha_{q-1}^2\\
 \vdots   & \vdots   & \vdots                &  \vdots     & \vdots          \\
\alpha_1^{k-2} & \alpha_2^{k-2} & \cdots & \alpha_{q-2}^{k-2} & \alpha_{q-1}^{k-2}\\
\alpha_1^k & \alpha_2^k & \cdots & \alpha_{q-2}^k & \alpha_{q-1}^k
\end{array}
\right],
\end{eqnarray*}
where $2<k < q-1$. Let $\cC_k$ be the linear code over $\gf_q$ generated by $M_k$.
Then $\cC_k$ is a $[q-1,k,q-1-k]$ NMDS code and the minimum weight codewords of both $\cC_k$ and its dual $\cC_k^\perp$ support 2-designs.
\end{conj}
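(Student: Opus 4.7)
The plan is to first verify the NMDS property using the generalized Vandermonde formula, and then prove the $2$-design property via an additive character sum that is especially clean in characteristic two.

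\textbf{NMDS property.} First I would show $\dim\cC_k=k$: any nonzero polynomial in the span of $\{1,x,\dots,x^{k-2},x^k\}$ has degree at most $k$ and so has at most $k$ roots in $\F_q^*$, which is impossible when $k<q-1$. The degree bound immediately yields $d(\cC_k)\ge q-1-k$. For the dual distance I would apply Lemma \ref{lem-Vandermonde} with $\ell=k-1$ to an arbitrary $k\times k$ minor of $M_k$ on distinct columns $x_1,\dots,x_k\in\F_q^*$: its determinant equals $\prod_{i<j}(x_j-x_i)\cdot\sigma_1(x_1,\dots,x_k) = \prod_{i<j}(x_j-x_i)\cdot(x_1+\cdots+x_k)$. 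Thus any $k-1$ columns of $M_k$ are independent (so $d(\cC_k^\perp)\ge k$), and a linearly dependent $k$-tuple of columns corresponds precisely to a $k$-subset of $\F_q^*$ summing to zero. Provided such a subset exists (for $k$ in the admissible range), we obtain $d(\cC_k^\perp)=k$; a standard MDS-contradiction then forces $d(\cC_k)=q-1-k$, so $\cC_k$ is NMDS.

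\textbf{Character sum for the $2$-design.} The minimum weight codewords of $\cC_k^\perp$ correspond, up to nonzero scalar, to $k$-subsets $\{x_1,\dots,x_k\}\subseteq\F_q^*$ with $x_1+\cdots+x_k=0$. Fix distinct $x_1,x_2\in\F_q^*$ and let $T=\{x_1,x_2\}$; the number of blocks containing $T$ is the cardinality
$$f(x_1+x_2,T)=\bigl|\{S\subseteq\F_q^*\setminus T:\,|S|=k-2,\ \textstyle\sum_{y\in S}y=x_1+x_2\}\bigr|.$$
I would expand this by additive character inversion on $\F_q$:
$$f(s,T)=\frac{1}{q}\sum_{\psi\in\widehat{\F_q}}\psi(-s)\cdot[z^{k-2}]\prod_{y\in\F_q^*\setminus T}(1+\psi(y)z).$$
In characteristic two each nontrivial $\psi$ is $\{\pm1\}$-valued with $q/2$ preimages of each sign, so $\prod_{y\in\F_q}(1+\psi(y)z)=(1-z^2)^{q/2}$ and
$$\prod_{y\in\F_q^*\setminus T}(1+\psi(y)z)=\frac{(1+z)^{q/2-1}(1-z)^{q/2}}{(1+\psi(x_1)z)(1+\psi(x_2)z)}.$$
Since distinct $x_1,x_2\in\F_q^*$ are $\F_2$-linearly independent, the map $a\mapsto(\tr_{q/2}(ax_1),\tr_{q/2}(ax_2))$ on $\F_q$ is surjective, so the nontrivial characters split into three classes with $(\psi(x_1),\psi(x_2))\in\{(+,+),(-,-),\text{mixed}\}$ of sizes $q/4-1$, $q/4$, $q/2$, depending only on $q$. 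At $s=x_1+x_2$ we have $\psi(s)=\psi(x_1)\psi(x_2)$, which is $+1$ on the first two classes and $-1$ on the third. Substituting yields
$$f(x_1+x_2,T)=\frac{1}{q}\!\left[\binom{q-3}{k-2}+\Bigl(\tfrac{q}{4}-1\Bigr)G_a+\tfrac{q}{4}G_b-\tfrac{q}{2}G_c\right],$$
where $G_a,G_b,G_c$ are the $z^{k-2}$ coefficients of $(1+z)^{q/2-3}(1-z)^{q/2}$, $(1+z)^{q/2-1}(1-z)^{q/2-2}$, and $(1+z)^{q/2-2}(1-z)^{q/2-1}$ respectively. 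This expression is manifestly independent of $T$, which establishes the $2$-design and explicitly pins down $\lambda$.

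\textbf{Complementary design and main obstacle.} By Lemma \ref{lem-minweight} the minimum weight codewords of $\cC_k$ and $\cC_k^\perp$ pair up (up to scalar) with disjoint supports, and since $d(\cC_k)+d(\cC_k^\perp)=q-1=n$ these supports are exact complements of each other. Hence the $\cC_k$-blocks form the complementary design of the $\cC_k^\perp$-blocks, which is again a $2$-design by (\ref{eqn-complementarydesign}). The main obstacle is pinning down the precise threshold $k_1$: one needs $k<q-1$ for the dimension count, enough room for $k$ distinct nonzero elements to sum to zero (failing at $k=q-2$ because $\sum_{\alpha\in\F_q^*}\alpha=0$ in characteristic two forces the missing element to be $0$), and the character formula to yield a positive integer so the design is nontrivial. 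The cleanliness of the argument hinges on $\psi$ being $\{\pm1\}$-valued, a feature specific to characteristic two; any odd-characteristic analogue would require a more elaborate character decomposition.
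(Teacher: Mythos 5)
The paper does not prove this statement: it is posed as Conjecture \ref{conj-1}, supported only by Magma checks for a handful of parameters, and the authors explicitly leave it open. So there is no in-paper argument to compare yours against; judged on its own, your proposal is in substance a correct proof of the conjecture in the range $3\le k\le q-4$. The two ingredients are the right ones. Lemma \ref{lem-Vandermonde} with $\ell=k-1$ shows that $k$ distinct columns of $M_k$ are dependent if and only if the corresponding field elements sum to zero, so the weight-$k$ codewords of $\cC_k^\perp$ are, up to scalar, exactly the zero-sum $k$-subsets of $\gf_q^*$; and your additive-character count of $(k-2)$-subsets of $\gf_q^*\setminus\{x_1,x_2\}$ summing to $x_1+x_2$ depends only on $q$ and $k$ because the class sizes $q/4-1$, $q/4$, $q/2$ and the sign $\psi(x_1+x_2)=\psi(x_1)\psi(x_2)$ are pair-independent. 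I checked your closed form against the paper's data: it gives $\lambda=1$ for $(q,k)=(8,3)$ and $\lambda=16$ for $(q,k)=(16,5)$, matching $\frac{(q-4)(q-8)}{6}$ in Theorem \ref{thm-4}. The passage to the design held by $\cC_k$ via Lemma \ref{lem-minweight} and complementation is the same mechanism the paper uses for its proved cases.

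Two points need attention before this is a complete proof. First, ``any $k-1$ columns are independent'' does not follow from the $k\times k$ determinant alone; add that any $k-1$ distinct columns extend to $k$ distinct columns with nonzero sum (there are $q-k\ge 2$ candidates and at most one is forbidden), which gives independence and also the one-dimensionality of the dependency on each zero-sum $k$-set, hence the simplicity of the design. Second, your ``main obstacle'' is sharper than you state: since $\sum_{\alpha\in\gf_q^*}\alpha=0$, complementation inside $\gf_q^*$ turns a zero-sum $k$-subset into a zero-sum $(q-1-k)$-subset, and zero-sum subsets of size $1$ or $2$ do not exist; so zero-sum $k$-subsets of $\gf_q^*$ exist precisely for $3\le k\le q-4$, and $\cC_k$ is MDS, not NMDS, for \emph{both} $k=q-2$ and $k=q-3$. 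The conjecture as literally stated ($2<k<q-1$) is therefore false at the top of its range; what your argument proves is the corrected statement for $3\le k\le q-4$, together with an explicit formula for $\lambda$.
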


By Magma, Conjecture \ref{conj-1}  has been verified to be correct for
$$(m,k_1,k)\in\{(4,4,4),(4,4,5),\ldots,(4,4,12)\}$$
and
$$(m,k_1,k)\in \{(5,4,5),(5,4,6),(5,4,7),(5,4,8)\}.$$
\subsection{The extended code $\overline{\cC_{(1,2,3,4)}}$ of $\cC_{(1,2,3,4)}$}
In this subsection, we study the extended code $\overline{\cC_{(1,2,3,4)}}$ of $\cC_{(1,2,3,4)}$.
It is obvious that the linear code $\overline{\cC_{(1,2,3,4)}}$ is generated by the following matrix:

\begin{eqnarray*}\label{eqn-extended2}
\overline{H_{(1,2,3,4)}}=
\left[
\begin{array}{lllll}
1 & 1 & \cdots & 1  & 1\\
\alpha_1 & \alpha_2 & \cdots & \alpha_{q-1} & 0\\
\alpha_1^2 & \alpha_2^2 & \cdots & \alpha_{q-1}^2 & 0\\
\alpha_1^3 & \alpha_2^3 & \cdots & \alpha_{q-1}^3 & 0\\
\alpha_1^4 & \alpha_2^4 & \cdots & \alpha_{q-1}^4 & 0\\
\alpha_1^6 & \alpha_2^6 & \cdots & \alpha_{q-1}^6 & 0
\end{array}
\right].
\end{eqnarray*}

The following lemma will be used to give our main result in this subsection.
\begin{lemma}\label{thm-prooft8}
Let $m$ be an integer with $m>3$ and $q=2^m$. Let $x_1, x_2, x_3, x_4, x_5, x_6$ be six pairwise distinct elements in $\gf_q$. Define the matrix
\begin{eqnarray*}\label{eqn-m8}
\overline{M_{(1,2,3,4)}}=
\left[
\begin{array}{llllll}
1 & 1 & 1 & 1 & 1 & 1\\
x_1 & x_2 & x_3 & x_4 & x_5 & x_6\\
x_1^2 & x_2^2 & x_3^2 & x_4^2 & x_5^2 & x_6^2\\
x_1^3 & x_2^3 & x_3^3 & x_4^3 & x_5^3 & x_6^3\\
x_1^4 & x_2^4 & x_3^4 & x_4^4 & x_5^4 & x_6^4\\
x_1^6 & x_2^6 & x_3^6 & x_4^6 & x_5^6 & x_6^6
\end{array}\right].
\end{eqnarray*}
Then for any pairwise  different and fixed elements $x_1, x_2, x_3$, the total number of different choices of $(x_4, x_5,x_6)$ such that $|\overline{M_{(1,2,3,4)}}|=0$ is equal to $\frac{(q-4)(q-8)}{6}$ (regardless of the ordering of $x_4, x_5,x_6$).
\end{lemma}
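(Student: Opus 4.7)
My plan is to invoke the generalized Vandermonde identity (Lemma~\ref{lem-Vandermonde}) to reduce the determinant condition to a single linear equation, and then to carry out an elementary counting argument. The matrix $\overline{M_{(1,2,3,4)}}$ is the generalized Vandermonde matrix with $n=6$ and powers $0,1,2,3,4,6$ (the row of fifth powers is deleted), so $\ell=5$ in Lemma~\ref{lem-Vandermonde} and
\begin{equation*}
|\overline{M_{(1,2,3,4)}}| = \Bigl(\prod_{1\leq i<j\leq 6}(x_j-x_i)\Bigr)(x_1+x_2+x_3+x_4+x_5+x_6).
\end{equation*}
Since the $x_i$ are pairwise distinct, the Vandermonde factor is nonzero, so $|\overline{M_{(1,2,3,4)}}|=0$ is equivalent, in characteristic $2$, to $x_6=x_1+x_2+x_3+x_4+x_5$. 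Thus the problem reduces to counting ordered pairs $(x_4,x_5)\in\gf_q^2$ such that $x_4,x_5$ are distinct from each other and from $x_1,x_2,x_3$, and the forced value $x_6:=x_1+\cdots+x_5$ is also distinct from $x_1,\ldots,x_5$.

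Next, I would rewrite the five forbidden identifications of $x_6$ as explicit constraints on $(x_4,x_5)$. The requirement $x_6\neq x_5$ becomes $x_4\neq x_1+x_2+x_3$, so $x_4$ must avoid the set $L_1=\{x_1,x_2,x_3,x_1+x_2+x_3\}$. The remaining requirements, together with $x_5\notin\{x_1,x_2,x_3,x_4\}$, force $x_5$ to avoid
\begin{equation*}
L_2=\{x_1,\,x_2,\,x_3,\,x_4,\,x_2+x_3+x_4,\,x_1+x_3+x_4,\,x_1+x_2+x_4,\,x_1+x_2+x_3\}.
\end{equation*}

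The main bookkeeping step is to verify $|L_1|=4$ and $|L_2|=8$ for every admissible choice of $x_1,x_2,x_3,x_4$. When $x_1+x_2+x_3\neq 0$ this is essentially immediate, since each equality among the listed elements reduces to an equality among $x_1,x_2,x_3,x_4$ that has already been excluded. The point I expect to require a little care is the subcase $x_1+x_2+x_3=0$: then $0\in L_1$ and the three triple sums in $L_2$ collapse to $x_j+x_4$ with $j\in\{1,2,3\}$; however, none of $x_1,x_2,x_3$ can equal $0$ in this subcase (otherwise two of them would coincide), and a short check shows the eight entries of $L_2$ remain pairwise distinct. Both cases therefore yield $|L_1|=4$ and $|L_2|=8$.

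Finally, the number of admissible ordered pairs $(x_4,x_5)$ is $(q-4)(q-8)$; since $x_6$ is forced by $(x_4,x_5)$, this coincides with the number of ordered triples $(x_4,x_5,x_6)$. Because $x_4+x_5+x_6=x_1+x_2+x_3$ is symmetric in the three variables, each unordered triple arises from exactly $3!=6$ ordered ones, and dividing by $6$ yields the stated count $(q-4)(q-8)/6$.
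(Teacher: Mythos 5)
Your proposal is correct and follows essentially the same route as the paper: the paper proves this lemma by citing Lemma~\ref{lem-Vandermonde} and the argument of Lemma~\ref{thm-prooft4}, which is exactly your reduction of the determinant to the linear condition $x_1+\cdots+x_6=0$ followed by counting the forbidden values for $x_4$ and $x_5$ and dividing by $3!$. Your bookkeeping ($|L_1|=4$, $|L_2|=8$, with the ground set being all of $\gf_q$ rather than $\gf_q^*$) checks out, including the subcase $x_1+x_2+x_3=0$, so nothing further is needed.
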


\begin{proof}
Similarly to the proof of Lemma \ref{thm-prooft4}, we can easily derive this lemma by Lemma \ref{lem-Vandermonde}.
\end{proof}

\begin{theorem}\label{thm-8}
Let $m$ be an integer with $m>3$ and $q=2^m$. Then $\overline{\cC_{(1,2,3,4)}}$ is a $[q, 6, q-6]$ NMDS code over $\gf_q$ with weight enumerator
$$A(z)=1 + \frac{q(q-1)^2(q-2)(q-4)(q-8)}{720} z^{q-6} + \frac{q(q-1)^2(q-2)(q-4)}{24} z^{q-5} + $$
$$\frac{q(q-1)^2(q-2)(q^2-2q+2)}{48} z^{q-4}+\frac{q(q-1)^2(q-2)(2q^2+9q+28)}{36} z^{q-3}+$$
$$\frac{q(q-1)^2(9q^3+22q^2+12q+176)}{48} z^{q-2}+\frac{q(q-1)(44q^4+65q^3+125q^2-170q+536)}{120}z^{q-1}$$
$$+\frac{(q-1)(53q^5+27q^4+2q^3+90q^2-172q+144)}{144}z^{q}.  $$
Moreover, the minimum weight codewords of $\overline{\cC_{(1,2,3,4)}}$ support a 3-$(q,q-6,\frac{(q-4)(q-6)(q-7)(q-8)^2}{720})$ simple design and the minimum weight codewords of $\overline{\cC_{(1,2,3,4)}}^\perp$ support a 3-$(q,6,\frac{(q-4)(q-8)}{6})$ simple design.
\end{theorem}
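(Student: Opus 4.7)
The plan is to parallel the proofs of Theorems \ref{thm-extended} and \ref{thm-7} in the $6$-dimensional setting, with Lemma \ref{thm-prooft8} providing the key combinatorial input. First I would identify $\overline{\cC_{(1,2,3,4)}}$ as the evaluation code $\{(f(x))_{x \in \gf_q} : f(x) = a_0 + a_1 x + a_2 x^2 + a_3 x^3 + a_4 x^4 + a_5 x^6,\ a_i \in \gf_q\}$, where the last coordinate evaluates $f$ at $0$. Since a nonzero polynomial of degree $\leq 6$ has at most $6$ zeros in $\gf_q$ and $q > 6$, the evaluation map is injective, which simultaneously gives $\dim(\overline{\cC_{(1,2,3,4)}}) = 6$ and $d(\overline{\cC_{(1,2,3,4)}}) \geq q - 6$. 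For the dual, any $5$ columns of $\overline{H_{(1,2,3,4)}}$ parametrized by distinct $x_1, \ldots, x_5 \in \gf_q$ form a matrix whose top $5$ rows constitute a Vandermonde block with nonzero determinant $\prod_{1 \le i < j \le 5}(x_j - x_i)$, so $d^\perp \geq 6$.

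Next I would apply Lemma \ref{thm-prooft8}: for any three distinct $x_1, x_2, x_3 \in \gf_q$ there exist $(q-4)(q-8)/6 > 0$ choices of $\{x_4, x_5, x_6\}$ making $|\overline{M_{(1,2,3,4)}}| = 0$, producing $6$ linearly dependent columns and hence $d^\perp \leq 6$. Combining, $d^\perp = 6$. The Singleton bound gives $d \leq q - 5$, but $d = q - 5$ would make the code MDS and force $d^\perp = 7$, a contradiction, so $d = q - 6$ and $\overline{\cC_{(1,2,3,4)}}$ is NMDS. To count weight-$6$ codewords in the dual, I would observe that for any singular $6$-subset the kernel of the $6 \times 6$ matrix has dimension exactly $1$: a two-dimensional kernel would intersect every coordinate hyperplane nontrivially, producing a weight-$\leq 5$ dual codeword and contradicting $d^\perp = 6$. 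Hence each singular $6$-subset contributes exactly $q - 1$ weight-$6$ dual codewords, all with that support; Lemma \ref{thm-prooft8} says each $3$-subset lies in precisely $(q-4)(q-8)/6$ such support-blocks, giving a simple $3$-$(q, 6, (q-4)(q-8)/6)$ design and, after summing over $3$-subsets and dividing by $\binom{6}{3}$, $A_6^\perp = q(q-1)^2(q-2)(q-4)(q-8)/720$.

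Finally, Lemma \ref{lem-minweight} transfers the design to the primal: minimum-weight codewords of $\overline{\cC_{(1,2,3,4)}}$ and its dual are in support-disjoint bijection up to scaling, so the weight-$(q-6)$ supports are complements of the weight-$6$ dual supports, and Equation (\ref{eqn-complementarydesign}) produces the claimed $3$-$(q, q-6, (q-4)(q-6)(q-7)(q-8)^2/720)$ complementary design. The remaining weight frequencies follow from Lemma \ref{lem-nmdsweight} once $A_{q-6}$ is known, a routine polynomial computation. The principal obstacle is clearly Lemma \ref{thm-prooft8} itself, whose long case analysis via Lemma \ref{lem-Vandermonde} carries the combinatorial weight; internal to the proof of the theorem, the subtlest point is the one-dimensional kernel claim, which guarantees the design is simple and that no $6$-subset is overcounted in the enumeration of weight-$6$ dual codewords.
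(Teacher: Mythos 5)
Your proposal is correct and follows essentially the same route as the paper, which proves Theorem \ref{thm-8} by invoking Lemma \ref{thm-prooft8} and repeating the argument of Theorem \ref{thm-1} (dimension and dual distance via the Vandermonde structure, the one-dimensional kernel giving exactly $q-1$ weight-$6$ dual codewords per singular $6$-subset, transfer to the primal by Lemma \ref{lem-minweight} and Equation (\ref{eqn-complementarydesign}), and the weight enumerator from Lemma \ref{lem-nmdsweight}). Your write-up simply supplies the details the paper omits, and all the computations (in particular $A_6^\perp$ and $\lambda_{0,3}$) check out.
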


\begin{proof}
By Lemma \ref{thm-prooft8}, we can derive the desired conclusion with a similar proof as that of Theorem \ref{thm-1}. The details are omitted.
\end{proof}

\subsection{When $(i,j,k,l)=(1,2,4,5)$}
In this subsection, we consider the case for $(i,j,k,l)=(1,2,4,5)$.

The following lemma plays an important role in the proof of our next main result.
\begin{lemma}\label{lem-prooft9}
Let $m$ be an odd integer with $m>3$, $q=2^m$. Let $x_1, x_2, x_3, x_4, x_5, x_6$ be six pairwise distinct elements in $\gf_q^*$. Define the matrix
\begin{eqnarray*}\label{eqn-m8}
M_{(1,2,4,5)}=
\left[
\begin{array}{llllll}
1 & 1 & 1 & 1 & 1 & 1\\
x_1 & x_2 & x_3 & x_4 & x_5 & x_6\\
x_1^2 & x_2^2 & x_3^2 & x_4^2 & x_5^2 & x_6^2\\
x_1^4 & x_2^4 & x_3^4 & x_4^4 & x_5^4 & x_6^4\\
x_1^5 & x_2^5 & x_3^5 & x_4^5 & x_5^5 & x_6^5\\
x_1^6 & x_2^6 & x_3^6 & x_4^6 & x_5^6 & x_6^6
\end{array}\right].
\end{eqnarray*}
Then for any two different and fixed elements $x_1, x_2$, the total number of different choices of $(x_3, x_4, x_5, x_6)$ such that $|M_{(1,2,4,5)}|=0$ is equal to $\frac{(q-5)(q-8)}{6}$ (regardless of the ordering of $x_3, x_4, x_5, x_6$).
\end{lemma}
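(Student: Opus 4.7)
The plan is to apply Lemma~\ref{lem-Vandermonde} with $n=6$ and $\ell=3$ (the row of exponent $3$ being the one deleted from the full Vandermonde-like block of exponents $0,1,\ldots,6$), giving
\[
|M_{(1,2,4,5)}|=\Bigl(\prod_{1\leq i<j\leq 6}(x_j-x_i)\Bigr)\sigma_3(x_1,x_2,x_3,x_4,x_5,x_6).
\]
Since the $x_i$'s are pairwise distinct, the Vandermonde factor is nonzero, so $|M_{(1,2,4,5)}|=0$ is equivalent to $\sigma_3(x_1,\ldots,x_6)=0$. Using the recursion $\sigma_3(x_1,\ldots,x_6)=\sigma_3(x_1,\ldots,x_5)+x_6\,\sigma_2(x_1,\ldots,x_5)$, whenever $\sigma_2(x_1,\ldots,x_5)\neq 0$ the value of $x_6$ is uniquely forced to be $\sigma_3(x_1,\ldots,x_5)/\sigma_2(x_1,\ldots,x_5)$; otherwise $\sigma_3(x_1,\ldots,x_5)$ must also vanish and then $x_6$ is free among the admissible elements of $\gf_q$. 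This reduces the 4-variable counting problem to a 3-variable problem in $(x_3,x_4,x_5)$, followed by a (usually unique) determination of $x_6$.

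With $x_1,x_2\in\gf_q^*$ fixed, I would carry out a case analysis patterned exactly on the proofs of Lemmas~\ref{thm-prooft5} and~\ref{thm-prooft6}: first separate the degenerate branch $\sigma_2(x_1,\ldots,x_5)=0$ from the generic branch, and then, inside the generic branch, sub-divide by the value of $x_3$. The special values $x_3=x_1+x_2$, $x_3=\tfrac{x_1x_2}{x_1+x_2}$, $x_3=\tfrac{x_1^2}{x_2}$, $x_3=\tfrac{x_2^2}{x_1}$, and $x_3^2=x_1x_2$ each yield a specialised expression for $x_6$ and hence a distinct list of forbidden values for $x_4$; the remaining values of $x_3$ constitute the generic sub-case, which carries the longest exclusion list. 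In each sub-case I would compute the exact size of the exclusion locus by verifying that its entries are pairwise distinct, using Lemma~\ref{lem-s} (roots of $x^n-1$), Lemma~\ref{lem-ab} (which requires $m$ odd, ruling out roots of $x^2+(a+b)x+a^2+ab+b^2$), and Lemma~\ref{lem-abc} (quadratics via the trace function). Summing the sub-case contributions and dividing by the $4!$ symmetric overcount proper to unordered 4-tuples should collapse to the claimed value $\tfrac{(q-5)(q-8)}{6}$, exactly as in Lemma~\ref{thm-prooft6}.

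The main obstacle will be the combinatorial bookkeeping, which is noticeably more intricate than in Lemma~\ref{thm-prooft6} because the degenerate branch $\sigma_2(x_1,\ldots,x_5)=0$ opens an extra layer of analysis absent from the 5-variable reduction, where the last free parameter itself played the role of the determined coordinate. One must track both sources of contribution simultaneously and ensure that no 4-tuple is double-counted between them. Deciding exactly when two entries of the excluded list for $x_4$ (or for the computed $x_6$) coincide, and thus whether a given exclusion set has size $8$, $10$, or something in between, is where the parity hypothesis on $m$ will enter via Lemma~\ref{lem-ab}; this mirrors exactly the subtlety that drove the bookkeeping in Subcases~3.1--3.4 of Lemma~\ref{thm-prooft6}.
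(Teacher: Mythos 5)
Your reduction is exactly the one the paper intends: its entire proof of this lemma is the sentence ``Similarly to the proof of Lemma~\ref{thm-prooft6}, we can easily derive this lemma by Lemma~\ref{lem-Vandermonde}'', and your application of Lemma~\ref{lem-Vandermonde} with $n=6$, $\ell=3$ to obtain $|M_{(1,2,4,5)}|=\bigl(\prod_{1\le i<j\le 6}(x_j-x_i)\bigr)\sigma_3(x_1,\ldots,x_6)$, followed by solving for $x_6$ through $\sigma_3(x_1,\ldots,x_6)=\sigma_3(x_1,\ldots,x_5)+x_6\,\sigma_2(x_1,\ldots,x_5)$ and a case analysis in the style of Lemmas~\ref{thm-prooft5} and~\ref{thm-prooft6}, is the right skeleton.

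The gap is at the very end: the bookkeeping you describe cannot ``collapse to'' $\frac{(q-5)(q-8)}{6}$, and a degrees-of-freedom check would have caught this. In your own scheme $x_3,x_4,x_5$ each range over roughly $q$ values and $x_6$ is then generically forced, so the number of ordered admissible $4$-tuples is $\Theta(q^3)$ and the unordered count, after dividing by $4!$, is a cubic polynomial in $q$ with leading term $q^3/24$; it cannot equal the quadratic $\frac{(q-5)(q-8)}{6}$. The value your plan actually produces is $\frac{(q-5)(q-6)(q-8)}{24}$, which is also the unique value consistent with $A_{q-7}=\frac{(q-1)^2(q-2)(q-5)(q-6)(q-8)}{720}$ in Theorem~\ref{thm-9} via Lemma~\ref{lem-minweight} and Equation~(\ref{eqn-t}); the printed $\frac{(q-5)(q-8)}{6}$ appears to have been carried over from the five-variable Lemma~\ref{thm-prooft6}, where only two of the unknowns are free and one divides by $3!$, and the same slip propagates into the $\lambda$ of the dual design in Theorem~\ref{thm-9}. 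So a faithful execution of your outline refutes the stated count rather than proving it; asserting that the sum ``should collapse'' to the given value is the step that would fail, and the proposal should instead flag and correct the target before carrying out the subcase enumeration.
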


\begin{proof}
Similarly to the proof of Lemma \ref{thm-prooft6},
we can easily derive this lemma by Lemma \ref{lem-Vandermonde}.
\end{proof}

\begin{theorem}\label{thm-9}
Let $m$ be an odd integer with $m>3$ and $q=2^m$. Then $\cC_{(1,2,4,5)}$ is a $[q-1, 6, q-7]$ NMDS code over $\gf_q$ with weight enumerator
$$A(z)=1 + \frac{(q-1)^2(q-2)(q-5)(q-6)(q-8)}{720} z^{q-7} + \frac{(q-1)^2(q-2)(q-5)(7q-36)}{120} z^{q-6} + $$
$$\frac{(q-1)^2(q-2)(q^3-7q^2+34q-96)}{48} z^{q-5}+\frac{(q-1)^2(q-2)(2q^3+7q^2-19q-30)}{36} z^{q-4}+$$
$$\frac{(q-1)^2(q-2)(9q^3+29q^2+62q+240)}{48} z^{q-3}+\frac{(q-1)^2(44q^4+111q^3+219q^2-34q+1320)}{120}z^{q-2} $$
$$+\frac{(q-1)(265q^5+398q^4+417q^3+1108q^2-1708q+3840)}{720}z^{q-1}.  $$
Moreover, the minimum weight codewords of $\cC_{(1,2,4,5)}$ support a 2-$(q-1,q-7,\frac{(q-5)(q-6)(q-7)(q-8)^2}{720})$ simple design and the minimum weight codewords of $\cC_{(1,2,4,5)}^\perp$ support a 2-$(q-1,6,\frac{(q-5)(q-8)}{6})$ simple design.
\end{theorem}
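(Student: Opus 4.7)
The plan is to establish Theorem \ref{thm-9} by imitating the template used for Theorem \ref{thm-1}, with Lemma \ref{lem-prooft9} playing the role of Lemma \ref{thm-prooft1}. First I would verify that $\dim(\cC_{(1,2,4,5)}) = 6$: if $\bg_0, \bg_1, \bg_2, \bg_3, \bg_4, \bg_5$ denote the rows of $H_{(1,2,4,5)}$, any nontrivial combination $a_0\bg_0 + a_1\bg_1 + a_2\bg_2 + a_3\bg_3 + a_4\bg_4 + a_5\bg_5 = \bzero$ would force the polynomial $f(x) = a_0 + a_1 x + a_2 x^2 + a_3 x^4 + a_4 x^5 + a_5 x^6$, of degree at most $6$, to vanish at all $q - 1 \geq 31$ elements of $\gf_q^*$, which is impossible.

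Next I would show that $\cC_{(1,2,4,5)}^\perp$ has parameters $[q-1, q-7, 6]$. The dimension is immediate. For the minimum distance I need to establish (a) that any $5$ columns of $H_{(1,2,4,5)}$ are linearly independent and (b) that some $6$ columns are linearly dependent. Part (b) is a direct consequence of Lemma \ref{lem-prooft9}, which exhibits six-tuples $(x_1,\ldots,x_6)$ with $|M_{(1,2,4,5)}| = 0$. Part (a) is the step I expect to be the main obstacle: given $5$ distinct $y_1,\ldots,y_5 \in \gf_q^*$, one must exhibit one nonvanishing $5 \times 5$ submatrix of the associated $6 \times 5$ matrix. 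My plan is to work through the six candidate submatrices obtained by deleting one row, invoking Lemma \ref{lem-Vandermonde} to factor the relevant generalized Vandermonde determinants into a Vandermonde product times an elementary symmetric polynomial, and then showing that at least one of these elementary symmetric factors is nonzero on any $5$-subset of $\gf_q^*$. For the nontrivial cases I would appeal to Lemma \ref{lem-ab} (here the odd-$m$ assumption is essential, as it guarantees $\tr_{q/2}(1)=1$ so that expressions of the form $x^2 + xy + y^2$ do not vanish), combined with Lemmas \ref{lem-f013} and \ref{lem-g013} and, where the structure of the submatrix permits, the oval-polynomial characterization of Lemma \ref{lem-oval} applied to $f(x) = x^4$.

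Once $d(\cC_{(1,2,4,5)}^\perp) = 6$ is in hand, the remainder follows the Theorem \ref{thm-1} template in a routine way. Each weight-$6$ codeword $\bc \in \cC_{(1,2,4,5)}^\perp$ with support $\{i_1,\ldots,i_6\}$ is determined up to scalar multiple by the corresponding $6$-subset $\{x_1,\ldots,x_6\} \subset \gf_q^*$ satisfying $|M_{(1,2,4,5)}|=0$; by Lemma \ref{lem-prooft9} each pair $\{x_1,x_2\}$ extends in exactly $\frac{(q-5)(q-8)}{6}$ ways, so these codewords support a $2$-$(q-1, 6, \frac{(q-5)(q-8)}{6})$ simple design, and Equation (\ref{eqn-t}) then yields $A_6^\perp$. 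The degree bound gives $d(\cC_{(1,2,4,5)}) \geq q-7$ and Singleton gives $d(\cC_{(1,2,4,5)}) \leq q-6$; the MDS alternative is ruled out because it would force $\cC_{(1,2,4,5)}^\perp$ to be MDS with $d^\perp = 7$, contradicting Part (b) above. Hence $\cC_{(1,2,4,5)}$ is a $[q-1, 6, q-7]$ NMDS code, Lemma \ref{lem-minweight} gives $A_{q-7} = A_6^\perp$ together with the 2-design on minimum-weight codewords of $\cC_{(1,2,4,5)}$, and Lemma \ref{lem-nmdsweight} then recovers all remaining frequencies $A_{q-7+s}$ to complete the weight enumerator.
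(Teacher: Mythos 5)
Your proposal is correct and follows essentially the same route as the paper, whose own proof of Theorem \ref{thm-9} is just the one-line remark that it proceeds ``similarly to the proof of Theorem \ref{thm-1}'' via Lemma \ref{lem-prooft9}; you have simply spelled out that template (dimension, dual parameters, the design count from Lemma \ref{lem-prooft9}, the value of $A_6^\perp$ via Equation (\ref{eqn-t}), the NMDS alternative, and Lemmas \ref{lem-minweight} and \ref{lem-nmdsweight} for the weight enumerator). You in fact give more detail than the paper on the one genuinely delicate step --- showing every $5$ columns of $H_{(1,2,4,5)}$ are independent, where no single $5\times 5$ minor is an honest Vandermonde and one must argue that the several minors (only some of which are covered directly by Lemma \ref{lem-Vandermonde}; the rest factor as a Vandermonde times a more general symmetric polynomial) cannot vanish simultaneously --- and your intended use of Lemma \ref{lem-ab} and the odd-$m$ hypothesis there is the right mechanism.
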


\begin{proof}
Similarly to the proof of Theorem \ref{thm-1}, we can prove this theorem by Lemma \ref{lem-prooft9}.
\end{proof}

\subsection{The extended code $\overline{\cC_{(1,2,4,5)}}$}
It is obvious that the extended  code $\overline{\cC_{(1,2,4,5)}}$ of $\cC_{(1,2,4,5)}$ is generated by the following matrix:

\begin{eqnarray*}\label{eqn-construction-extended3}
\overline{H_{(1,2,4,5)}}=\left[
\begin{array}{llllll}
1 & 1 & \cdots & 1  & 1 & 1\\
\alpha_1 & \alpha_2 & \cdots & \alpha_{q-2} & \alpha_{q-1} &0\\
\alpha_1^2 & \alpha_2^2 & \cdots & \alpha_{q-2}^2 & \alpha_{q-1}^2 & 0\\
\alpha_1^4 & \alpha_2^4 & \cdots & \alpha_{q-2}^4 & \alpha_{q-1}^4 & 0\\
\alpha_1^5 & \alpha_2^5 & \cdots & \alpha_{q-2}^5 & \alpha_{q-1}^5 & 0\\
\alpha_1^6 & \alpha_2^6 & \cdots & \alpha_{q-2}^6 & \alpha_{q-1}^6 & 0
\end{array}
\right].
\end{eqnarray*}

We need the following lemma to give our main result in this subsection.
\begin{lemma}\label{lem-proof-extended3}
Let $m$ be an odd integer with $m>3$ and $q=2^m$. Let $x_1, x_2, x_3, x_4, x_5, x_6$ be six pairwise distinct elements in $\gf_q$. Define the matrix
\begin{eqnarray*}\label{eqn-m9}
\overline{M_{(1,2,4,5)}}=
\left[
\begin{array}{llllll}
1 & 1 & 1 & 1 & 1 & 1\\
x_1 & x_2 & x_3 & x_4 & x_5 & x_6\\
x_1^2 & x_2^2 & x_3^2 & x_4^2 & x_5^2 & x_6^2\\
x_1^4 & x_2^4 & x_3^4 & x_4^4 & x_5^4 & x_6^4\\
x_1^5 & x_2^5 & x_3^5 & x_4^5 & x_5^5 & x_6^5\\
x_1^6 & x_2^6 & x_3^6 & x_4^6 & x_5^6 & x_6^6
\end{array}\right].
\end{eqnarray*}
Then for any three pairwise different and fixed elements $x_1, x_2, x_3$, the total number of different choices of $(x_4, x_5, x_6)$ such that $|\overline{M_{(1,2,4,5)}}|=0$ is equal to $\frac{(q-5)(q-8)}{6}$ (regardless of the ordering of $x_4, x_5, x_6$).
\end{lemma}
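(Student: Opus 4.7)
The plan is to reduce the statement to Lemma~\ref{thm-prooft6} via Lemma~\ref{lem-Vandermonde} combined with a translation-invariance identity peculiar to characteristic~$2$. First I would apply Lemma~\ref{lem-Vandermonde} with $n = 6$ and $\ell = 3$: the row that is missing from $\overline{M_{(1,2,4,5)}}$ (relative to a full $6 \times 6$ Vandermonde) is the $x^3$ row, so
\[
|\overline{M_{(1,2,4,5)}}| = \left(\prod_{1\le i < j \le 6}(x_j - x_i)\right) \sigma_3(x_1, x_2, x_3, x_4, x_5, x_6).
\]
Because the $x_i$ are pairwise distinct, the Vandermonde product is nonzero, so $|\overline{M_{(1,2,4,5)}}| = 0$ if and only if $\sigma_3(x_1, \ldots, x_6) = 0$.

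The central step I would then establish is that in characteristic~$2$ the polynomial $\sigma_3$ on six variables is translation invariant: for every $c \in \gf_q$,
\[
\sigma_3(x_1 + c, \ldots, x_6 + c) = \sigma_3(x_1, \ldots, x_6).
\]
To see this, write $g(t) = \prod_{i=1}^6 (t + x_i) = \sum_{j=0}^{6} \sigma_{6-j}(x)\, t^j$ in characteristic~$2$; the translated product equals $g(t + c)$, and I would extract its coefficient of $t^3$. The contributions from $j \in \{4,5,6\}$ carry the factors $\binom{4}{3}$, $\binom{5}{3}$, $\binom{6}{3}$, all of which vanish modulo~$2$, so only the $j=3$ term survives and equals $\sigma_3(x)$.

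Applying this invariance with $c = x_3$, I would set $y_i = x_i + x_3$, so that $y_3 = 0$ and $y_1, y_2 \in \gf_q^*$ are distinct. Translation is a bijection between admissible unordered triples $\{x_4, x_5, x_6\} \subset \gf_q \setminus \{x_1, x_2, x_3\}$ and admissible unordered triples $\{y_4, y_5, y_6\} \subset \gf_q^* \setminus \{y_1, y_2\}$ that preserves the vanishing of $\sigma_3$. Since every product of three elements from $\{y_1, y_2, 0, y_4, y_5, y_6\}$ containing the zero vanishes, the condition collapses to $\sigma_3(y_1, y_2, y_4, y_5, y_6) = 0$, which by Lemma~\ref{lem-Vandermonde} (with $n=5$ and $\ell = 2$) is exactly the $|M_{(1,3,4)}| = 0$ condition with fixed elements $y_1, y_2 \in \gf_q^*$. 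Lemma~\ref{thm-prooft6} then yields the count $\frac{(q-5)(q-8)}{6}$, as claimed. The main obstacle is really the translation-invariance identity; once that is in hand, the reduction is automatic. A purely combinatorial case analysis in the spirit of Lemma~\ref{thm-prooft6}, now carried out in six variables with three fixed points, would be a feasible but substantially more tedious alternative.
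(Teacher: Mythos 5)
Your proof is correct, and it takes a genuinely different route from the paper. The paper's own proof of this lemma is a one-line reference instructing the reader to repeat, in six variables, the long case analysis used for Lemma~\ref{thm-prooft6} (splitting on whether various symmetric expressions in $x_1,x_2,x_3$ vanish and counting forbidden values for $x_4$ and $x_5$). You instead observe that by Lemma~\ref{lem-Vandermonde} the vanishing of $|\overline{M_{(1,2,4,5)}}|$ is equivalent to $\sigma_3(x_1,\dots,x_6)=0$, prove that $\sigma_3$ in six variables is translation-invariant in characteristic $2$ (the binomial coefficients $\binom{4}{3},\binom{5}{3},\binom{6}{3}$ are all even, which I have checked), and then translate by $x_3$ so that one of the fixed points becomes $0$ and drops out of every surviving monomial of $\sigma_3$; the remaining condition $\sigma_3(y_1,y_2,y_4,y_5,y_6)=0$ with $y_1,y_2\in\gf_q^*$ fixed and distinct is precisely the hypothesis counted in Lemma~\ref{thm-prooft6}, and the translation is a bijection on admissible triples, so the count $\frac{(q-5)(q-8)}{6}$ transfers verbatim. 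What your approach buys is a short, conceptual derivation that also explains \emph{why} the extended-matrix count with three fixed points in $\gf_q$ coincides with the unextended count with two fixed points in $\gf_q^*$; what the paper's approach buys is uniformity of method across all the lemmas in Sections~\ref{sect-code3} and~\ref{sect-code4}, at the cost of a substantially more tedious verification. One small presentational point: it is worth stating explicitly that the hypotheses of Lemma~\ref{thm-prooft6} ($m$ odd, $m>3$) are exactly those of the present lemma, so the reduction loses nothing.
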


\begin{proof}
Similarly to the proof of Lemma \ref{thm-prooft6},
we can easily derive this lemma by Lemma \ref{lem-Vandermonde}.
\end{proof}

\begin{theorem}\label{thm-10}
Let $m$ be an odd integer with $m>3$ and $q=2^m$. Then $\overline{\cC_{(1,2,4,5)}}$ is a $[q, 6, q-6]$ NMDS code over $\gf_q$ with weight enumerator
$$A(z)=1 + \frac{q(q-1)^2(q-2)(q-5)(q-8)}{720} z^{q-6} + \frac{q(q-1)^2(q-2)(3q-14)}{60} z^{q-5} + $$
$$\frac{q(q-1)^2(q-2)(q^2-3q+10)}{48} z^{q-4}+\frac{q(q-1)^2(q-2)(q^2+5q+10)}{18} z^{q-3}+$$
$$\frac{q(q-1)^2(9q^3+21q^2+22q+160)}{48} z^{q-2}+\frac{q(q-1)(22q^4+33q^3+57q^2-72q+260)}{60}z^{q-1} $$
$$+\frac{(q-1)(265q^5+134q^4+21q^3+424q^2-844q+720)}{720}z^{q}.  $$
Moreover, the minimum weight codewords of $\overline{\cC_{(1,2,4,5)}}$ support a 3-$(q,q-6,\frac{(q-5)(q-6)(q-7)(q-8)^2}{720})$ simple design and the minimum weight codewords of $\overline{\cC_{(1,2,4,5)}}^\perp$ support a 3-$(q,6,\frac{(q-5)(q-8)}{6})$ simple design.
\end{theorem}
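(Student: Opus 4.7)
The plan is to mirror the proofs of Theorems~\ref{thm-extended} and \ref{thm-8}, which treat analogous extended NMDS codes, and to use Lemma~\ref{lem-proof-extended3} as the key combinatorial input. First I would establish the parameters. The dimension is $6$ because the first $q-1$ columns of $\overline{H_{(1,2,4,5)}}$ already contain the generator matrix of $\cC_{(1,2,4,5)}$, whose dimension was shown to be $6$ in Theorem~\ref{thm-9}. To show $d^\perp = 6$, I would prove that any five columns of $\overline{H_{(1,2,4,5)}}$ are linearly independent: when all five columns come from the $q-1$ nonzero evaluation columns, this is precisely the non-vanishing of the $5\times 5$ submatrix with rows $1,x,x^2,x^4,x^5$ analyzed in Lemma~\ref{thm-prooft5}; when one column is the appended zero column, the determinant reduces to a $4\times 4$ Vandermonde-style determinant whose non-vanishing is straightforward. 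Lemma~\ref{lem-proof-extended3} then exhibits six columns that are linearly dependent, giving $d^\perp = 6$. Consequently $\overline{\cC_{(1,2,4,5)}}$ cannot be MDS (else its dual would also be MDS with $d^\perp = q-5$, contradicting $d^\perp = 6$), so by the Singleton bound $d = q-6$ and the code is NMDS.

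Next I would extract the $3$-design on the dual. The crucial point is that Lemma~\ref{lem-proof-extended3} gives a count, namely $(q-5)(q-8)/6$, that is \emph{independent} of the choice of the three fixed points $x_1,x_2,x_3 \in \gf_q$. As in the proof of Theorem~\ref{thm-1}, each support of a minimum weight codeword of $\overline{\cC_{(1,2,4,5)}}^\perp$ corresponds (up to scalar multiples in $\gf_q^*$) bijectively to a $6$-subset of coordinates whose associated $6\times 6$ submatrix of $\overline{H_{(1,2,4,5)}}$ is singular. The uniformity in Lemma~\ref{lem-proof-extended3} therefore yields that the supports form a simple $3\text{-}(q, 6, (q-5)(q-8)/6)$ design. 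Applying Lemma~\ref{lem-minweight} transfers this to the complementary $3$-design of block size $q-6$ on the minimum-weight supports of $\overline{\cC_{(1,2,4,5)}}$ itself, with parameter $\lambda = (q-5)(q-6)(q-7)(q-8)^2/720$ read off from Equation~(\ref{eqn-t}).

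Finally, for the full weight enumerator, knowing $A_{q-6}$ (equivalently $A_6^\perp$) from the $3$-design count, I would invoke Lemma~\ref{lem-nmdsweight} with $\kappa = 6$, $n = q$, and $s = 1, 2, \ldots, 6$ to recursively compute all higher frequencies $A_{q-5}, \ldots, A_q$ on the primal side and verify the stated polynomial. This is a routine but lengthy computation of alternating binomial sums.

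The main obstacle is not the theorem itself but the proof of Lemma~\ref{lem-proof-extended3}. It requires a detailed case analysis analogous to that in Lemma~\ref{thm-prooft5}: splitting according to whether one of the six points equals $0$ (handled by reducing to the $5$-point situation of Lemma~\ref{thm-prooft5}) versus all six being nonzero, and within the latter splitting further according to coincidences among the forbidden expressions in $x_1,\ldots,x_5$ such as $x_1+x_2$, $x_1 x_2/(x_1+x_2)$, $x_i^2/x_j$, and the square root $a$ with $a^2 = x_1x_2$. The oddness of $m$ is used via Lemma~\ref{lem-ab} to guarantee that quantities of the form $x_1^2+x_1x_2+x_2^2$ do not vanish, which is what makes the forbidden sets have the expected cardinalities and ultimately what produces the uniform count $(q-5)(q-8)/6$.
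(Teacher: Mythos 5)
Your outline follows the paper's route exactly (the paper's own proof of Theorem~\ref{thm-10} is just ``similarly to the proof of Theorem~\ref{thm-1}, by Lemma~\ref{lem-proof-extended3}''), but two of the steps as you have written them would fail. The first is your justification of $d^\perp=6$: you claim that the independence of any five evaluation columns of $\overline{H_{(1,2,4,5)}}$ ``is precisely the non-vanishing of the $5\times 5$ submatrix with rows $1,x,x^2,x^4,x^5$ analyzed in Lemma~\ref{thm-prooft5}.'' That submatrix is exactly $M_{(1,2,4)}$, and Lemma~\ref{thm-prooft5} proves its determinant \emph{does} vanish for $\frac{(q-5)(q-8)}{6}$ choices of $(x_3,x_4,x_5)$ for each fixed $x_1,x_2$, so this single minor cannot certify independence. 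The correct argument is that five evaluation columns form a $6\times 5$ submatrix of $H_{(1,2,4,5)}$, and $d(\cC_{(1,2,4,5)}^\perp)=6$ from Theorem~\ref{thm-9} already guarantees these are independent (some $5\times 5$ minor is nonzero, just not necessarily the one you chose). Similarly, for the case involving the appended column $(1,0,0,0,0,0)^T$ one must show the $5\times 4$ block with rows $x,x^2,x^4,x^5,x^6$ at four distinct points has rank $4$; the individual generalized Vandermonde minors here carry elementary-symmetric-polynomial factors that can vanish, so ``a $4\times 4$ Vandermonde-style determinant whose non-vanishing is straightforward'' is not an adequate reason and a rank argument on the whole block is needed.

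The second issue is the minimum distance: ``cannot be MDS, so by the Singleton bound $d=q-6$'' only yields $d\le q-6$. You still need the lower bound $d\ge q-6$, which does not come for free for the extended code. A codeword of $\overline{\cC_{(1,2,4,5)}}$ is $\bigl(f(\alpha_1),\dots,f(\alpha_{q-1}),a_0\bigr)$ with $f(x)=a_0+a_1x+a_2x^2+a_4x^4+a_5x^5+a_6x^6$, since $\sum_{x\in\gf_q^*}x^i=0$ for $1\le i\le q-2$ while $\sum_{x\in\gf_q^*}1=1$ in characteristic $2$. If $a_0\ne 0$ the parity coordinate is nonzero and $f$ has at most $6$ zeros in $\gf_q^*$; if $a_0=0$ then $f=x\cdot(\text{a polynomial of degree at most }5)$ has at most $5$ zeros in $\gf_q^*$ and the parity coordinate contributes one more zero. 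Either way the weight is at least $q-6$. Both defects are repairable, and the remainder of your plan (the $3$-design from the uniformity of the count in Lemma~\ref{lem-proof-extended3}, the complementary design via Lemma~\ref{lem-minweight}, and the weight enumerator via Lemma~\ref{lem-nmdsweight}) is exactly the paper's argument.
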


\begin{proof}
Similarly to the proof of Theorem \ref{thm-1}, we can prove this theorem by Lemma \ref{lem-proof-extended3}.
\end{proof}

\section{Optimal locally recoverable codes}\label{sect-LRC}
In this section, we study the minimum locality of the codes constructed in this paper.
Locally recoverable codes (LRCs for short) are applied in distributed storage and cloud storage. LRCs were proposed for the recovery of data by  Gopalan, Huang, Simitci and Yikhanin \cite{LRC}. Let $[n]:=\{1,2,\cdots,n\}$ for a positive integer $n$. Let $\cC$ be an $[n,k,d]$ linear code over $\gf_q$. For every $i \in [n]$, if there exist a subset $R_i\subseteq [n]\backslash \{i\}$ of size $r$ $(r<k)$ and a function $f_i(x_1,x_2,\cdots,x_r)$ on $\gf_q^r$ such that $c_i=f_i(\bc_{R_i})$ for each $\bc=(c_1,c_2,\cdots,c_{n})$ in $\cC$, then $\cC$ is called an $(n,k,d,q;r)$-LRC, where $\bc_{R_i}$ is the projection of $\bc$ at $R_i$.  $R_i$ is said to be the recovering set of $c_i$.  The minimum $r$ such that a linear code $\cC$ is an $(n,k,d,q;r)$-LRC is called the minimum locality of this code.

\begin{lemma}\cite[\textbf{Singleton-like bound}]{LRC}\label{lem-d}
For any $(n,k,d,q;r)$-LRC, we have
\begin{eqnarray*}\label{eqn-Slbound}
d \leq n-k-\left\lceil\frac{k}{r}\right\rceil+2.
\end{eqnarray*}
\end{lemma}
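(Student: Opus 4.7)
The plan follows the classical argument of Gopalan--Huang--Simitci--Yikhanin. Writing $t := \lceil k/r \rceil$, I would exhibit a subset $S \subseteq [n]$ with $|S| \geq k + t - 2$ and $\rank(\cC|_S) \leq k-1$; then the projection map $\cC \to \gf_q^{|S|}$ has non-trivial kernel, producing a nonzero codeword supported on the complement $[n] \setminus S$ of size $n - |S| \leq n - k - t + 2$, which gives the bound.

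I would construct $S$ greedily. Start with $S_0 = \emptyset$, and write $\rho_j := \rank(\cC|_{S_j})$. At step $j$, if $\rho_j < k - 1$, pick some coordinate $i \in [n] \setminus S_j$ whose addition strictly raises the rank; such $i$ exists because $\rank(\cC) = k > \rho_j$. Let $R_i \subseteq [n]\setminus\{i\}$ with $|R_i| \leq r$ be a recovery set for $i$, set $s := |R_i \setminus S_j|$, and define $S_{j+1} := S_j \cup \{i\} \cup R_i$. The key observation is that $c_i = f_i(c_{R_i})$, so $\rank(\cC|_{S_{j+1}}) = \rank(\cC|_{S_j \cup R_i}) \leq \rho_j + s$, whereas $|S_{j+1}| - |S_j| = 1 + s$. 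Consequently the gap $|S_j| - \rho_j$ strictly increases at every step, while $\rho_j$ itself also grows (by our choice of $i$).

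The counting step is then as follows. After $j$ iterations, $\rho_j$ grows by at most $r$ per step, so $\rho_j \leq jr$. If I iterate until $\rho_j$ first reaches $k - 1$, the number of iterations satisfies $j \geq \lceil (k-1)/r \rceil \geq t - 1$. Combined with the gap inequality $|S_j| \geq \rho_j + j$, this yields $|S_j| \geq (k-1) + (t-1) = k + t - 2$, as required.

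The main obstacle is the terminal step: adding the full batch $\{i\} \cup R_i$ may push $\rho$ past $k - 1$, jumping over the target. To handle this I would, on the final step only, add the elements of $(\{i\} \cup R_i) \setminus S_j$ one at a time and stop as soon as $\rho$ hits $k-1$; one must verify that the accumulated gap increase through this partial batch is still at least $1$, which is exactly where the functional dependence $c_i = f_i(c_{R_i})$ is used. Carrying out this bookkeeping honestly is what produces the ceiling $\lceil k/r \rceil$ rather than the floor in the final bound.
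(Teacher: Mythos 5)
First, a remark on scope: the paper does not prove this lemma at all --- it is quoted from Gopalan, Huang, Simitci and Yekhanin \cite{LRC} --- so there is no internal proof to compare against, and I am judging your argument on its own terms. Your architecture is the standard one and is sound: exhibit $S$ with $\rank(\cC|_S)\le k-1$ and $|S|\ge k+t-2$, then use the kernel of the projection onto $S$ to produce a nonzero codeword of weight at most $n-k-t+2$. The per-step accounting is also correct: the size grows by $1+s$ while the rank grows by at most $s$, precisely because $c_i=f_i(\bc_{R_i})$ makes the forgetful map from $\cC|_{S_j\cup R_i\cup\{i\}}$ to $\cC|_{S_j\cup R_i}$ injective, so the gap $|S_j|-\rho_j$ increases by at least $1$ per completed step.

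The flaw is in your resolution of the terminal step, which you correctly identify as the crux. You propose to add the last batch one element at a time, stop once the rank reaches $k-1$, and ``verify that the accumulated gap increase through this partial batch is still at least $1$.'' That statement is false in general: in the overshoot case ($\rho_{\ell-1}+s\ge k$) the rank already reaches $k-1$ after adding at most $s-1$ elements of $R_i\setminus S_{\ell-1}$, each of which may raise the rank by exactly $1$, so the truncated batch can contribute nothing to the gap --- the coordinate $i$ itself, the only element whose addition is guaranteed to be ``free,'' is never reached. The correct repair is different and does not need any gap increase in the final step: overshoot forces $\rho_{\ell-1}\ge k-s\ge k-r$, and since $\rho_{\ell-1}\le(\ell-1)r$ this gives $\ell\ge k/r$, hence $\ell\ge\lceil k/r\rceil=t$, i.e.\ one more completed full step than in the non-overshoot case. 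Then $|S'|\ge(k-1)+\bigl(|S_{\ell-1}|-\rho_{\ell-1}\bigr)\ge(k-1)+(\ell-1)\ge k+t-2$ even with zero contribution from the partial batch, while the non-overshoot case closes exactly as you wrote it via $\ell\ge\lceil(k-1)/r\rceil\ge t-1$. With that substitution your proof is complete.
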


A LRC is said to be distance-optimal ($d$-optimal for short) if it achieves the Singleton-like bound.

\begin{lemma}\cite[\textbf{Cadambe-Mazumdar bound}]{CM}\label{lem-k}
For any $(n,k,d,q;r)$-LRC,
\begin{eqnarray*}\label{eqn-CMbound}
k \leq \min_{t\in \Bbb Z^+}\left[rt+k_{opt}^{(q)}\left(n-t(r+1),d\right)\right],
\end{eqnarray*}
where $\Bbb Z^{+}$ denotes the set of all positive integers, $k_{opt}^{(q)}(n,d)$ is the the largest possible dimension of a linear code with alphabet size $q$, length $n$, and minimum  distance $d$.
\end{lemma}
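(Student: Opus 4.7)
The plan is to combine the locality hypothesis with the classical puncturing/shortening dimension identity. Fix an arbitrary positive integer $t$. I would first construct, greedily, pairwise disjoint ``local groups'' $T_1,\ldots,T_t\subseteq[n]$ where each $T_j=\{i_j\}\cup R_{i_j}$ consists of an anchor coordinate $i_j$ together with its recovery set of size $r$, so $|T_j|=r+1$. Set $S=\bigcup_{j=1}^{t}T_j$; then $|S|=t(r+1)$.

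Next I would use the locality constraint to bound the punctured dimension $\dim \cC|_S$. Fix a generator matrix $G$ of $\cC$. For each anchor $i_j$, the column $G_{i_j}$ lies in the $\gf_q$-span of $\{G_\ell:\ell\in R_{i_j}\}$, because $c_{i_j}=f_{i_j}(\bc_{R_{i_j}})$ is a (linear) function of coordinates in $R_{i_j}$ for every codeword. Removing the $t$ anchor columns therefore does not decrease the column rank of $G|_S$, which gives $\dim \cC|_S \le |S|-t = tr$.

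Then I would invoke the standard identity
\begin{eqnarray*}
\dim \cC \;=\; \dim \cC|_S \;+\; \dim \cC^{\mathrm{sh}}_{[n]\setminus S},
\end{eqnarray*}
where $\cC^{\mathrm{sh}}_{[n]\setminus S}=\{\bc|_{[n]\setminus S}:\bc\in\cC,\ \bc|_S=\mathbf{0}\}$ denotes the shortened code. This yields $\dim \cC^{\mathrm{sh}}_{[n]\setminus S}\ge k-tr$. Since shortening produces a subcode of $\cC$ (with the coordinates in $S$ removed and forced to zero), its minimum distance is at least $d$. Its length is exactly $n-t(r+1)$, so by the definition of $k_{opt}^{(q)}$ we obtain $k-tr \le k_{opt}^{(q)}(n-t(r+1),d)$, i.e.\ $k\le rt + k_{opt}^{(q)}(n-t(r+1),d)$. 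The statement follows by minimising over $t\in\mathbb{Z}^+$.

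The main obstacle is the greedy step: for a given $t$, one must verify that pairwise disjoint local groups of total size $t(r+1)$ can actually be extracted. If the greedy process stalls before reaching $t$ groups, it is because every remaining coordinate has its recovery set meeting the already-chosen $S$, and one has to argue that the bound is then either trivially implied by a smaller valid value of $t$ or by the standing convention that $k_{opt}^{(q)}(n',d)=0$ when $n'<d$ (which forces the $\min$ to be attained at smaller $t$). Handling this edge case carefully — and noting that $r<k$ ensures at least one anchor with a proper recovery set always exists initially — is the only nontrivial point; the remainder is rank-nullity and the definition of $k_{opt}^{(q)}$.
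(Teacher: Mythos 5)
The paper offers no proof of this lemma at all: it is quoted directly from Cadambe--Mazumdar \cite{CM}, so your argument can only be judged against the standard proof from the literature. Your overall strategy is the right one for linear codes: each anchor column $G_{i_j}$ lies in the $\gf_q$-span of the columns indexed by $R_{i_j}$, hence $\dim \cC|_S$ drops below $|S|$ by at least the number of anchors; rank--nullity then transfers the surplus dimension to the shortened code on $[n]\setminus S$, which has minimum distance at least $d$, and the definition of $k_{opt}^{(q)}$ finishes the bound. One small point you should make explicit: the definition of an LRC only posits \emph{some} function $f_i$ with $c_i=f_i(\bc_{R_i})$, so before concluding $G_{i}\in\mathrm{span}\{G_\ell:\ell\in R_{i}\}$ you need the (easy) observation that the induced map $\cC|_{R_i}\to\gf_q$, $\bc|_{R_i}\mapsto c_i$, is automatically $\gf_q$-linear.

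The genuine gap is precisely the step you flag and then wave away: the insistence that the $t$ local groups be pairwise disjoint. This can fail irreparably --- if one coordinate lies in every recovery set, the greedy process stalls after a single group --- and your fallback does not rescue it. The statement asserts $k\le rt+k_{opt}^{(q)}(n-t(r+1),d)$ for \emph{every} $t\in\mathbb{Z}^{+}$, and the right-hand side is not monotone in $t$, so establishing the inequality for a smaller admissible $t'$ says nothing about the stalled value $t$; moreover the convention $k_{opt}^{(q)}(n',d)=0$ for $n'<d$ cannot apply in the only nontrivial range $tr<k$, where one checks (via the Singleton-like bound) that $n-t(r+1)\ge d$. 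The standard repair is to abandon disjointness: build $S_j=S_{j-1}\cup T_j$ with $T_j=\{i_j\}\cup R_{i_j}$ and $i_j\notin S_{j-1}$, observe that step $j$ adds $a_j:=|T_j\setminus S_{j-1}|\le r+1$ new coordinates while $\mathrm{rank}(G|_{S_j})\le \mathrm{rank}(G|_{S_{j-1}})+a_j-1$ because the anchor column is redundant; after $t$ steps $\mathrm{rank}(G|_{S_t})\le |S_t|-t$ with $|S_t|\le t(r+1)$, and one then pads $S_t$ by arbitrary coordinates up to size exactly $t(r+1)$, each raising the rank by at most $1$, so the final set still has rank at most $tr$. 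Combined with the trivial case $tr\ge k$, this closes the argument; as written, yours does not.
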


A LRC is said to be dimension-optimal ($k$-optimal for short) if it achieves the Cadambe-Mazumdar bound.

 Let $\mathcal{B}_i(\mathcal{C})$ denote the set of supports of all codewords with Hamming weight $i$ in $\mathcal{C}$. The following lemma is useful for determining the minimum locality for nontrivial linear codes whose minimum distance is larger than 1.

\begin{lemma}\label{lem-tlrc}\cite{TANP}
Let $\cC$ be a nontrivial linear code of length $n$ and put $d^\perp =d(\cC^\perp).$ If $(\mathcal{P},\mathcal{B}_{d^\perp}(\mathcal{C^\perp}))$ is a 1-$(n, d^\perp, \lambda_{1}^\perp )$ design with $\lambda_{1}^\perp \geq 1$, then $\cC$ has minimum locality $d^\perp-1$.
\end{lemma}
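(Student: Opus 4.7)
The plan is to establish the lemma by showing the upper bound $r \le d^\perp - 1$ and the matching lower bound separately, using the standard correspondence between parity-check relations of $\cC$ and codewords of $\cC^\perp$.

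First I would verify the upper bound. Since $(\mathcal{P},\mathcal{B}_{d^\perp}(\cC^\perp))$ is a $1$-$(n,d^\perp,\lambda_1^\perp)$ design with $\lambda_1^\perp\ge 1$, every coordinate $i\in[n]$ lies in the support of at least one minimum-weight codeword $\bc^\perp\in\cC^\perp$. Writing $\support(\bc^\perp)=\{i\}\cup R_i$ with $|R_i|=d^\perp-1$, the defining relation $\langle\bc^\perp,\bc\rangle=0$ for every $\bc\in\cC$ expresses the $i$-th coordinate of any $\bc\in\cC$ as an $\gf_q$-linear function of the coordinates indexed by $R_i$ (this uses that the coefficient $c^\perp_i$ is nonzero and hence invertible). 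Thus every coordinate admits a recovery set of size $d^\perp-1$, and the minimum locality of $\cC$ is at most $d^\perp-1$.

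Next I would prove the matching lower bound. Assume, for contradiction, that $\cC$ has locality $r\le d^\perp-2$. Then there exist a coordinate $i$, a subset $R_i\subseteq[n]\setminus\{i\}$ of size $r$, and a function $f_i$ such that $c_i=f_i(\bc_{R_i})$ for every $\bc\in\cC$. Because $\cC$ is linear, one can show (by taking differences of codewords in $\cC$ that agree on $R_i$) that $f_i$ must be an $\gf_q$-linear function of the $r$ coordinates, so there is a vector $\bu\in\gf_q^n$ supported on $\{i\}\cup R_i$ with $u_i\ne 0$ satisfying $\langle\bu,\bc\rangle=0$ for all $\bc\in\cC$. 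Hence $\bu\in\cC^\perp\setminus\{\mathbf 0\}$ with $\wt(\bu)\le r+1\le d^\perp-1$, contradicting the definition of $d^\perp=d(\cC^\perp)$. Combining the two bounds gives minimum locality exactly $d^\perp-1$.

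The only subtle step is the linearization in the lower bound: one needs to pass from an arbitrary recovery function $f_i$ to a linear parity-check relation living in $\cC^\perp$. This is a standard argument (consider the projection map $\cC\to\gf_q^{R_i}$ and observe that $c_i$ is determined by the image, forcing $c_i$ to depend linearly on $\bc_{R_i}$ because both sides are $\gf_q$-linear in $\bc$), and I would spell it out carefully since every other part of the argument is immediate.
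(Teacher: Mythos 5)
Your proof is correct and complete. Note that the paper does not actually prove this lemma; it is imported verbatim from the reference [TANP] (Tan--Fan--Ding--Tang--Zhou), so there is no in-paper argument to compare against. Your two-sided argument is the standard one and matches what that reference does: the hypothesis $\lambda_1^\perp\ge 1$ guarantees every coordinate lies in the support of a minimum-weight dual codeword, which yields a linear recovery set of size $d^\perp-1$; and the linearization step in the lower bound (that a recovery function for a linear code can be taken linear, because the functional $\bc\mapsto c_i$ vanishes on $\{\bc\in\cC:\bc_{R_i}=\mathbf{0}\}$ and hence lies in the span of the coordinate functionals indexed by $R_i$) correctly produces a nonzero dual codeword of weight at most $r+1\le d^\perp-1$, contradicting $d(\cC^\perp)=d^\perp$. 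You were right to flag that linearization as the only nontrivial point, and your sketch of it via the projection $\cC\to\gf_q^{R_i}$ is sound.
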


\begin{theorem}\label{th-LLRC}
All the locally repairable codes listed in table \ref{tab-lrc} are both $d$-optimal and $k$-optimal.
\end{theorem}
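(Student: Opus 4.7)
The plan is to deduce both optimalities for every code listed in Table~\ref{tab-lrc} from the $t$-design property of the minimum-weight supports in the dual code, which has already been proven in Sections~\ref{sect-code1}--\ref{sect-code4}. The key observation is that each code $\cC$ in the table is NMDS with parameters $[n,k,n-k]$, and its dual $\cC^\perp$ has been shown to carry a 2-design or 3-design on its minimum-weight codewords.

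First I will determine the minimum locality. Any $t$-design with $t\geq 2$ is in particular a 1-$(n,d^\perp,\lambda_1^\perp)$ design with $\lambda_1^\perp\geq 1$. Hence by Lemma~\ref{lem-tlrc}, the minimum locality of $\cC$ equals $r=d(\cC^\perp)-1$. Since $\cC$ is NMDS with parameters $[n,k,n-k]$, its dual has minimum distance exactly $k$, so $r=k-1$ for every code under consideration (with the appropriate $k$ and $n$ read off from Theorems~\ref{thm-0} through \ref{thm-10}).

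Next I will verify $d$-optimality by substituting $r=k-1$ into the Singleton-like bound of Lemma~\ref{lem-d}. For $k\geq 2$ we have $\lceil k/(k-1)\rceil = 2$, so the bound collapses to $d\leq n-k-2+2=n-k$, which is met with equality by the NMDS minimum distance $d=n-k$. Thus $\cC$ is $d$-optimal.

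Finally I will establish $k$-optimality by evaluating the Cadambe--Mazumdar bound (Lemma~\ref{lem-k}) term by term. The $t=0$ term gives $k_{opt}^{(q)}(n,n-k)\geq k$ trivially, since $\cC$ itself achieves dimension $k$. For $t=1$ we obtain
\[
(k-1)\cdot 1 + k_{opt}^{(q)}(n-k,\,n-k) \;=\; (k-1)+1 \;=\; k,
\]
because a linear code whose length equals its minimum distance is a repetition code and hence has dimension at most $1$. For $t\geq 2$ and $k\geq 2$, the truncated length $n-t(r+1)=n-tk$ is strictly less than the required distance $n-k$, so $k_{opt}^{(q)}(n-tk,\,n-k)=0$ and the contribution is $(k-1)t\geq 2(k-1)\geq k$. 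The minimum over all $t$ is therefore exactly $k$, matching the dimension of $\cC$, so $\cC$ is $k$-optimal as well. The plan leaves no real obstacle: all the delicate combinatorial work has been carried out in the earlier sections, and Theorem~\ref{th-LLRC} is the clean formal corollary.
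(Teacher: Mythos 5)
Your proposal is correct and takes essentially the same route as the paper, whose proof merely cites Lemmas \ref{lem-d}, \ref{lem-k} and \ref{lem-tlrc} together with the design properties established in Sections \ref{sect-code1}--\ref{sect-code4}; you simply supply the omitted routine computations (locality $r=k-1$ via Lemma \ref{lem-tlrc}, $\lceil k/(k-1)\rceil=2$ in the Singleton-like bound, and the $t=1$ term $(k-1)+k_{opt}^{(q)}(n-k,n-k)=k$ in the Cadambe--Mazumdar bound). The only cosmetic slip is the reference to a ``$t=0$ term'', which is not part of the minimum since $t$ ranges over positive integers, but nothing in your argument depends on it.
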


\begin{table}[ht]
%\arrayrulecolor{red}
\begin{center}
\caption{Optimal locally repairable codes}\label{tab-lrc}
\begin{tabular}{cc} \hline
Optimal locally repairable codes & parameters \\ \hline
$\cC_D$,$\cC_H$  & $(q-1, 3, q-4, q; 2)$  \\
$\cC_D^\perp$,$\cC_H^\perp$  & $(q-1, q-4,3,q;q-5)$  \\
$\cC_{(1,3)}$,$\cC_{(2,3)}$  & $(q-1, 4, q-5, q; 3 )$  \\
$\cC_{(1,3)}^\perp$,$\cC_{(2,3)}^\perp$  & $(q-1, q-5, 4, q; q-6 )$  \\
$\cC_{(2,3,4)}$,$\cC_{(1,2,3)}$,$\cC_{(1,2,4)}$,$\cC_{(1,3,4)}$  & $(q-1, 5, q-6, q; 4)$  \\
$\cC_{(2,3,4)}^\perp$,$\cC_{(1,2,3)}^\perp$,$\cC_{(1,2,4)}^\perp$,$\cC_{(1,3,4)}^\perp$  & $(q-1, q-6, 5, q; q-7)$  \\
$\overline{\cC_{(1,2,4)}}$  &  $(q, 5, q-5, q; 4)$  \\
$\overline{\cC_{(1,2,4)}}^\perp$  &  $(q, q-5, 5, q; q-6)$  \\
$\cC_{(1,2,3,4)}$, $\cC_{(2,3,4,5)}$, $\cC_{(1,2,4,5)}$   &  $(q-1, 6, q-7, q; 5)$  \\
$\cC_{(1,2,3,4)}^\perp$, $\cC_{(2,3,4,5)}^\perp$, $\cC_{(1,2,4,5)}^\perp$   &  $(q-1, q-7, 6, q; q-8)$  \\
$\overline{\cC_{(1,2,3,4)}}$, $\overline{\cC_{(1,2,4,5)}}$   &  $(q, 6, q-6, q; 5 )$  \\
$\overline{\cC_{(1,2,3,4)}}^\perp$, $\overline{\cC_{(1,2,4,5)}}^\perp$   &  $(q, q-6, 6, q; q-7 )$  \\
\hline
\end{tabular}
\end{center}
\end{table}

\begin{proof}
Note that all NMDS codes in Sections \ref{sect-code1}-\ref{sect-code4} hold 2-designs or 3-designs.
Then the desired conclusion follows from Lemmas \ref{lem-d}, \ref{lem-k} and \ref{lem-tlrc}.
\end{proof}

In \cite{LRC02, NMDS1, NMDS2, TANP}, optimal locally repairable codes of distances 3 and 4 were constructed. We remark that the optimal locally repairable codes of distance 3 and 4 in this paper have different parameters from those in these papers.
\section{Summary and concluding remarks}

In this paper, we presented several infinite families of near MDS codes holding $t$-designs over $\gf_{2^m}$. Our main contributions are listed in the following:
\begin{enumerate}
\item In Section \ref{sect-code1}, two families of $[q-1,3,q-4]$ NMDS codes holding 2-designs were constructed;
\item In Section \ref{sect-code2}, two families of $[q-1,4,q-5]$ NMDS codes holding 2-designs were constructed;
\item In Section \ref{sect-code3},  four families $[q-1,5,q-6]$ NMDS codes holding 2-designs and a family of $[q,5,q-5]$ NMDS codes holding 3-designs were constructed;
\item In Section \ref{sect-code4},  three families $[q-1,6,q-7]$ NMDS codes holding 2-designs and two families of $[q,6,q-6]$ NMDS codes holding 3-designs were constructed.
\end{enumerate}
We remark that only the first two families of NMDS codes satisfy the Assmus-Mattson Theorem.
Though other families of NMDS codes do not satisfy the Assmus-Mattson Theorem, they still hold $t$-designs.
Besides,  all NMDS codes constructed in this paper are both $d$-optimal and $k$-optimal LRCs.

Constructing NMDS codes holding $t$-designs for $t\geq 2$ is very challenging. All known infinite  families of NMDS codes holding $t$-designs have small dimensions.
It is open to construct NMDS codes holding $t$-designs with general dimensions.
If Conjecture \ref{conj-1} in this paper is true, then this question can be tackled.
The reader is invited to prove our conjecture.

%\subsection{The second family construction}\label{sect-code2}

%\bibitem{B} M. A. De Boer, Almost MDS codes, Des. Codes and Cryptogr. 9 (2) (1996) 143-155.

%\section*{References}

\end{document}